\documentclass[12pt]{article}%
\usepackage{enumitem}
\usepackage{adjustbox}
\usepackage{bm}
\usepackage{amsfonts}
\usepackage[colorlinks,citecolor=blue,urlcolor=blue,hyperfootnotes=false]{hyperref}
\usepackage{amssymb}
\usepackage[bottom]{footmisc}
\usepackage[hyperref]{ntheorem}
\usepackage{graphicx}
\usepackage{float}
\usepackage[figuresright]{rotating}
\usepackage{natbib}  
\usepackage[doublespacing,nodisplayskipstretch]{setspace}
\usepackage[margin=1in, a4paper]{geometry}
\usepackage{booktabs}
\usepackage{threeparttable}
\usepackage{amsmath}
\usepackage{lscape}
\usepackage{scrextend}
\usepackage{relsize}
\usepackage{multicol}
\usepackage{chngcntr}
\usepackage{etoolbox}
\usepackage{amssymb}
\usepackage{tabularx}
\usepackage[english]{babel}
\usepackage{multirow}
\usepackage{booktabs}
\usepackage[labelfont=bf]{caption}
\usepackage{float}
\usepackage{titlesec}
\usepackage{array}
\usepackage{color}
\usepackage{afterpage}
\usepackage{framed}
\usepackage{float}
\usepackage{afterpage}%
\usepackage{appendix}
\usepackage{subcaption}
\usepackage{mathrsfs}
\providecommand{\U}[1]{\protect\rule{.1in}{.1in}}
\allowdisplaybreaks
\renewcommand\arraystretch{1.15}
\theoremstyle{plain}

\newtheorem{theorem}{Theorem}[section]
\newtheorem{Assumption}{Assumption}[section]

\newtheorem{corollary}{Corollary}[section]

\newtheorem{lemma}{Lemma}[section]
\newtheorem{remark}{Remark}[section]

\newenvironment{proof}[1][Proof]{\noindent\textbf{#1.} }{\ \rule{0.5em}{0.5em}}

\numberwithin{equation}{section}
\numberwithin{table}{section}
\numberwithin{figure}{section}

\theorembodyfont{\normalfont}

\deffootnote{1em}{1.6em}{\thefootnotemark \enskip}

\numberwithin{equation}{section}
\numberwithin{Assumption}{section}

\definecolor{lightgray}{gray}{0.9}

\begin{document}

\title{A Projection Approach to Nonparametric Significance and Conditional Independence Testing\footnote{The authors contributed equally to this work and are listed in alphabetical order.}}
%\title{Tailor-made Nonparametric Significance and Conditional Independence Tests\footnote{The authors contributed equally to this work and are listed in alphabetical order.}}
\author{Xiaojun Song\thanks{Corresponding author: Department of Business Statistics and Econometrics, Guanghua School of Management, Peking University. Email: \texttt{sxj@gsm.pku.edu.cn}. This work was supported by the National Natural Science Foundation of China [Grant Numbers 72373007 and 72333001]. The author also gratefully acknowledges the research support from the Center for Statistical Science of Peking University, China, and the Key Laboratory of Mathematical Economics and Quantitative Finance (Peking University) of the Ministry of Education, China.}\\
	\and Jichao Yuan\thanks{Department of Business Statistics and Econometrics, Guanghua School of Management, Peking University. Email: \texttt{2201111030@stu.pku.edu.cn}.}
	}

\maketitle
\begin{abstract}
This paper develops a novel nonparametric significance test based on a tailored nonparametric-type projected weighting function that exhibits appealing theoretical and numerical properties. We derive the asymptotic properties of the proposed test and show that it can detect local alternatives at the parametric rate. Using the nonparametric orthogonal projection, we construct a computationally convenient multiplier bootstrap to obtain critical values from the case-dependent asymptotic null distribution. Compared with the existing literature, our approach overcomes the need for a stronger compact support assumption on the density of covariates arising from random denominators. We also extend the tailor-made projection procedure to test the conditional independence assumption. The simulation experiments further illustrate the advantages of our proposed method in testing significance and conditional independence in finite samples.
\end{abstract}

\newpage

\doublespacing

\section{Introduction}
Testing the significance of a subset of explanatory variables in a regression model is a fundamental problem in statistics and econometrics and is essential for variable selection and model specification. Given the risks of inconsistent estimation and misleading inference associated with misspecified parametric models, the development of tests within a nonparametric framework has attracted considerable attention. Theories have been established for cross-sectional studies based on independent and identically distributed (i.i.d.) data, with early contributions by \cite{lewbel1995consistent}, \cite{fan1996consistent}, \cite{lavergne2000nonparametric}, and \cite{delgado2001significance}, and more recent work such as \cite{zhu2018dimension} and \cite{lundborg2024projected}. Further developments address significance tests for more complex data structures, such as time series settings, categorical regressors, high-dimensional models, spatial point patterns, and network data, with related work including \cite{chen1999consistent}, \cite{li1999consistent}, \cite{racine2006testing}, \cite{zhu2018significance}, \cite{kojaku2018generalised}, and \cite{dvovrak2024nonparametric}. In addition, modern machine learning methods have motivated new significance testing problems; see \cite{horel2020significance}, \cite{wu2021generalized}, and \cite{giesecke2025aico}.

Generally, the literature on nonparametric significance testing can be categorized into two mainstreams: local smoothing-based and global nonsmoothing-based methods. The local smoothing-based approach, exemplified by \cite{fan1996consistent} and \cite{lavergne2000nonparametric}, typically constructs test statistics using kernel-based squared distance measures. While these tests are consistent against omnibus alternatives, they often suffer from the \textquotedblleft curse of dimensionality\textquotedblright{}, where the local power degrades rapidly as the dimension of the covariates increases. Although the test of \cite{zhu2018dimension} addresses the problem via a dimension-reduction method, nonparametric estimation remains necessary, yielding a nonparametric convergence rate and making it difficult to fully resolve the issue. %An alternative approach is the global nonsmoothing-based approach, which originates from the integrated conditional moment (ICM) principle of \cite{bierens1982consistent} and was further developed by \cite{bierens1990consistent}, \cite{bierens1997asymptotic}, \cite{stute1997nonparametric}, and \cite{delgado2001significance}. The key idea is to transform the nonparametric-type conditional moment restrictions under the null into an infinite number of unconditional moment restrictions. These tests typically achieve the fastest possible parametric rate and are less sensitive to bandwidth choices. However, they often face challenges in characterizing the asymptotic null behavior of the empirical process when nuisance functions are estimated, particularly when the nonparametric null model estimator used in significance tests takes a particular form.
An alternative approach is the global nonsmoothing-based approach adopted by \cite{delgado2001significance}, which is based on the integrated conditional moment (ICM) principle of \cite{bierens1982consistent} and \cite{bierens1990consistent}, originally developed for specification tests of parametric regressions. The ICM idea in the context of nonparametric significance testing is to transform the nonparametric-type conditional moment restrictions under the null into an infinite number of unconditional moment restrictions. The tests proposed by \cite{delgado2001significance} achieve the fastest possible parametric rate and are less sensitive to bandwidth choices. However, these tests still face challenges in effectively leveraging the asymptotic null behavior of the $U$-process when nuisance functions of the null model are estimated.  %especially in the particular form of the nonparametric estimator of the null model considered in significance tests. %characterizing

More specifically, the nonparametric estimation of the conditional mean function inevitably entails a random denominator problem, a difficulty that has been emphasized in \cite{lundborg2024projected} as an essential barrier to the broader use of more flexible testing procedures. A standard way to control this problem is to construct density–weighted residuals so that the estimated unconditional moment restrictions can be written as a nondegenerate $U$–process, at the cost of introducing an additional \textquotedblleft nonparametric estimation effect\textquotedblright{}. This additional effect creates a substantial technical hurdle when attempting to obtain critical values via computationally efficient bootstrap schemes. To address this difficulty, previous work, for example, \cite{delgado2001significance}, modifies the multiplier bootstrap version of the $U$-process used in their procedure and adopts more elaborate setups. These approaches typically rely on technically demanding trimming schemes or impose stronger assumptions on the density of the covariates under the null, %$X$, 
for instance, by requiring it to be bounded away from zero. This type of restriction excludes many commonly used distributions, such as the normal and the Student's $t$.

The primary contributions of this paper are to address the issues discussed above by introducing a nonparametric projection and to derive multiplier bootstrap critical values in a computationally efficient and interpretable manner. To be precise, we project the weighting function used in constructing the integrated conditional moments onto the orthocomplement of the space spanned by the density evaluated at all covariates under consideration, including both those already known to be significant and those whose significance is being tested. It is worth noting that, although, to our knowledge, the \textquotedblleft nonparametric estimation effect\textquotedblright{} has not been emphasized in the existing literature, an analogous phenomenon has been recognized earlier in testing parametric models; see \cite{durbin1973distribution}. Moreover, the novel projection approach that we develop to address this effect is motivated by the parametric projection approaches proposed in \cite{sant2019specification}, \cite{yang2024model}, and \cite{song2025unified}. Thanks to the projection, the critical values for our statistics can be simulated via the computationally fast multiplier bootstrap procedure rather than the computationally intensive wild bootstrap, as in much of the nonparametric significance testing literature, for example, \cite{gu2007bootstrap}. % Journal of Nonparametric Statistics
Finally, given that testing the conditional independence assumption is closely related to testing significance in the mean, we also extend the tailor-made projection procedure to test this important assumption. 

The rest of the paper is organized as follows. Section \ref{sec.Test} outlines the testing procedure, including the nonparametric projection-based methodology and the construction of our statistics. The asymptotic properties with some reasonable assumptions of the test under the null, local alternatives, and global alternatives are established in Section \ref{sec.Asy}. We further detail the implementation of the multiplier bootstrap procedure in Section \ref{sec.boot} and extend the above projection test and the multiplier bootstrap procedure to testing the conditional independence assumption in Section \ref{sec.CI}. The finite-sample performance of the proposed test is investigated via a set of Monte Carlo simulations in Section \ref{sec.Simulation}. Finally, Section \ref{sec.Conclusion} concludes the paper. Additional simulation results and detailed proofs of the theoretical results are provided in the online supplement.

\section{Testing procedure}\label{sec.Test}
Let $(S,\mathcal{F},\mathbb{P})$ be the probability space of the random vector $\chi = (Y,W^\top)^\top$, where $Y$ is a scalar response variable and $W=(X^\top,Z^\top)^\top$ is the vector of explanatory variables, with $X$ being $\mathbb{R}^q$-valued and $Z$ being $\mathbb{R}^p$-valued. Henceforth, $A^\top$ denotes the matrix transpose of $A$. We consider testing whether the vector of covariates $Z$ has no additional explanatory power for the conditional mean of the dependent variable $Y$, given that $X$ has a statistically significant effect. The null hypothesis of interest is 
\begin{align}\label{hyp.null}
    H_0:\,\mathbb{E}\left[Y\vert W\right] = \mathbb{E}\left[Y\vert X\right]\quad  a.s.,
\end{align}
and the alternative hypothesis, $H_1$, is the negation of $H_0$. Let $f_X(\cdot)$ denote the probability density function (PDF) of $X$. Using the fact that the PDF $f_X(X)>0$ $a.s.$, we can rewrite $H_0$ in \eqref{hyp.null} as 
\begin{align}\label{hyp.null f}
    H_0:\, f_X(X)\mathbb{E}\left[\epsilon\vert W\right]=0\quad a.s.,
\end{align}
where $\epsilon=Y-\mathbb{E}[Y\vert X]:=Y-m(X)$ is the nonparametric error, with $m(X)$ denoting the unknown conditional mean function under the null. 

It is well established in the testing literature that the conditional moment restriction in \eqref{hyp.null f} is equivalent to a continuum of unconditional moment restrictions. More specifically, following the ICM principle introduced by \cite{bierens1982consistent}, and further developed by \cite{stute1997nonparametric}, \cite{stute1998bootstrap}, \cite{delgado2001significance}, \cite{stute2002model}, and \cite{escanciano2006consistent}, we employ the indicator function $1_w(W):=1(W\leq w)=1(X\leq x)1(Z\leq z)$ with $w=(x^\top,z^\top)^\top\in\mathbb R^{q+p}$ as the weighting scheme to transform the conditional moment restriction to the unconditional version. Consequently, the null hypothesis $H_0$ in \eqref{hyp.null f} can be equivalently characterized by the following infinite number of unconditional moment conditions indexed by $w$:
\begin{align}\label{hyp.null eq}
    H_0:\,\mathbb{E}\left[\epsilon f_X(X)1_w(W)\right]=0\quad \text{for all }w\in\mathbb{R}^{q+p},
\end{align}
which converts the original nonparametric significance testing problem into a global testing problem. Therefore, testing \eqref{hyp.null eq} achieves the fastest possible parametric rate.

Suppose that we have a random sample $\{(Y_i,W_i^\top)^\top\}_{i=1}^n$ of size $n\geq 1$ consisting of independent and identically distributed (i.i.d.) random variables, the natural sample analog for \eqref{hyp.null eq} is given by
\begin{align*}
    \hat T_n(w) =& \frac{1}{n}\sum_{i=1}^n\hat{\epsilon}_i\hat{f}_X(X_i)1_w(W_i)\\
    =&\frac{1}{n(n-1)}\sum_{i=1}^n\sum_{j=1,j\neq i}^n\frac{1}{a^q}K\left(\frac{X_i-X_j}{a}\right)(Y_i-Y_j)1_w(W_i),
\end{align*}
where $\hat{\epsilon}_i=Y_i-\hat{m}(X_i)$ is the nonparametric residual, with the leave-one-out nonparametric estimators for $m(X_i)$ and $f_X(X_i)$ given by
\begin{align*}
    \hat{m}(X_i) = \frac{1}{\hat{f}_X(X_i)}\frac{1}{(n-1)a^q}\sum_{j=1,j\neq i}^nK\left(\frac{X_i-X_j}{a}\right)Y_j
\end{align*}
and
\begin{align*}
    \hat{f}_{X}(X_i) = \frac{1}{(n-1)a^q}\sum_{j=1,j\neq i}^nK\left(\frac{X_i-X_j}{a}\right),
\end{align*}
respectively. Here, $a=a(n)\in\mathbb{R}^{+}$ is a bandwidth parameter shrinking to zero at a suitable rate as $n\to\infty$ and $K(u)=\Pi_{j=1}^qk(u^{(j)})$ is a product kernel for a $q$-dimensional vector $u$, where $k(\cdot)$ is the univariate kernel and $u^{(j)}$ denotes the $j$-th component of $u$. Note that the random denominator problem is effectively eliminated by incorporating the estimated density $\hat{f}_X(X_i)$ into the $U$-process $\hat T_n(\cdot)$, significantly facilitating the theoretical analysis of $\hat T_n(\cdot)$. 

By imposing the standard regularity conditions considered in \cite{delgado2001significance} (see their Assumptions A1--A5), under the null $H_0$, we have
\begin{align}\label{stat.delgado}
    \sup_{w}\left\vert \hat T_n(w)-\frac{1}{n}\sum_{i=1}^n\epsilon_if_X(X_i)1_x(X_i)\left[1_z(Z_i)-F_{Z\vert X}(z\vert X_i)\right]\right\vert = o_p\left(n^{-1/2}\right),
\end{align}
where $F_{Z\vert X}(z\vert x)$ is the conditional distribution function of $Z$ given $X$. It is observed that in the asymptotically uniform expansion \eqref{stat.delgado}, the term 
\begin{align*}
    \frac{1}{n}\sum_{i=1}^n\epsilon_if_X(X_i)1_x(X_i)1_z(Z_i)
\end{align*}
is the infeasible process for testing the null hypothesis in \eqref{hyp.null eq} if $\epsilon_i$ is known (equivalently if $m(X_i)$ is known), and the term  
\begin{align*}
    \frac{1}{n}\sum_{i=1}^n\epsilon_if_X(X_i)1_x(X_i)F_{Z\vert X}(z\vert X_i)
\end{align*}
can be regarded as the ``nonparametric estimation effect'' due to using the nonparametric residual $\hat{\epsilon}_i$ to replace $\epsilon_i$ (via the estimation of $m(X_i)$). 

Much less attention has been paid to studying the above term, unlike the ``parametric estimation effect'' also known as the ``Durbin problem'' mentioned in \cite{durbin1973distribution}. Specifically, since the asymptotic null distributions of the test statistics (constructed as continuous functionals of $\hat T_n(\cdot)$) are typically case-dependent, the use of bootstrap methods to obtain critical values is unavoidable. However, the ``nonparametric estimation effect'' in the asymptotically uniform expansion of $\hat T_n(\cdot)$ poses a substantial challenge to the direct application of the computationally efficient multiplier bootstrap. While the wild bootstrap is a potential alternative, it incurs significantly higher computational complexity. To address this, with the help of \eqref{stat.delgado}, \cite{delgado2001significance} propose a specifically constructed multiplier bootstrap-based $U$-process as follows:
\begin{align*}
    \frac{1}{n}\sum_{i=1}^nV_i\hat{\epsilon}_i\hat{f}_X(X_i)1_x(X_i)\left[1_z(Z_i)-\hat{F}_{Z\vert X}(z\vert X_i)\right],
\end{align*}
with $\{V_i\}_{i=1}^n$ being i.i.d. random variables (i.e., multipliers) that have mean zero, unit variance, and are independent of the original sample $\{(Y_i,W_i^\top)^\top\}_{i=1}^n$. However, this approach necessitates estimating the conditional distribution function $F_{Z\vert X}(z\vert X_i)$ nonparametrically within the multiplier bootstrap procedure, which is
\begin{align*}
    \frac{1}{\hat{f}_X(X_i)}\frac{1}{(n-1)a^q}\sum_{j=1,j\neq i}^nK\left(\frac{X_i-X_j}{a}\right)1_z(Z_j).
\end{align*}
See Section $3$ of \citet{delgado2001significance} for a detailed description of this implementation. Unfortunately, this inevitably reintroduces the random denominator issue in the multiplier bootstrap, thereby undermining the purpose of using $\hat T_n(w)$. %, which was intended to be avoided in the initial construction of $\hat T_n(w)$. %by using $\hat f_X(X_i)$. 
Indeed, the solution adopted by \citet{delgado2001significance} to ensure the validity of this bootstrap is to impose stronger assumptions, such as requiring the density $f_X(X)$ to be bounded away from zero (see their Assumptions A9), which was exactly intended to be avoided in the initial construction of $\hat T_n(w)$ through the density weighting $\hat f_X(X_i)$. Such a restrictive bounded support assumption constitutes a significant barrier to applying the test to data with unbounded support, including the commonly used normal distribution. Another potential solution is to use trimming; however, the prohibitive complexity of the required theoretical proofs and the practical choice of the trimming level have prevented its successful implementation.

In this paper, to eliminate the nonparametric estimation effect, we propose a tailor-made nonparametric significance test that depends on a novel nonparametric-type projection of the weighting function $1_z(Z_i)$ onto the orthocomplement of the space spanned by $f_W(W_i)$ (hereafter, the PDF of $W$), where orthogonality is understood conditionally on $X_i$. To this end, we first consider the following infeasible projected weighting function:
\begin{align*}
    \mathcal{P}1_z(Z_i) :=& 1_z(Z_i)-f_{Z\vert X}(Z_i\vert X_i)\left[\int_{-\infty}^{\infty}f_{Z\vert X}^2(\bar z\vert X_i)\,d\bar z\right]^{-1}\int_{-\infty}^zf_{Z\vert X}(\bar z\vert X_i)\,d\bar z\notag\\
    =&1_z(Z_i)-f_W(W_i)\Delta^{-1}(X_i)G(z;X_i),
\end{align*}
where $f_{Z\vert X}(z\vert x)$ is the conditional density function of $Z$ given $X$, 
\begin{align*}
    \Delta(X_i) = \int_{-\infty}^{\infty}f_{W}^2(X_i,\bar z)\,d\bar z, \quad \text{and}\quad G(z;X_i) = \int_{-\infty}^zf_{W}(X_i,\bar z)\,d\bar z.
\end{align*}
In fact, the projection constructed above admits an intuitive interpretation as the linear regression of $1_z(Z_i)$ on $f_W(W_i)$ conditional on $X_i$, where $\Delta^{-1}(X_i)G(z;X_i)$ represents the coefficients of the linear projection. In other words, the term $f_W(W_i)\Delta^{-1}(X_i)G(z;X_i)$ constitutes the best linear predictor of $1_z(Z_i)$ given $f_W(W_i)$. Consequently, our constructed weighting function $\mathcal{P}1_z(Z_i)$ corresponds precisely to the associated prediction error, which implies that $\mathcal{P}1_z(Z_i)$ is orthogonal to $f_W(W_i)$ conditional on $X_i$, i.e.,
\begin{align*}
    \mathbb{E}\left[f_W(W_i)\mathcal{P}1_z(Z_i)\vert X_i\right] = f_X^{-1}(X_i)\left[G(z;X_i) - \Delta(X_i)\Delta^{-1}(X_i)G(z;X_i)\right]\equiv 0 \,\,\,a.s.
\end{align*}

\begin{remark}
While this paper introduces a nonparametric orthogonal projection approach, analogous parametric orthogonal projection-based techniques are well established in the testing literature, primarily for assessing the correct specification of parametric models. Examples include \cite{escanciano2014specification}, \cite{sant2019specification}, \cite{yang2024model}, and \cite{song2025unified}, which utilize orthogonal projected weighting functions onto the score functions to handle the ``parametric estimation effect'' in testing the (partially) linear quantile regression processes, propensity score models, and partially linear spatial autoregressive models. 
\end{remark}

Building on the sample version of the nonparametric projection $\mathcal{P}1_z(Z_i)$, our proposed projected $U$-process %residual–marked empirical process 
is constructed as follows:
\begin{align}\label{stat.stat}
    \hat R_n(w) = \frac{1}{n}\sum_{i=1}^n\hat{\epsilon}_i\hat{f}_X(X_i)1_x(X_i)\hat{\mathcal{P}}_n1_z(Z_i).
\end{align}
where
\begin{align*}
    \hat{\mathcal{P}}_n1_z(Z_i) = 1_z(Z_i)-\hat{f}_{W}(W_i)\hat{\Delta}_n^{-1}(X_i)\hat{G}_n(z;X_i)
\end{align*}
is the natural estimator for $\mathcal{P}1_z(Z_i)$. Here,
\begin{align*}
    \hat{\Delta}_n(X_i) = \int_{-\infty}^{\infty}\hat{f}^2_W(X_i,\bar z)\,d\bar z \quad \text{and}\quad \hat{G}_n(z;X_i) = \int_{-\infty}^z\hat{f}_W(X_i,\bar z)\,d\bar z
\end{align*}
are the sample versions of $\Delta(X_i)$ and $G(z;X_i)$, respectively, with 
\begin{align*}
    \hat{f}_W(X_i,z) = \frac{1}{(n-1)a^qb^{p}}\sum_{j=1,j\neq i}^nK\left(\frac{X_i-X_j}{a}\right)L\left(\frac{z-Z_j}{b}\right)
\end{align*}
being the leave-one-out kernel density estimator for the PDF $f_W(X_i,z)$, where $L(\cdot)$ and $b=b(n)\in\mathbb R^+$ are the kernel and bandwidth, respectively. 

\begin{remark}
    %It is worth noting that we construct the kernel $L(\cdot)$ for the $p$-dimensional vector $Z_i$ as a product of univariate kernels and introduce a second bandwidth sequence, say $b=b(n)$, that shrinks to zero at a suitable rate as $n\to\infty$. 
    We construct the kernel $L(\cdot)$ for the $p$-dimensional vector $Z$ as a product of the univariate kernel $l(\cdot)$ and introduce a second bandwidth $b$ that shrinks to zero at a suitable rate as $n\to\infty$. Such a setting is standard in the literature on smoothing-based methods, where the additional bandwidth $b$ introduces flexibility in characterizing bias and is typically used to control the relative rate with respect to $a$; see, for example, \cite{fan1996consistent} and \cite{lavergne2000nonparametric}. In our setting, however, the restrictions on the bandwidths $a$ and $b$ are less stringent, requiring only that they satisfy Assumption \ref{ass.bandwidth} in Section \ref{sec.Asy}. One may even set $a=b$ and take $K(\cdot)$ and $L(\cdot)$ to be the same kernel without affecting the theoretical results and the implementation of the proposed tests, as illustrated in the simulations reported in Section \ref{sec.Simulation}.
\end{remark}

Note that the proposed process $\hat R_n(w)$ based on the nonparametric-type orthogonal projection $\hat{\mathcal{P}}_n1_z(Z_i)$ can account for effectively the ``nonparametric estimation effect'' discussed before, although at the cost of nonparametrically estimating the density $f_W$. As such, it can be implemented directly using a convenient multiplier bootstrap without reintroducing the random denominator, in sharp contrast to the multiplier bootstrap used by \citet{delgado2001significance}. Further details are provided in Section \ref{sec.boot}. The test statistics are then constructed as continuous functionals of $\hat R_n(w)$. In this paper, we focus on the two most commonly employed choices, namely, the Cram\'{e}r--von Mises (CvM) and the Kolmogorov--Smirnov (KS) test statistics, which are explicitly given as
\begin{align*}
CvM_n = \int\left\vert \hat R_n(w)\right\vert^2F_{W_n}(dw)    \quad \text{and}\quad KS_n = \sup_{w}\left\vert \hat R_n(w)\right\vert.
\end{align*}
From a theoretical perspective, the integrating measure $F_{W_n}(\cdot)$ appearing in the $CvM_n$ is assumed to be a random measure that converges in probability to $F_W(\cdot)$, which is absolutely continuous with respect to the Lebesgue measure on $\mathbb{R}^{q+p}$. In practice, we adopt the approach suggested in \cite{escanciano2006consistent}: for the computation of $CvM_n$ we take $F_{W_n}(\cdot)$ to be the empirical distribution function of $\{W_i\}_{i=1}^n$, and for the computation of $KS_n$ we replace the theoretical supremum by its maximum taken over the sample points. Under the null hypothesis $H_0$, the test statistics $CvM_n$ and $KS_n$ are expected to take sufficiently small values, whereas relatively large realizations provide evidence against $H_0$ in favor of the alternative hypothesis $H_1$. The critical values used to assess the magnitude of the test statistics are obtained via the multiplier bootstrap procedure detailed in Section \ref{sec.boot}.

\section{Theoretical results}\label{sec.Asy}
In this section, the large-sample properties of the proposed $CvM_n$ and $KS_n$ test statistics will be considered, with particular attention given to their asymptotic null distributions, local power, and consistency. Before presenting the theoretical results, it is necessary to adopt several definitions as introduced in \cite{delgado2001significance} and to state a set of regularity conditions that hold uniformly across all cases. Let $\mathscr{K}_l$ with $l\geq1$ be the class of even functions of uniformly bounded variation $k:\mathbb{R}\to\mathbb{R}$, which satisfy
\begin{align*}
    k(u) = O\left((1+\left\vert u\right\vert^{l+1+\eta})^{-1}\right) \text{ for some }\eta>0 \text{ and }\int_\mathbb{R}u^ik(u)\,du=\delta_{i0}\text{ for }i=0,\cdots,l-1,
\end{align*}
where $\delta_{ij}$ is the Kroneker's delta. Let $\mathscr{L}_\beta^\alpha$ with $\alpha>0$ and $\beta>0$ be the class of functions $g:\mathbb{R}^q\to\mathbb{R}$, which satisfy uniformly $(b-1)$-times continuously differentiability for $b-1\leq \beta\leq b$. In addition, we impose the assumption that there exist a positive constant $\rho$ and a function $d$ with finite $\alpha$-th moments such that 
\begin{align*}
    \sup_{\left\vert v-u\right\vert<\rho}\left\vert g(v)-g(u)-Q(v,u)\right\vert/\left\vert v-u\right\vert^\beta\leq d(u)\text{ for all } u,
\end{align*}
where $Q$ is a $(b-1)$-th degree homogeneous polynomial in $v-u$ with coefficients given by the partial derivatives of $g$ at $u$ of orders up to $b-1$, all of which are assumed to have finite $\alpha$-th moments; in particular, $Q=0$ when $b=1$.
\begin{Assumption}\label{ass.sample}
    $\{(Y_i,W_i^\top)^\top\}_{i=1}^n$ are i.i.d. observations drawn from $(Y,W^\top)^\top$, which satisfies $\mathbb{E}\vert Y-m(X)\vert^{2+\delta_1+\delta_2}<\infty$ and $\mathbb{E}\vert \Delta^{-1}(X)f_X^2(X)\vert^{2+(4+2\delta_2)/\delta_1}<\infty$ for some positive constants $\delta_1$ and $\delta_2$.
\end{Assumption}
\begin{Assumption}\label{ass.structual}
    $f_X(\cdot)\in\mathscr{L}_{\lambda_1}^{\infty}$, $f_{Z\vert X}(z|\cdot)\in\mathscr{L}_{\lambda_2}^{\infty}$, and $m(\cdot)\in\mathscr{L}_\tau^2$ for some positive constants $\lambda_1$, $\lambda_2$, and $\tau$. $k(\cdot)\in\mathscr{K}_{\max\{l_1,l_2\}+t-1}$, where $l_1-1<\lambda_1\leq l_1$, $l_2-1<\lambda_2\leq l_2$, and $t-1<\tau\leq t$. In addition, we assume $\sup_{(x,z)}\vert f_{Z\vert X}(z|x)/f_X(x)\vert<\infty$.
\end{Assumption}
\begin{Assumption}\label{ass.bandwidth}
    $(na^{2q}b^{2p})^{-1}(\log n)^{2}+na^{2\min(\tau,\lambda_1)}+nb^{2\min(\tau,\lambda_2)}\to 0$ as $n\to \infty$.
\end{Assumption}
Assumption \ref{ass.sample} concerns the moments of $\epsilon=Y-m(X)$ and $\Delta^{-1}(X)f_X^2(X)$. %the conditional density function $f_{Z|X}$. 
While \cite{fan1996consistent} required the existence of the fourth moment, this condition was relaxed in \cite{delgado2001significance}. We adopt the relaxed version to guarantee the finiteness of the variances of the statistics, which is crucial for establishing the theoretical results. Moreover, the assumption concerning $\Delta^{-1}(X)f_X^2(X)$ is essential to guarantee the convergence of the proposed projection structure. In fact, it can be reformulated as the moment condition $\mathbb{E}\vert\int f_{Z\vert X}^2(\bar z|X)\,d\bar z\vert^{-2-(4+2\delta_2)/\delta_1}<\infty$, which restricts the tail properties of the conditional density function $f_{Z|X}$ and the density function $f_X$ and is satisfied by most commonly used distributions. This assumption permits an unbounded covariate $X$ and is weaker than the stringent bounded–support assumption on $X$ as imposed in \cite{delgado2001significance} (see their Assumptions A9). %and is satisfied by most commonly used distributions. %\footnote{It is interesting to note that, when $X$ and $Z$ are independent, the required moment condition $\mathbb{E}\vert\int f_{Z\vert X}^2(\bar z|X)\,d\bar z\vert^{-2-(4+2\delta_2)/\delta_1}<\infty$ reduces to $\int f^2_Z(\bar z)d\bar z>0$, %, which holds naturally. 
%which is a very mild condition.} 
%even when $Z$ follows an unbounded distribution; for example, $\int f^2_Z(\bar z)d\bar z$ is equal to ? for $Z\sim N(0,1)$, ? for $Z\sim t_\nu$, and ? for $Z\sim Cauchy $.} %, such as the uniform. 
%It is worth emphasizing that 
This theoretical refinement extends the applicability of the ICM–type tests for nonparametric moment restrictions to data generated from normal and other commonly used unbounded distributions that could not be handled under the conditions in \cite{delgado2001significance}, as confirmed by the simulation results for the unbounded covariate case in Section \ref{sec.Simulation}.  %permit an unbounded covariate $X$ and are easier to satisfy when the dependence between $X$ and $Z$ is stronger. Further numerical evidence for the unbounded case is reported in Section \ref{sec.Simulation}.

The function classes specified in Assumption \ref{ass.structual} impose the Lipschitz continuity of the %structural 
functions $f_X(\cdot)$, $f_{Z\vert X}(z|\cdot)$, and $m(\cdot)$, which in turn ensures that the bias of the constructed empirical process is asymptotically negligible and the variance is finite.\footnote{We note that the condition $\sup_{(x,z)}\vert f_{Z\vert X}(z\vert x)/f_X(x)\vert<\infty$ is sufficient but not necessary and permits an unbounded covariate $X$. This condition is assumed to control uniformly the bias term of $\hat{\Delta}_n(x)$, as shown in Lemma \ref{lemma.delta}. In fact, alternative conditions may also deliver the desired result. Let $d_{f_X}(\cdot)$ denote the Lipschitz coefficient of $f_X(\cdot)$ and $d_{f_{Z\vert X}}(z\vert \cdot)$ denote the Lipschitz coefficient of $f_{Z\vert X}(z\vert \cdot)$, in the sense of the definitions of $\mathscr{L}_{\lambda_1}^\infty$ and $\mathscr{L}_{\lambda_2}^\infty$, respectively. For example, if the conditions $\sup_{(x,z)}\vert d_{f_X}(x)f_{Z\vert X}(z\vert x)/f_X(x)\vert<\infty$, $\sup_{(x,z)}\vert d_{f_{Z\vert X}}(z\vert x)\vert<\infty$, and $\sup_{(x,z)}\vert d_{f_X}(x)d_{f_{Z\vert X}}(z\vert x)/f_X(x)\vert<\infty$ are imposed in place of $\sup_{(x,z)}\vert f_{Z\vert X}(z\vert x)/f_X(x)\vert<\infty$, the same conclusion continues to hold. %In addition, Assumptions \ref{ass.sample}--\ref{ass.structual} permit an unbounded covariate $X$ and are easier to satisfy when the dependence between $X$ and $Z$ is stronger. Further numerical evidence for the unbounded case is reported in Section \ref{sec.Simulation}.
} Note that standard densities, such as the normal distribution, and widely used linear models satisfy this assumption. In particular, the function class is restricted to be bounded when $\alpha=\infty$, for example, in the case of density functions. Moreover, the requirements on the kernel function $k(\cdot)$ strengthen the conventional high-order kernel assumptions by imposing a decay rate condition, as mentioned in \cite{robinson1988root}. In addition, the order of the kernel function depends on the moments of the %structural 
functions $f_X(\cdot)$, $f_{Z\vert X}(z|\cdot)$, and $m(\cdot)$. 

Assumption \ref{ass.bandwidth} determines the lower and upper bounds for the bandwidth sequences $a$ and $b$ and thus guarantees the convergence of the proposed $U$-process. On the one hand, the lower bounds on the bandwidths, which are consistent with those in \cite{robinson1988root} and \cite{delgado2001significance}, serve to ensure the asymptotic negligibility of the bias of the $U$-process. On the other hand, the upper bounds on the bandwidths in Assumption \ref{ass.bandwidth} provide a sufficient, though not necessarily weakest possible, condition to establish the uniform convergence of the nonparametric estimator $\hat{\Delta}_n(\cdot)$ for $\Delta(\cdot)$ as shown in Lemma \ref{lemma.delta}. The slightly strengthened assumptions on the upper bounds are used to simplify the theoretical analysis. Specifically, we expand $\hat{\Delta}_n(\cdot)$ only up to the second order, although the upper bounds could be relaxed with higher order expansions of $\hat{\Delta}_n(\cdot)$. The upper bounds cannot, however, be relaxed beyond the rate $(na^{2q}b^{p})^{-1}\to 0$, which guarantees that the variance of the test statistics remains bounded.

In the following, let ``$\Longrightarrow$'' represent weak convergence on $(l^{\infty}(\Pi),\mathcal{B}_\infty)$ in the sense of Hoffmann–-J\textup{\o{}}rgensen, where $\mathcal{B}_\infty$ denotes the corresponding Borel $\sigma$-algebra, see, e.g., Definition $1.3.3$ in \cite{van1996weak}. We establish formally the uniform decomposition of $\hat R_n(\cdot)$, which indicates that the nonparametric estimation effects arising from $\hat m$ and $\hat f_W$ are asymptotically negligible. Denote 
\begin{align*}
    \zeta(Y,X,Z;x,z)=\left[Y-m(X)\right]f_X(X)1_x(X)\mathcal{P}1_z(Z).
\end{align*}

\begin{theorem}\label{thm.null}
Suppose that Assumptions \ref{ass.sample}--\ref{ass.bandwidth} hold. Under the null hypothesis $H_0$, %we can show that  
%\begin{align}\label{thm.eq null}
%    \sup_{w}\left\vert \hat R_n(w)-\frac{1}{n}\sum_{i=1}^n\epsilon_if_X(X_i)1_x(X_i)\mathcal{P}1_z(Z_i)\right\vert=o_p\left(n^{-1/2}\right).
%\end{align}
\begin{align}\label{thm.eq null}
    \sup_{w}\left\vert \hat R_n(w)-\frac{1}{n}\sum_{i=1}^n\zeta(Y_i,X_i,Z_i;x,z)\right\vert=o_p\left(n^{-1/2}\right).
\end{align}
Furthermore,
\begin{equation*}
\sqrt{n}\hat R_{n}(\cdot) \Longrightarrow R_{\infty}(\cdot),
\end{equation*}
where $R_{\infty}(\cdot)$ is a centered Gaussian process with the covariance structure given by
\begin{align*}
    &Cov\left[R_{\infty}(w), R_{\infty}(w^\prime)\right]= \mathbb{E}\left[\zeta(Y,X,Z;x,z)\zeta(Y,X,Z;x^\prime,z^\prime)\right].
\end{align*}
%where
%\begin{align*}
%    \zeta_\infty(Y,X,Z;x,z)=\left[Y-m(X)\right]f_X(X)1_x(X)\mathcal{P}1_z(Z).
%\end{align*}
\end{theorem}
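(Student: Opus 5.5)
The plan is to split $\hat R_n(w)$ into a term with the true projection and a term carrying the projection estimation error:
\begin{align*}
\hat R_n(w)=\underbrace{\frac{1}{n}\sum_{i=1}^n\hat\epsilon_i\hat f_X(X_i)1_x(X_i)\mathcal{P}1_z(Z_i)}_{A_n(w)}-\underbrace{\frac{1}{n}\sum_{i=1}^n\hat\epsilon_i\hat f_X(X_i)1_x(X_i)\left[\hat f_W(W_i)\hat\Delta_n^{-1}(X_i)\hat G_n(z;X_i)-f_W(W_i)\Delta^{-1}(X_i)G(z;X_i)\right]}_{B_n(w)}.
\end{align*}

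\textbf{The term $A_n$.} This is exactly the process of Delgado and Gonz\'alez-Manteiga with the weight $1_z(Z_i)$ replaced by the fixed function $\mathcal{P}1_z(Z_i)=1_z(Z_i)-f_W(W_i)\Delta^{-1}(X_i)G(z;X_i)$. It is a second-order $U$-process with kernel $a^{-q}K((X_i-X_j)/a)(Y_i-Y_j)1_x(X_i)\mathcal{P}1_z(Z_i)$. I will take its Hoeffding decomposition.

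\emph{Degenerate part.} Its contribution is $o_p(n^{-1/2})$ uniformly in $w$, by maximal inequalities for degenerate $U$-processes over VC classes (Nolan and Pollard; Sherman). This uses $na^{q}\to\infty$ and the moment conditions in Assumption \ref{ass.sample}. The extra factor $f_W\Delta^{-1}G$ is handled by a Hölder-type argument; the moment of $\Delta^{-1}f_X^2$ enters here, since $f_W\Delta^{-1}G\le f_{Z|X}f_X\Delta^{-1}f_X$.

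\emph{Projection onto $i$.} It gives $\epsilon_i f_X(X_i)1_x(X_i)\mathcal{P}1_z(Z_i)$.

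\emph{Projection onto $j$.} It gives, up to bias,
\[
-\epsilon_j f_X(X_j)1_x(X_j)\,\mathbb{E}[\mathcal{P}1_z(Z)\mid X=X_j].
\]
The key observation is that this term is not zero in general. Instead, it must cancel against the leading part of $B_n$; this cancellation is the point of the projection.

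\emph{Bias.} Terms of order $a^{\min(\tau,\lambda_1)}$ are $o(n^{-1/2})$ by Assumption \ref{ass.bandwidth} together with the higher-order kernel in Assumption \ref{ass.structual}.

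\textbf{The term $B_n$.} I linearize $\hat f_W\hat\Delta_n^{-1}\hat G_n-f_W\Delta^{-1}G$ to first order:
\[
(\hat f_W-f_W)\Delta^{-1}G+f_W\Delta^{-1}(\hat G_n-G)-f_W\Delta^{-2}G(\hat\Delta_n-\Delta).
\]
The quadratic remainder is controlled by uniform rates for $\hat\Delta_n$ and $\hat f_W$, of order $(na^qb^p)^{-1/2}\log n$ plus bias. I expect these to be established in Lemma \ref{lemma.delta}, using $\sup f_{Z|X}/f_X<\infty$. Combined with $\hat\epsilon_i\hat f_X(X_i)$, which is itself of the form $U$-statistic with mean zero under $H_0$, the products are $O_p((na^{2q}b^{2p})^{-1}\log^2 n)\cdot O_p(1)$-type. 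This is the reason the stronger rate in Assumption \ref{ass.bandwidth} appears.

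For the linear terms, I replace $\hat\epsilon_i\hat f_X(X_i)$ by its expansion $\epsilon_if_X(X_i)-n^{-1}\sum_j(\cdot)$. The product of $\epsilon_i f_X(X_i)$ with a mean-zero kernel estimator in $j$ is a $U$-statistic whose $i$-projection has conditional mean $\mathbb{E}[\epsilon_i\mid W_i]=0$ under $H_0$, so it vanishes. The surviving piece comes from the first-order interaction of $\hat m$ with $f_W\Delta^{-1}G$, namely
\[
\epsilon_j f_X(X_j)1_x(X_j)\,\mathbb{E}[f_W(W)\mid X=X_j]\,\Delta^{-1}(X_j)G(z;X_j).
\]
This is where I must verify the cancellation. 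The $j$-projection from $A_n$ equals $-\epsilon_jf_X(X_j)1_x(X_j)\{F_{Z|X}(z|X_j)-\mathbb{E}[f_W\mid X_j]\Delta^{-1}G\}$. The orthogonality $\mathbb{E}[f_W\mathcal{P}1_z\mid X]=0$, together with $f_X^{-1}G=F_{Z|X}$, shows that everything not of the form $\epsilon_i f_X(X_i)1_x(X_i)\mathcal{P}1_z(Z_i)$ collapses.

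I expect the bookkeeping of this cancellation to be the main obstacle: getting exactly which conditional expectations appear, which requires an honest second-order Hoeffding computation. I would first try it with $\hat f_W$ replaced by $f_W$, then add the $\hat f_W$ error as a separate degenerate piece, uniformly in $w$.

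\textbf{Weak convergence.} Once \eqref{thm.eq null} holds, $\sqrt n\hat R_n$ inherits the limit of $n^{-1/2}\sum_i\zeta(\cdot;w)$. The class $\{\zeta(\cdot;x,z)\}$ is the product of the fixed function $\epsilon f_X(X)$, indicator VC classes, and $f_W\Delta^{-1}G(z;X)$, which is monotone in $z$. It is therefore Donsker with square-integrable envelope $|\epsilon|f_X(1+f_W\Delta^{-1}f_X)$, which is integrable by Hölder and Assumption \ref{ass.sample}. The functional CLT in $l^\infty$ then gives the Gaussian limit with the stated covariance, and under $H_0$ we have $\mathbb{E}\zeta=0$.
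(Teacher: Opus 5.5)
\paragraph{Where the plan breaks: the cancellation is in the wrong place.}
With your split, the weight in $A_n$ already satisfies
\[
\mathbb{E}\left[\mathcal{P}1_z(Z)\mid X\right]=F_{Z\vert X}(z\vert X)-\mathbb{E}\left[f_W(W)\mid X\right]\Delta^{-1}(X)G(z;X)=0,
\]
because $\mathbb{E}[f_W(W)\mid X]=\int f_W^2(X,\bar z)\,d\bar z/f_X(X)=\Delta(X)/f_X(X)$ and $G(z;X)=f_X(X)F_{Z\vert X}(z\vert X)$. So the $j$-projection of $A_n$ that you write down, $-\epsilon_jf_X(X_j)1_x(X_j)\mathbb{E}[\mathcal{P}1_z(Z)\mid X=X_j]$, vanishes under $H_0$. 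It vanishes exactly, not only up to smoothing bias, since $\mathbb{E}[Y\mathcal{P}1_z(Z)\mid X]=m(X)\mathbb{E}[\mathcal{P}1_z(Z)\mid X]=0$; for the same reason $\mathbb{E}A_n(w)=0$.

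Write $T_n^{(2)}(w)=n^{-1}\sum_j\epsilon_j1_x(X_j)G(z;X_j)$. The two non-degenerate contributions of $\hat m$ are $-T_n^{(2)}(w)$ (through $1_z(Z_i)$) and $+T_n^{(2)}(w)$ (through $f_W\Delta^{-1}G$). Both sit inside $A_n$, and they cancel there. Consequently $B_n$ contains no ``$\hat m$ times $f_W\Delta^{-1}G$'' interaction at all, because you placed that product in $A_n$.

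The ``surviving piece'' you assign to $B_n$ is $\epsilon_jf_X(X_j)1_x(X_j)\mathbb{E}[f_W\mid X_j]\Delta^{-1}(X_j)G(z;X_j)=\epsilon_j1_x(X_j)G(z;X_j)$. This is exactly the summand of $T_n^{(2)}$, a non-degenerate $O_p(n^{-1/2})$ term. If it really were in $B_n$, then, with the $j$-projection of $A_n$ equal to zero, \eqref{thm.eq null} would acquire an extra $T_n^{(2)}$ term and be false. As written, your plan counts this term twice.

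The identity you invoke, $\mathbb{E}[f_W\mathcal{P}1_z\mid X]=0$, is not the one doing the work. Read literally, it would require $\int_{-\infty}^zf_W^2(X,\bar z)\,d\bar z=\Delta^{-1}(X)G(z;X)\int f_W^3(X,\bar z)\,d\bar z$ for all $z$, which fails in general. What is needed, and what the paper's proof of Lemma \ref{lemma.Unphi2} actually uses, is $\mathbb{E}[f_W(W)\mid X]=\Delta(X)/f_X(X)$. Equivalently, $\mathcal{P}1_z$ is orthogonal to $f_W(X,\cdot)$ in $L^2(d\bar z)$.

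\paragraph{The repaired route and how it relates to the paper.}
Once this is fixed, your route goes through and is a regrouping of the paper's argument. The correct statements are:
\begin{itemize}
\item $A_n(w)=n^{-1}\sum_i\zeta(Y_i,X_i,Z_i;x,z)+o_p(n^{-1/2})$, from a Hoeffding decomposition whose only non-degenerate part is the $i$-projection;
\item $B_n=o_p(n^{-1/2})$ in its entirety, with no first-order term.
\end{itemize}
The second statement follows from the idea you already use for the $\epsilon_if_X(X_i)$-linear terms. The estimators $\hat f_W$, $\hat\Delta_n$ and $\hat G_n$ involve only the covariates, and $\mathbb{E}[\epsilon_l\mid W_1,\ldots,W_n]=0$ under $H_0$. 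Hence every piece of $B_n$ that is linear in the $\epsilon$'s, including those coming from $\hat m$, is conditionally centered given the covariates. Its conditional variance is of order $n^{-1}$ times the squared coefficient error. The remaining smoothing-bias cross terms are products of two small quantities, controlled by Assumption \ref{ass.bandwidth}.

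The paper instead writes $\hat R_n=\hat T_n-\hat S_n$. It takes $\hat T_n\approx T_n^{(1)}-T_n^{(2)}$ from \cite{delgado2001significance}, shows $\hat S_n\approx S_n^{(1)}-T_n^{(2)}$ (Lemmas \ref{lemma.Shat-S} and \ref{lemma.Sn}), and lets $T_n^{(2)}$ cancel between the two pieces. That cross-piece cancellation is what your plan seems to have carried over into a split where it no longer occurs.

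In your grouping, $A_n$ absorbs both $T_n^{(2)}$ terms. Your $B_n$ corresponds to Lemma \ref{lemma.Shat-S} together with the parts of the $U$-statistic lemmas that replace $\hat f_W$ by $f_W$ (e.g.\ in Lemmas \ref{lemma.Unphi1} and \ref{lemma.Unphi2}). What your grouping buys is that the weight $\mathcal{P}1_z$ is fixed and conditionally mean zero given $X$. So the bias and the $j$-projection of $A_n$ vanish identically, and the nonzero $F_{Z\vert X}$ term of the unprojected process never has to be computed.
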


To complement the theoretical results under the null hypothesis, and building upon Theorem \ref{thm.null}, the asymptotic distributions of the $CvM_n$ and $KS_n$ statistics introduced in Section \ref{sec.Test} are established by applying the continuous mapping theorem, as discussed in \cite{van1996weak}.
\begin{corollary}\label{Cor.null}
Suppose that Assumptions \ref{ass.sample}--\ref{ass.bandwidth} hold. Under the null hypothesis $H_0$,
\begin{align*}
    &nCvM_{n}\stackrel{d}\longrightarrow \int\left\vert R_{\infty}(w)\right\vert^2F_W(dw) \text{ and } \sqrt{n}KS_{n}\stackrel{d}\longrightarrow \sup\limits_{w}\left\vert R_{\infty}(w)\right\vert.
\end{align*}
\end{corollary}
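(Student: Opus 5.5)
The corollary follows from Theorem \ref{thm.null} through the continuous mapping theorem. The only extra work is the randomness of the integrating measure $F_{W_n}$ in $CvM_n$, and the replacement of the supremum by a maximum over sample points. By Theorem \ref{thm.null}, $\sqrt{n}\hat R_n(\cdot)\Longrightarrow R_\infty(\cdot)$ in $l^\infty(\Pi)$. The functional $h\mapsto\sup_w|h(w)|$ is continuous (indeed Lipschitz) on $l^\infty(\Pi)$ for the sup-norm. Hence $\sqrt{n}KS_n=\sup_w|\sqrt{n}\hat R_n(w)|\stackrel{d}\longrightarrow\sup_w|R_\infty(w)|$ by the continuous mapping theorem, Theorem 1.3.6 in \cite{van1996weak}.

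If $KS_n$ is computed as a maximum over $\{W_j\}$, I will add an argument that the discrepancy with the supremum is $o_p(1)$ after scaling. The plan is to use the asymptotic uniform equicontinuity of $\sqrt{n}\hat R_n$, which is implied by the weak convergence to a tight limit. The limit has a.s. continuous paths with respect to the intrinsic semimetric $d(w,w')=\{\mathbb{E}[\zeta(\cdot;w)-\zeta(\cdot;w')]^2\}^{1/2}$. Because $F_W$ is absolutely continuous, the sample points become $d$-dense in the relevant region, and $\zeta$ vanishes as $w\to-\infty$. Combining these gives the claim.

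For the CvM statistic I will write
\begin{align*}
nCvM_n=\int\left\vert\sqrt{n}\hat R_n(w)\right\vert^2F_W(dw)+\int\left\vert\sqrt{n}\hat R_n(w)\right\vert^2\left(F_{W_n}-F_W\right)(dw).
\end{align*}
The first term converges in distribution to $\int|R_\infty|^2dF_W$ by continuous mapping. This uses that $h\mapsto\int|h|^2dF_W$ is continuous on $l^\infty(\Pi)$, since $|\int|h|^2-\int|g|^2|\le(\|h\|_\infty+\|g\|_\infty)\|h-g\|_\infty$ and $F_W$ is a probability measure. The second term is the main obstacle, because the integrand is random and only converges weakly.

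To handle it, I will invoke the Skorokhod/almost-sure representation of \cite{van1996weak}, or equivalently an extended continuous mapping theorem (Theorem 1.11.1 there), applied to the maps $g_n(h)=\int|h|^2dF_{W_n}$. The requirement is that $g_n(h_n)\to g(h)$ whenever $h_n\to h$ uniformly and $h$ lies in the support of the limit, with $F_{W_n}\to F_W$. Writing $|h_n|^2=|h|^2+(|h_n|^2-|h|^2)$, the perturbation contributes at most $(\|h_n\|_\infty+\|h\|_\infty)\|h_n-h\|_\infty\to0$ uniformly in the measure. It then remains to show $\int|h|^2d(F_{W_n}-F_W)\to0$ for $h$ a bounded continuous path of $R_\infty$.

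This last convergence holds because $F_{W_n}$ converges weakly to $F_W$ in probability (for the empirical distribution, by Glivenko--Cantelli) and paths of $R_\infty$ are bounded and continuous $F_W$-a.e. Path continuity follows from the indicator structure in $\zeta$ and the continuity of $G(z;x)$ in $z$, together with the absolute continuity of $F_W$. Conditioning on $\sigma(\{W_i\})$ is not needed, since the argument runs in outer probability via subsequences: every subsequence has a further subsequence along which $F_{W_n}\to F_W$ a.s. Combining the two terms via Slutsky's lemma yields $nCvM_n\stackrel{d}\longrightarrow\int|R_\infty(w)|^2F_W(dw)$.
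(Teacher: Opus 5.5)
Your proposal is correct and takes the same approach as the paper, which obtains the corollary directly from the weak convergence in Theorem~\ref{thm.null} by the continuous mapping theorem; your joint-convergence/extended-CMT treatment of the random measure $F_{W_n}$ makes explicit a step the paper leaves implicit, namely that paths of $R_\infty$ are bounded and continuous so that $\int|h|^2\,d(F_{W_n}-F_W)\to0$. The max-over-sample-points digression is not needed, since the corollary defines $KS_n$ as $\sup_w|\hat R_n(w)|$, and its claim that the sample points become $d$-dense would in any case need extra conditions on the support of $W$.
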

Theorem \ref{thm.null} establishes the theoretical advantage of the proposed test, namely, that despite the applications of kernel smoothing methods, the resulting empirical process still achieves the parametric convergence rate, with a centered Gaussian process as its limiting process. Subsequently, the Corollary \ref{Cor.null} ensures that the statistics $nCvM_{n}$ and $\sqrt{n}KS_{n}$ converge to the squared norm and the sup norm of the aforementioned Gaussian process, respectively. The uniform decomposition given in \eqref{thm.eq null} is both theoretically and empirically attractive for implementing the multiplier bootstrap detailed in Section \ref{sec.boot}. To investigate the power of the proposed test, we introduce a sequence of local alternatives converging to the null hypothesis at the parametric rate, 
\begin{align}\label{hyp.local}
    H_{1n}:\mathbb{E}(Y \vert W)=\mathbb{E}(Y \vert X)+\frac{\Psi(W)}{\sqrt{n}} \quad a.s.,
\end{align}
under which the asymptotic theory of the constructed empirical process is subsequently established. We note that $\Psi:\mathbb{R}^{p+q}\to\mathbb{R}$ is a bounded nonzero function.
\begin{theorem}\label{thm.local}
Suppose that Assumptions \ref{ass.sample}--\ref{ass.bandwidth} hold. Under the sequence of local alternatives $H_{1n}$,
\begin{align*}
\sqrt{n}\hat R_{n}(\cdot) \Longrightarrow R_{\infty}(\cdot)+\mu(\cdot),
\end{align*}
where $R_{\infty}(\cdot)$ is the centered Gaussian process as described in Theorem \ref{thm.null}, and the deterministic shift function $\mu(w)$ is given by
\begin{align*}
    \mu(w) = \mathbb{E}\left\{\Psi(W)f_X(X)1_x(X)\left[1_z(Z)-f_W(W)\Delta^{-1}(X)G(z;X)\right]\right\}.
\end{align*}
\end{theorem}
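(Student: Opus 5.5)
Under $H_{1n}$ I would write $Y_i=m(X_i)+\epsilon_i$ with $m(X)=\mathbb{E}[Y\vert X]$. Set $u_i=Y_i-\mathbb{E}[Y_i\vert W_i]$, so that $\mathbb{E}[u_i\vert W_i]=0$. Then $\epsilon_i=u_i+n^{-1/2}\Psi(W_i)$, and $\mathbb{E}[\Psi(W)\vert X]=0$ follows from the definition of $m$. Since $\hat R_n(w)$ is linear in the responses, it splits as $\hat R_n(w)=\hat R_n^{u}(w)+n^{-1/2}\hat R_n^{\Psi}(w)$. The first piece is built from $u_i$; the second has $Y_i$ replaced by $\Psi(W_i)$ and multiplied by $n^{-1/2}$. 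The proof then has two steps: show that the first piece behaves exactly as under the null, and show that the second converges uniformly to the shift $\mu(w)$.

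\textbf{First piece.} The argument behind \eqref{thm.eq null} in Theorem \ref{thm.null} only uses $\mathbb{E}[u\vert W]=0$ together with the moment and smoothness conditions of Assumptions \ref{ass.sample}--\ref{ass.bandwidth}. Here the conditional mean of $Y$ given $X$ is still $m(X)$, and its smoothness is unaffected because the perturbation is mean zero given $X$. I would therefore rerun that argument (a Hoeffding decomposition of the $U$-process, bias control, and uniform convergence of $\hat\Delta_n$ from Lemma \ref{lemma.delta}) with $u_i$ in place of $\epsilon_i$. This gives
\[
\sup_w\Big\vert \hat R_n^{u}(w)-n^{-1}\textstyle\sum_i u_if_X(X_i)1_x(X_i)\mathcal{P}1_z(Z_i)\Big\vert=o_p(n^{-1/2}).
\]
The summands form a triangular array whose law depends on $n$ only through $\mathbb{E}[u^2\vert W]$. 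I would check that their covariance converges to that of $\zeta$. This holds because $\mathbb{E}[\epsilon^2\vert W]=\mathbb{E}[u^2\vert W]+n^{-1}\Psi^2(W)$ and $\Psi$ is bounded. The class $\{f_X1_x\mathcal{P}1_z\}$ is VC-type with an envelope that is square-integrable by Assumption \ref{ass.sample}. A triangular-array functional CLT (e.g.\ Theorem 2.11.22 of \cite{van1996weak}) then yields $\sqrt n\hat R_n^{u}\Longrightarrow R_\infty$.

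\textbf{Shift term.} The remaining task is
\[
\hat R_n^{\Psi}(w)=\frac1n\sum_i\big[\Psi(W_i)-\hat{\Psi}(X_i)\big]\hat f_X(X_i)1_x(X_i)\hat{\mathcal P}_n1_z(Z_i)\to\mu(w)
\]
in probability, uniformly in $w$, where $\hat\Psi(X_i)$ denotes the kernel regression of $\Psi(W_j)$ on $X_i$. Because $\Psi$ is bounded and $\mathbb{E}[\Psi\vert X]=0$, the $\hat\Psi$ contribution is a degenerate-mean $U$-statistic of order $o_p(1)$ uniformly. The main term $n^{-1}\sum_i\Psi(W_i)\hat f_X(X_i)1_x(X_i)\hat{\mathcal P}_n1_z(Z_i)$ needs three replacements: $\hat f_X\to f_X$, $\hat f_W\to f_W$, and $\hat\Delta_n^{-1}\hat G_n\to\Delta^{-1}G$. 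These are justified by Lemma \ref{lemma.delta}, uniform consistency of the density estimators, and the moment bound on $\Delta^{-1}f_X^2$. Each replacement costs $o_p(1)$ uniformly, since only consistency (not root-$n$ rates) is needed here. A Glivenko–Cantelli argument for the limiting class then gives convergence to $\mu(w)$.

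\textbf{Conclusion and main obstacle.} Slutsky's lemma in $l^\infty$ completes the proof. I expect the main obstacle to be the first step. One must verify that the uniform expansion and the tightness proof of Theorem \ref{thm.null} go through for an $n$-dependent data-generating process, in particular that the bias bounds are uniform in $n$. I would handle this by noting that all bias terms involve only $m$, $f_X$, and $f_{Z\vert X}$, none of which depend on $n$.
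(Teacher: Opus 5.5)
Your proof is correct, but it is organized differently from the paper's. The paper keeps $\epsilon_i=Y_i-m(X_i)$, which now carries the drift, and re-establishes the root-$n$ expansion \eqref{thm.eq local} directly under $H_{1n}$. It reuses the decomposition $\hat R_n=\hat T_n-\hat S_n$ and tracks the $\Psi$-induced terms inside the lemmas (e.g.\ the $g_\Psi$ and $m_\Psi$ terms in the proof of Lemma \ref{lemma.Unphi1}). It appeals to Delgado--Manteiga for $\hat T_n$ and for the local-alternative limit. Only at the end is the leading sum split into a centered part and the drift $n^{-1/2}\Psi(W_i)$.

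You split first, using that $\hat R_n$ is linear in the responses while $\hat{\mathcal P}_n$ does not involve $Y$. This is more modular. Your first piece is literally the statistic computed on the pseudo-sample $(m(X_i)+u_i,W_i)$, which satisfies $H_0$. If the law of $(u,W)$ is held fixed along $H_{1n}$ (the standard formalization, and the one under which ``$R_\infty$ of Theorem \ref{thm.null}'' is unambiguous), the summands do not depend on $n$ at all and Theorem \ref{thm.null} applies verbatim. The triangular-array FCLT and the uniformity-in-$n$ check you flag as the main obstacle are then unnecessary. Your second piece needs only uniform consistency rather than $o_p(n^{-1/2})$ control. Your observation that $\mathbb{E}[\Psi(W)\vert X]=0$ is forced by the form of $H_{1n}$, which the paper never uses explicitly, makes clear why the $\hat\Psi$ contribution drops out of $\mu$.

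What the paper's route buys is the representation \eqref{thm.eq local} in terms of $\epsilon_i$, the form it reuses for the bootstrap under $H_{1n}$. Both routes rely on the same delicate step: replacing $\hat\Delta_n^{-1}$ by $\Delta^{-1}$ where $\Delta$ may be small. You handle it with the same tools as Lemma \ref{lemma.Shat-S}, namely Lemma \ref{lemma.delta} and the moment bound on $\Delta^{-1}f_X^2$, though only at consistency level.
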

Owing to the similarity between the theoretical derivations under the sequence of local alternatives and those established under the null hypothesis as in Corollary \ref{Cor.null}, the results $nCvM_{n}\stackrel{d}\longrightarrow \int\vert R_{\infty}(w)+\mu(w)\vert^2F_W(dw)$ and $\sqrt{n}KS_{n}\stackrel{d}\longrightarrow \sup\limits_{w}\vert R_{\infty}(w)+\mu(w)\vert$ are presented directly, where we omit the verification details of the convergence of the statistics. Consequently, the local powers of the proposed statistics are characterized by $\mu(\cdot)$. Specifically, the deterministic shift term determines the non-centrality of the limiting process in comparison with the centered Gaussian limiting process under the null hypothesis, and it leads to the fact that whenever the deterministic shift function is nonzero for at least some $w\in\mathbb R^{p+q}$ with a positive Lebesgue measure, the proposed $CvM_n$ and $KS_n$ statistics possess nontrivial power against the local alternatives. Indeed, since $\mu(\cdot)$ depends on the conditional $L^2$ inner product of $\Psi(X,Z)$ and $\mathcal{P}1_z(Z)$ given $X$, in the sense that $\mu(w) = \mathbb{E}\{f_X(X)1_x(X)\mathbb{E}[\Psi(W)\mathcal{P}1_z(Z)\vert X]\}$, the nontrivial local power is well-established as long as for any constant $\gamma$,
\begin{align*}
    \mathbb{P}\left[\Psi(W)=\gamma f_W(W)\right]<1.
\end{align*}

Finally, to derive the asymptotic global power properties of the proposed test statistics, we investigate the asymptotic behavior of the constructed empirical process under the alternative hypothesis.
\begin{theorem}\label{thm.alt}
%Under $H_{1}$, if Assumptions \ref{ass.sample}--\ref{ass.bandwidth} hold,
Suppose that Assumptions \ref{ass.sample}--\ref{ass.bandwidth} hold. Under the alternative hypothesis $H_1$, 
\begin{equation*}
    \sup_{w\in\Pi}\left\vert \hat R_{n}(w) - C(w) \right\vert = o_p(1),
\end{equation*}
where 
\begin{align*}
    &C(w)  = \mathbb{E}\left\{\left[\mathbb{E}\left(Y\vert W\right)-\mathbb{E}\left(Y\vert X\right)\right]f_X(X)1_x(X)\left[1_z(Z)-f_W(W)\Delta^{-1}(X)G(z;X)\right]\right\}.
\end{align*}
\end{theorem}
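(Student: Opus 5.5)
We prove Theorem \ref{thm.alt} by reusing the decomposition behind Theorem \ref{thm.null}, now keeping track of the nonzero signal. Under $H_1$, set $\varepsilon_i = Y_i - \mathbb{E}(Y\vert W_i)$ and $g(W_i)=\mathbb{E}(Y\vert W_i)-m(X_i)$, so that $Y_i - m(X_i) = \varepsilon_i + g(W_i)$ with $\mathbb{E}(\varepsilon_i\vert W_i)=0$. The infeasible process
\begin{align*}
\frac{1}{n}\sum_{i=1}^n [Y_i-m(X_i)] f_X(X_i)1_x(X_i)\mathcal{P}1_z(Z_i)
\end{align*}
then splits into two pieces. The first, $n^{-1}\sum_i \varepsilon_i f_X(X_i)1_x(X_i)\mathcal{P}1_z(Z_i)$, is an $O_p(n^{-1/2})$ empirical process uniformly in $w$. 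The second, $n^{-1}\sum_i g(W_i)f_X(X_i)1_x(X_i)\mathcal{P}1_z(Z_i)$, is an average of i.i.d. terms with mean exactly $C(w)$. The plan is therefore:
\begin{itemize}[leftmargin=*]
\item[(i)] show $\sup_w\vert \hat R_n(w) - n^{-1}\sum_i [Y_i-m(X_i)]f_X(X_i)1_x(X_i)\mathcal{P}1_z(Z_i)\vert = o_p(1)$;
\item[(ii)] show a uniform law of large numbers for the infeasible process, with limit $C(w)$.
\end{itemize}

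For step (ii), the index set $\{1_x(\cdot)\}\times\{\mathcal{P}1_z(\cdot)\}$ is built from indicator classes, which are VC. The maps $z\mapsto G(z;X)$ are monotone and bounded, so the class $\{f_W(W)\Delta^{-1}(X)G(z;X)\}$ has an integrable envelope $f_W(W)\Delta^{-1}(X)f_X(X)$. This envelope is integrable by Assumptions \ref{ass.sample}--\ref{ass.structual}, after multiplying by $\vert Y-m(X)\vert f_X(X)$ and applying H\"older's inequality with the stated exponents. The class is thus Glivenko--Cantelli (bracketing with monotone-in-$z$ pieces), and the uniform law of large numbers follows, giving the limit $C(w)$.

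For step (i), we write
\begin{align*}
\hat\epsilon_i\hat f_X(X_i) = \frac{1}{(n-1)a^q}\sum_{j\neq i}K_{ij}(Y_i-Y_j)
\end{align*}
and decompose $\hat R_n$ into:
\begin{itemize}[leftmargin=*]
\item[(a)] the $U$-process $n^{-1}\sum_i \hat\epsilon_i\hat f_X(X_i)1_x(X_i)\mathcal{P}1_z(Z_i)$, which uses the true projection;
\item[(b)] the projection error $n^{-1}\sum_i\hat\epsilon_i\hat f_X(X_i)1_x(X_i)[\hat{\mathcal{P}}_n-\mathcal{P}]1_z(Z_i)$.
\end{itemize}
For (a), the Hoeffding decomposition applies: its projection equals the infeasible process up to kernel bias. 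This bias is $O(a^{\min(\tau,\lambda_1)})$ in $L^1$ uniformly, since under $H_1$ we only need smoothness of $m$, $f_X$, and $\mathbb{E}(Y\vert X=\cdot)$ near $X_i$. The degenerate part is $O_p((n a^{q/2})^{-1})$ uniformly by maximal inequalities for degenerate $U$-processes over VC classes, exactly as in Theorem \ref{thm.null}. Both are $o_p(1)$; only consistency is needed here, not $o_p(n^{-1/2})$.

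For (b), we bound
\begin{align*}
\sup_z\vert \hat f_W\hat\Delta_n^{-1}\hat G_n - f_W\Delta^{-1}G\vert
\end{align*}
using the uniform consistency of $\hat f_W$, of $\hat G_n(z;x)$ (uniform in $z$, dominated by $\hat f_X$), and of $\hat\Delta_n$ from Lemma \ref{lemma.delta}. Combined with $n^{-1}\sum_i\vert\hat\epsilon_i\hat f_X(X_i)\vert=O_p(1)$ and Cauchy--Schwarz, this gives $o_p(1)$.

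The main obstacle is term (b): controlling $\hat\Delta_n^{-1}$ where $\Delta(X)$ is small, i.e., in the tails of $X$, without trimming. The first approach is to write $\hat\Delta_n^{-1}-\Delta^{-1} = -(\hat\Delta_n-\Delta)/(\hat\Delta_n\Delta)$ and expand to second order as in Lemma \ref{lemma.delta}. The remainder is then weighted by $\Delta^{-1}(X_i)f_X^2(X_i)$, and its moments from Assumption \ref{ass.sample} are used together with a H\"older argument to absorb the random denominator on a set of probability tending to one. Since only $o_p(1)$ rather than $o_p(n^{-1/2})$ is required, the bandwidth conditions of Assumption \ref{ass.bandwidth} give ample room.
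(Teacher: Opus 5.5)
Your outline is sound, but it is organized differently from the paper. The paper reruns its null-hypothesis machinery:
\begin{itemize}
\item it writes $\hat R_n=\hat T_n-\hat S_n$ and takes $\hat T_n$ from Propositions 2--3 of \cite{delgado2001significance};
\item it replaces $(\hat\Delta_n,\hat G_n)$ by $(\Delta,G)$ in $\hat S_n$ via Lemma \ref{lemma.Shat-S};
\item it expands the remaining $S_n$, which still carries $\hat\epsilon_i\hat f_X(X_i)\hat f_W(W_i)$, as a third-order $U$-process in Lemma \ref{lemma.Sn}, where its $T_n^{(2)}$ cancels the one in $\hat T_n$.
\end{itemize}
Under $H_1$ it simply invokes the $o_p(1)$ versions \eqref{lemma.Shat-S eq2} and \eqref{lemma.Sn alt}.

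You instead put the whole true weight $1_x\mathcal P1_z$ into a second-order $U$-process. You then push every estimation error, including that of $\hat f_W$, into a plug-in remainder controlled by uniform consistency. For a consistency result this is lighter and more transparent: the $T_n^{(2)}$ cancellation is replaced by the identity $\mathbb{E}[\mathcal P1_z(Z)\vert X]=0$, which kills the $Y_j$-part of the second H\'{a}jek term. The price is that a bound of the form (average)$\times$(sup of plug-in error) is of the order of $\sup\vert\hat f_W-f_W\vert$, which exceeds $n^{-1/2}$. So your route cannot be reused for Theorems \ref{thm.null}, \ref{thm.local} and \ref{thm.boot}, whereas the paper's lemmas serve all four results.

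Two points need tightening.

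First, under $H_1$ the H\'{a}jek projection of (a) is not just the infeasible process plus kernel bias. The second projection $n^{-1}\sum_j\mathbb{E}_j[K_{ij}1_x(X_i)\mathbb{E}\{(\mathbb{E}(Y\vert W_i)-m(X_i))\mathcal P1_z(Z_i)\vert X_i\}]$ is nonzero, with mean close to $C(w)$; it vanishes identically only under $H_0$. You need a uniform law of large numbers to show that its centered version is uniformly $o_p(1)$.

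Second, your handling of $\hat\Delta_n^{-1}$ is the same device as the paper's treatment of $S_{n12}$ in the proof of Lemma \ref{lemma.Shat-S}, and it is equally informal. Absorbing the denominator ``on a set of probability tending to one'' needs $\max_i\vert\hat\Delta_n(X_i)/\Delta(X_i)-1\vert=o_p(1)$. By Lemma \ref{lemma.delta}, this means $r_n\max_i\kappa^{-1}(X_i)=o_p(1)$, where $\kappa=\Delta/f_X^2$ and $r_n$ is the rate in that lemma.
\begin{itemize}
\item Assumption \ref{ass.bandwidth} gives $r_n=o(n^{-1/4})$.
\item Assumption \ref{ass.sample} gives $\max_i\kappa^{-1}(X_i)=o_p(n^{1/s})$ with $s=2+(4+2\delta_2)/\delta_1$.
\end{itemize}
So the argument goes through when $\delta_1\le 2+\delta_2$. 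Otherwise the few observations with very small $\kappa(X_i)$ require a separate bound, which the paper also only asserts.
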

The correspondence between local and global power is straightforward. The local power is driven by $\mu(\cdot)$, and the global power, in turn, depends on the conditional $L^2$ inner product of $\mathbb{E}(Y\vert W)-\mathbb{E}(Y\vert X)$ and $\mathcal{P}1_z(Z)$ given $X$. More specifically, once
\begin{align}\label{thm.alt eq}
    \mathbb{P}\left\{\left[\mathbb{E}\left(Y\vert W\right)-\mathbb{E}\left(Y\vert X\right)\right]=\gamma f_W(W)\right\}<1
\end{align}
is satisfied for any constant $\gamma$, the unconditional expectation $C(w)\neq 0$. This in turn implies that, as $n$ goes to infinity, the constructed empirical process converges uniformly in probability to a non-vanishing function $C(\cdot)$, which further leads the test statistics to diverge to positive infinity in probability. However, we do not regard the potential loss of power along certain directions of the alternative space as a primary empirical concern. As shown in \cite{bierens1997asymptotic}, ICM–type tests are admissible under suitable regularity conditions, so that no uniformly more powerful procedure exists within a broad class of alternatives. These trade–offs are detailed in the simulation studies in Section \ref{sec.Simulation}.

\section{Multiplier bootstrap}\label{sec.boot}
The case-dependent asymptotic limiting process established in Section \ref{sec.Asy} necessitates the consideration of bootstrap procedures. In the traditional literature, two distinct bootstrap schemes are frequently considered. The residual-based bootstrap employed in \cite{koul1994bootstrapping} generates bootstrap samples by resampling from the empirical distribution of the estimated residuals (possibly after a smoothing step), and has been widely employed in classical nonparametric testing procedures. In contrast, the wild bootstrap, as mentioned in \cite{mammen1993bootstrap}, constructs bootstrap errors by multiplying the estimated residuals with independent random multipliers of zero mean and unit variance, thereby creating bootstrap samples without explicitly resampling from the residual distribution. 

Although traditional bootstrap methods may be justified after careful theoretical verification, we consider more computationally efficient alternatives. Motivated by \cite{van1996weak}, the uniform decomposition in \eqref{thm.eq null} inspires an intuitive, easy-to-implement multiplier bootstrap procedure. Towards this end, define the multiplier bootstrapped projected process as
\begin{align}\label{stat.boot}
    \hat R_n^\ast(w)=\frac{1}{n}\sum_{i=1}^n V_i\hat{\epsilon}_i\hat{f}_X(X_i)1_x(X_i)\hat{\mathcal{P}}_n1(Z_i\leq z),
\end{align}
where $\{V_i\}_{i=1}^n$ is a sequence of random variables (i.e., multipliers) that are mean zero, unit variance, and independent of the original sample $\{(Y_i,W_i^\top)^\top\}_{i=1}^n$. It is worth noting that in contrast to the multiplier bootstrap proposed in \cite{delgado2001significance}, which is motivated by the uniform decomposition in \eqref{stat.delgado}, our multiplier bootstrap version $\hat R_n^\ast(w)$ does not involve a random denominator thanks to the novel projection, and thus it is not necessary to impose the stringent compact support assumption $\mathbb{P}(f_X(X)>\nu)=1$ for some constant $\nu>0$. This fact allows us to
accommodate important distributions for X with unbounded support, like the t and normal.

\begin{theorem}\label{thm.boot}
Suppose that Assumptions \ref{ass.sample}--\ref{ass.bandwidth} hold. Under the null hypothesis $H_0$ or the sequence of local alternatives $H_{1n}$,
\begin{align*}
\sqrt{n}\hat R^{\ast}_{n}(\cdot) \underset{\ast}{\Longrightarrow } R_{\infty}(\cdot)\quad \text{in probability},
\end{align*}
where \textquotedblleft$\underset{\ast}{\Longrightarrow }$\textquotedblright denotes weak convergence under the bootstrap law, i.e., conditional on the original sample $\{(Y_i,W_i^\top)^\top\}_{i=1}^n$, and $R_{\infty}(\cdot)$ is the centered Gaussian process as described in Theorem \ref{thm.null}. Under the alternative hypothesis $H_1$,
\begin{align*}
\sqrt{n}\hat R^{\ast}_{n}(\cdot) \underset{\ast}{\Longrightarrow } R^1_{\infty}(\cdot)\quad \text{in probability},
\end{align*}
where $R^1_{\infty}(\cdot)$ is a centered Gaussian process that is different from $R_{\infty}(\cdot)$. 
\end{theorem}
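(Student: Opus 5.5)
Our plan for Theorem \ref{thm.boot} is to reduce the bootstrap process to a multiplier process built on the \emph{true} summands $\zeta$, and then invoke the conditional multiplier central limit theorem (Theorem 2.9.6 / 2.9.7 in \cite{van1996weak}). Write $\hat\zeta_i(w)=\hat{\epsilon}_i\hat{f}_X(X_i)1_x(X_i)\hat{\mathcal{P}}_n1_z(Z_i)$ and $\zeta_i(w)=\zeta(Y_i,X_i,Z_i;x,z)$. Then
\begin{align*}
\sqrt{n}\hat R^\ast_n(w)=\frac{1}{\sqrt n}\sum_{i=1}^nV_i\zeta_i(w)+\frac{1}{\sqrt n}\sum_{i=1}^nV_i\left[\hat\zeta_i(w)-\zeta_i(w)\right].
\end{align*}

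\textbf{Step 1: the leading term.} The class $\{\zeta(\cdot;x,z):w\in\mathbb R^{q+p}\}$ is $P$-Donsker. It is the product of the fixed function $\epsilon f_X(X)$ with the VC indicator classes $1_x(X)$ and $1_z(Z)$, minus $\epsilon f_X(X)1_x(X)f_W(W)\Delta^{-1}(X)G(z;X)$. The latter is monotone in $z$ with an envelope controlled through $\Delta^{-1}f_X^2$, since $|G|\le f_X$ and $f_W=f_Xf_{Z|X}$. Its square-integrable envelope follows from Assumption \ref{ass.sample} and H\"older's inequality. This is the same Donsker property used in Theorem \ref{thm.null}. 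The multiplier CLT then gives $n^{-1/2}\sum V_i\zeta_i\underset{\ast}{\Longrightarrow}R_\infty$ in probability, with covariance $\mathbb E[\zeta\zeta']$. Under $H_{1n}$ the law of $Y$ changes by $O(n^{-1/2})$, but because $\Psi$ is bounded the covariance is unchanged in the limit. The multipliers remove the mean, so no shift $\mu$ appears.

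\textbf{Step 2: the remainder.} We must show $\sup_w|n^{-1/2}\sum V_i(\hat\zeta_i-\zeta_i)|=o_{p^\ast}(1)$ in probability. Conditional on the data, this is a centered process with variance $n^{-1}\sum_i(\hat\zeta_i(w)-\zeta_i(w))^2$. It therefore suffices to show
\begin{align*}
\max_i\sup_w|\hat\zeta_i-\zeta_i|\to0
\end{align*}
in a suitable averaged sense, namely $n^{-1}\sum_i\sup_w(\hat\zeta_i-\zeta_i)^2=o_p(1)$. We then combine this with a chaining or maximal inequality for multiplier processes indexed by monotone-in-$(x,z)$ pieces, or alternatively bound the sup over the $n^{q+p}$ sample grid points. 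We decompose $\hat\zeta_i-\zeta_i$ telescopically into $(\hat\epsilon_i-\epsilon_i)\hat f_X\cdots$, $\epsilon_i(\hat f_X-f_X)\cdots$, and $\epsilon_if_X1_x(\hat{\mathcal P}_n-\mathcal P)1_z$. Each piece is handled with the uniform consistency rates for $\hat f_X$, $\hat m\hat f_X$, $\hat f_W$, $\hat G_n$, and $\hat\Delta_n$ implied by Assumption \ref{ass.bandwidth}, in particular Lemma \ref{lemma.delta} for $\hat\Delta_n$.

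\textbf{Main obstacle.} The hard part is the factor $\hat\Delta_n^{-1}(X_i)$ in the projection term. Because $\Delta(X)$ can approach zero in the tails, uniform convergence of $\hat\Delta_n$ does not directly transfer to $\hat\Delta_n^{-1}$. We would write
\begin{align*}
\hat\Delta_n^{-1}-\Delta^{-1}=-\Delta^{-2}(\hat\Delta_n-\Delta)+\Delta^{-2}\hat\Delta_n^{-1}(\hat\Delta_n-\Delta)^2,
\end{align*}
which is the second-order expansion indicated after Assumption \ref{ass.bandwidth}. Weighting by $f_X^2$, which is always present through $\hat f_X\hat f_W\hat G_n$, turns these into moments of $\Delta^{-1}f_X^2$. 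The high moment $2+(4+2\delta_2)/\delta_1$ in Assumption \ref{ass.sample}, paired through H\"older with the $2+\delta_1+\delta_2$ moment of $\epsilon$, makes the averaged squared error vanish. Only an averaged (not uniform-in-$i$) bound is needed here, which is what makes the conditional variance argument work without trimming.

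\textbf{Under $H_1$.} The same decomposition applies, except that now $\epsilon_i=Y_i-m(X_i)$ is no longer conditionally mean zero given $W$. The multiplier CLT still yields a centered Gaussian limit $R^1_\infty$ with covariance $\mathbb E[\zeta\zeta']$ computed under the alternative law, which differs from that of $R_\infty$. This limit is finite, so the bootstrap critical values stay bounded while $CvM_n$ and $KS_n$ diverge by Theorem \ref{thm.alt}.
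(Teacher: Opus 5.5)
Your argument is sound in outline but handles the key reduction by a different route from the paper. Both proofs aim at the same intermediate statement, $\sup_w\vert \hat R_n^\ast(w)-n^{-1}\sum_{i=1}^nV_i\zeta_i(w)\vert=o_p(n^{-1/2})$, followed by the multiplier CLT for the Donsker class $\{\zeta(\cdot;x,z)\}$.

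The paper gets there by splitting $\hat R_n^\ast=\hat T_n^\ast-\hat S_n^\ast$ and re-running, in bootstrap form, the Hoeffding/H\'ajek expansions of the underlying $U$-processes (Lemmas \ref{lemma.Shat-S boot}, \ref{lemma.Sn boot}, \ref{lemma.Unphi1 boot}, \ref{lemma.Unpsi1} and Lemma 2 of \cite{delgado2001significance}). There the zero-mean multiplier attached to index $i$ annihilates the projections onto $\chi_j$ and $\chi_k$. This is why the $T_n^{(2)}$ term present in Theorem \ref{thm.null} drops out of the bootstrap.

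You instead condition on the data and observe that the remainder is a centered multiplier process whose conditional variance is the empirical $L^2$ norm of the plug-in error. So you need only consistency of $\hat f_X$, $\hat m\hat f_X$, $\hat f_W$, $\hat G_n$ and $\hat\Delta_n$, not $o_p(n^{-1/2})$ expansions. The disappearance of the estimation effect from the bootstrap, and hence the need for the projection to match the limit of Theorem \ref{thm.null}, then becomes automatic and transparent.

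The price is a genuinely uniform-in-$w$ maximal inequality over data-dependent classes, which the paper's route sidesteps. Concretely you need three things:
\begin{enumerate}
\item Uniform entropy bounds for $z\mapsto \hat G_n(z;X_i)$. These are not monotone when $L$ is a higher-order kernel taking negative values, so use a bounded-variation split.
\item A Dudley-type bound whose diameter is the vanishing quantity $\sup_w n^{-1}\sum_i(\hat\zeta_i-\zeta_i)^2$.
\item Covering numbers taken relative to the $O_p(1)$ envelopes of $\{\hat\zeta_\cdot(w)\}$ and $\{\zeta_\cdot(w)\}$ separately. You cannot assume the difference class has polynomial covering numbers relative to its own small envelope.
\end{enumerate}

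Your alternative of bounding the supremum over the sample grid does not suffice as stated. The process is continuous in $z$ through $\hat G_n$ and $G$, and a union bound costs a $\sqrt{\log n}$ factor that must be absorbed by a rate.

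The random-denominator issue for $\hat\Delta_n^{-1}$ where $\Delta$ is small is treated in both arguments by the same second-order expansion (compare Lemma \ref{lemma.Shat-S}). It is equally delicate in each.
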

It is worth noting that, under the null hypothesis, the bootstrap version coincides with the original sample-based empirical process in both convergence rate and asymptotic distribution. By constructing the multiplier bootstrap versions of the $CvM_n$ and $KS_n$ statistics in a manner analogous to Section \ref{sec.Test},
\begin{align*}
    CvM_n^\ast = \int\left\vert \hat R^\ast_n(w)\right\vert^2F_{W_n}(dw) \quad \text{and}\quad KS_n^\ast = \sup_{w}\left\vert \hat R^\ast_n(w)\right\vert,
\end{align*}
and applying the continuous mapping theorem in a way analogous to that employed in Corollary \ref{Cor.null}, we establish that under the null hypothesis the bootstrap $CvM_n^\ast$ and $KS_n^\ast$ statistics consistently approximate the asymptotic null distributions of $CvM_n$ and $KS_n$, respectively, thereby ensuring that the empirical sizes of the two tests are theoretically close to the nominal level. Under local alternatives $H_{1n}$, however, the local power depends on $\mu(\cdot)$ introduced in Section \ref{sec.Asy}. This is because the bootstrap versions converge to the same limiting processes as under the null, whereas the original statistics converge to limiting processes with additional drift components. Therefore, taking the $CvM_n$ statistic as an example, the different limiting distributions of $CvM_n$ and its bootstrap counterpart $CvM_n^\ast$ imply that the proportion of rejecting $H_0$ will exceed the nominal level, thereby confirming the nontrivial local power of the proposed test under $H_{1n}$. As a consequence of the above analysis, the asymptotic critical value (still taking the $CvM_n$ statistic as an example) at the significance level $\alpha$ is $c^\ast_{\alpha}=\inf\{c_\alpha\in[0,\infty):\lim_{n\to\infty}\mathbb{P}_n^\ast(nCvM_n^\ast>c_\alpha)=\alpha\}$, where $\mathbb{P}_n^\ast$ is the bootstrap probability under the bootstrap law. In practice, $c^\ast_{\alpha}$ can be approximated as $c^\ast_{n,\alpha} = \{nCvM_n^\ast\}_{B(1-\alpha)}$, the $B(1-\alpha)$-th order statistic for $B$ replicates $\{nCvM_{n,b}^\ast\}_{b=1}^B$ %of $CvM_n^\ast$ 
and we reject $H_0$ if $nCvM_n>c^\ast_{n,\alpha}$. Finally, under the alternative hypothesis $H_1$, when the condition \eqref{thm.alt eq} is satisfied, the consistency of our bootstrap statistics against $H_1$ follows from the fact that the original statistics diverge to infinity in probability, whereas the bootstrap versions remain stochastically bounded.

\section{Testing conditional independence}\label{sec.CI}
The projection techniques developed in this article to handle the nonparametric estimation effect can readily be extended to testing other restrictions on regression curves, thereby allowing ICM-type tests, which are particularly advantageous in certain settings, to be applied to these important problems. A particularly important example is the problem of testing conditional independence, which has attracted considerable attention in recent years and has been emphasized as a key issue in, among others, \cite{su2014testing}, \cite{huang2016flexible}, \cite{wang2018characteristic}, \cite{li2020nonparametric}, \cite{shah2020hardness}, \cite{neykov2021minimax} and \cite{cai2022distribution}. Formally, we are interested in testing whether $Y$ is independent of $Z$ given $X$, denoted by $H_0^{CI}:Y\perp Z\vert X$. Using the previous notations, we consider testing
\begin{align*}
    H_0^{CI}:\mathbb{E}\left[1_y(Y)\vert W\right] = \mathbb{E}\left[1_y(Y)\vert X\right]\quad a.s. \text{ for all } y\in \mathbb{R},
\end{align*}
which is further equivalent to
\begin{align*}
    H_0^{CI}:\mathbb{E}\left[\epsilon(y)f_X(X)1_w(W)\right] = 0\text{ for all } (y,w)\in \mathbb{R}^{1+q+p},
\end{align*}
where $\epsilon(y) = 1_y(Y)-\mathbb{E}[1_y(Y)\vert X] = 1_y(Y)-F_{Y\vert X}(y\vert X)$, with $F_{Y\vert X}(y\vert x)$ the conditional distribution function of $Y$ given $X$. Similar to $T_n(w)$ for the case of significance testing in the conditional mean, \cite{delgado2001significance} have proposed to use the following unprojected $U$-process to test $H_0^{CI}$:
\begin{align*}
    \hat L_n(y,w) = &\frac{1}{n}\sum_{i=1}^n\hat{\epsilon_i}(y)\hat{f}_X(X_i)1_w(W_i)\\
    =&\frac{1}{n(n-1)}\sum_{i=1}^n\sum_{j=1,j\neq i}^n\frac{1}{a^q}K\left(\frac{X_i-X_j}{a}\right)\left[1_y(Y_i)-1_y(Y_j)\right]1_w(W_i),
\end{align*}
which is shown to satisfy
\begin{align}\label{stat.delgado ci}
    \sup_w\left\vert \hat L_n(y,w)-\frac{1}{n}\sum_{i=1}^n\epsilon_i(y)f_X(X_i)1_x(X_i)\left[1_z(Z_i)-F_{Z\vert X}(z\vert X_i)\right]\right\vert = o_p\left(n^{-1/2}\right).
\end{align}
Motivated by the uniform decomposition in \eqref{stat.delgado ci} and the expected case-dependent asymptotic limiting process, we would need to obtain critical values for the tests based on $\hat L_n(\cdot)$ via a multiplier bootstrap scheme similar to that described in Section \ref{sec.boot}. This, however, requires the nonparametric estimation of $F_{Z\vert X}(z|X_i)$. The random denominator problem leads either to overly stringent assumptions on the distribution of $X$ or to substantial additional technical difficulties. Similar to the discussion in Section \ref{sec.Test}, when ICM-type tests are used to test conditional independence, \textquotedblleft nonparametric estimation effect\textquotedblright{} arises. In particular, these difficulties stem from the presence of the term
\begin{align*}
    \frac{1}{n}\sum_{i=1}^n\epsilon_i(y)f_X(X_i)1_x(X_i)F_{Z\vert X}(z\vert X_i).
\end{align*}

Thanks to the projection-based principle proposed in previous sections, the following projected empirical process can be defined to solve the problems discussed above:
\begin{align*}
    \hat I_n(y,w) = \frac{1}{n}\sum_{i=1}^n\hat{\epsilon}_i(y)\hat{f}_X(X_i)1_x(X_i)\hat{\mathcal{P}}_n1_z(Z_i),
\end{align*}
where $\hat{\epsilon}_i(y)=1_y(Y_i)-\hat{F}_{Y\vert X}(y\vert X_i)$ is the estimator for $\epsilon_i(y)$, with the leave-one-out nonparametric estimator
\begin{align*}
    \hat{F}_{Y\vert X}(y\vert X_i) = \frac{1}{\hat{f}_X(X_i)}\frac{1}{(n-1)a^q}\sum_{j=1,j\neq i}^n K\left(\frac{X_i-X_j}{a}\right)1_y(Y_j).
\end{align*}
All other quantities are defined as before. The analysis of $\hat I_n(y,w)$ is identical to $\hat R_n(w)$ but with $\hat{\epsilon}_i$ substituted by $\hat{\epsilon}_i(y)$. Thus, reasoning as in the Section \ref{sec.Asy}, we have 
\begin{align}\label{stat.CI}
    \sup_{w}\left\vert \hat I_n(y,w)-\frac{1}{n}\sum_{i=1}^n\epsilon_i(y)f_X(X_i)1_x(X_i)\mathcal{P}1_z(Z_i)\right\vert = o_p\left(n^{-1/2}\right).
\end{align}
This leads to the fact that our constructed projected $U$-process $\sqrt n\hat I_n(\cdot)$ weakly converges to a centered Gaussian process $R_\infty^{CI}(\cdot)$ with covariance structure 
\begin{align*}
    \mathbb{E}\left[\epsilon(y)\epsilon(y^\prime)f^2_X(X)1_x(X)1_{x^\prime}(X)\mathcal{P}1_z(Z)\mathcal{P}1_{z^\prime}(Z)\right]
\end{align*}
at the parametric rate under the null. Correspondingly, under the alternative hypothesis, the constructed empirical process converges uniformly in probability to
\begin{align*}
    C^{CI}(y,w) = \mathbb{E}\left\{\left[\mathbb{E}\left(1_y(Y)\vert W\right)-\mathbb{E}\left(1_y(Y)\vert X\right)\right]f_X(X)1_x(X)\mathcal{P}1_z(Z)\right\}.
\end{align*}
As a result, the test is consistent as long as for any constant $\gamma$, the following requirement
\begin{align}\label{stat.alt CI}
    \mathbb{P}\left\{\left[\mathbb{E}\left(1_y(Y)\vert W\right)-\mathbb{E}\left(1_y(Y)\vert X\right)\right]=\gamma f_W(W)\right\}<1
\end{align}
holds. In addition, this condition also characterizes the nontrivial asymptotic local power of the test for conditional independence. The widely used CvM and KS test statistics can be constructed as 
\begin{align*}
   CvM^{CI}_n = \int\left\vert \hat I_n(y,w)\right\vert^2F_{Y_n,W_n}(dy,dw)  \quad \text{and}\quad KS^{CI}_n = \sup_{(y,w)}\left\vert \hat I_n(y,w)\right\vert,
\end{align*}
so that large values provide evidence against the null in favor of the alternative hypothesis, where $F_{Y_n,W_n}(\cdot)$ is a random measure that converges in probability to $F_{Y,W}(\cdot)$ that is absolutely continuous with respect to the Lebesgue measure on $\mathbb{R}^{1+q+p}$. To implement the proposed test for conditional independence, still motivated by \eqref{stat.CI}, we construct the multiplier bootstrap version of $\hat I_n(y,w) $ as follows:
\begin{align*}
    \hat I_n^\ast(y,w) = \frac{1}{n}\sum_{i=1}^nV_i\hat{\epsilon}_i(y)\hat{f}_X(X_i)1_x(X_i)\hat{\mathcal{P}}_n1_z(Z_i)
\end{align*}
and use it to obtain the critical values. The desired asymptotic properties follow from the following result:
\begin{align*}
    \sup_{w}\left\vert \hat I^\ast_n(y,w)-\frac{1}{n}\sum_{i=1}^nV_i\epsilon_i(y)f_X(X_i)1_x(X_i)\mathcal{P}1_z(Z_i)\right\vert=o_p\left(n^{-1/2}\right).
\end{align*}
The corresponding multiplier bootstrap $CvM^{CI,\ast}_n$ and $KS^{CI,\ast}_n$ test statistics can be defined in a way analogous to that described in Section \ref{sec.boot}, and can be shown, under the null and under local alternatives, to provide good approximations to the sup norm and the squared norm of $R_\infty^{CI}(\cdot)$, respectively. Under fixed alternatives, the bootstrap versions of these statistics remain stochastically bounded, whereas the original statistics diverge to infinity whenever condition \eqref{stat.alt CI} is satisfied, thereby establishing the validity of the bootstrap procedure. The implementation of the resulting test can therefore follow the same steps as those outlined in Section \ref{sec.boot}. More Simulation results are presented in Section \ref{sec.AppendixA}.

\section{Simulation study}\label{sec.Simulation}
In this section, a Monte Carlo study is conducted to evaluate the finite-sample performance of the proposed nonparametric projection method. The main objectives are as follows. First, the proposed CvM test based on projection (denoted by PJ) is compared with the non-projected CvM test of \cite{delgado2001significance} (denoted by DM) in terms of empirical size and power.\footnote{For brevity, we have focused on comparing the CvM-type statistics. The comparison using the KS-type statistics is similar and available upon request.} Second, the robustness of the test performance with respect to bandwidth choice is examined under various bandwidth settings. Third, the behavior of the projection-based test is investigated across different covariate distributions and alternative frequencies. 

In the simulation design presented in the main text, the covariates $X$ and $Z$ are both one-dimensional and follow a specified dependence structure $X=Z+U$, where $Z$ and $U$ are independent $N(0,1)$ variables. It is worth emphasizing that our procedures are not restricted to this case. The tests are applicable when $X$ and $Z$ are multivariate, and to illustrate this property, we also consider designs with $p=2$ and $q=2$. The proposed tests also apply when $X$ and $Z$ follow the uniform distribution $U(0,1)$. Moreover, when some parts of Assumptions \ref{ass.sample}--\ref{ass.bandwidth} are violated, for instance when both covariates are $N(0,1)$, the size accuracy of the test is affected but remains within an acceptable range and compares favorably with the DM test, which suggests that our procedure has a broader practical scope. For the sake of brevity, and because the main conclusions are similar, these additional results are reported in Appendix \ref{sec.AppendixA}, whereas the main text focuses on the case $X=Z+U$ with $Z$ and $U$ being independent $N(0,1)$. %$Z\sim N(0,1)$, $U\sim N(0,1)$, and independent. 
To compare the performance of the tests under alternatives with different frequencies, the response variable $Y$ is generated under the following data-generating process (DGP),
\begin{align*}
    Y = 1+X+\sin\left(\gamma Z\right)+\epsilon,
\end{align*}
where $\epsilon\sim N(0,1)$ and is independent of $(X,Z)$. Under the null hypothesis, the parameter $\gamma$ is fixed at zero, whereas under the alternative hypothesis, the frequency is determined by the value of $\gamma$. Moreover, to assess, in the spirit of \cite{bierens1997asymptotic}, the admissibility properties of ICM-type tests, we also report in Appendix \ref{sec.AppendixA} simulation results for designs with interaction effects between covariates. In particular, we consider alternatives of the form $X(Z^2-1)$ and examine the power of the proposed procedures in this case. In addition, the adoption of the kernel function and method of bandwidth selection follows those in \cite{delgado2001significance}, where an $l$-th order Epanechnikov kernel is employed, and the bandwidth is determined by the rule-of-thumb with coefficient $c\in\{0.5,1,2\}$ to assess the sensitivity to bandwidth. The order of the kernel and the explicit bandwidth formula are consistent with the requirements of Assumption \ref{ass.structual}. Specifically, $b=cn^{-1/3}$ is used for $l=2$, and $b=cn^{-1/6}$ is adopted for $l=4$. The results were examined under sample sizes $n\in\{200,400\}$ with the nominal significance level set at $\alpha\in\{0.01,0.05,0.1\}$, and all outcomes were obtained from $1000$ Monte Carlo replications and $199$ multiplier bootstrap samples.

\begin{table}[htbp]
  \centering
  \setlength{\tabcolsep}{5pt}
  \renewcommand{\arraystretch}{1.1}
  \caption{Significance test when $X=Z+U$, $Z\sim N(0,1)$ and $U\sim N(0,1)$}
  \begin{tabular}{llcccccccc}
    \toprule
    $c=0.5$ & & \multicolumn{3}{c}{\textbf{$n=200$}} & \multicolumn{3}{c}{\textbf{$n=400$}} \\
    \cmidrule(lr){3-5}\cmidrule(lr){6-8}
    & \textbf{$\alpha/\gamma$} & 0 & 5 & 10 & 0 & 5 & 10\\
    \midrule
    \multirow{3}{*}{DM}
      & 0.10  & 0.144 & 0.413 & 0.200 & 0.137 & 0.836 & 0.225\\
      & 0.05  & 0.072 & 0.212 & 0.091 & 0.070 & 0.583 & 0.131\\
      & 0.01  & 0.017 & 0.051 & 0.026 & 0.022 & 0.175 & 0.038\\
    \midrule
    \multirow{3}{*}{PJ}
      & 0.10  & 0.113 & 0.456 & 0.211 & 0.110 & 0.758 & 0.342\\
      & 0.05  & 0.055 & 0.283 & 0.126 & 0.051 & 0.588 & 0.221\\
      & 0.01  & 0.011 & 0.106 & 0.042 & 0.018 & 0.312 & 0.081\\
    \midrule
    $c=1$ & & \multicolumn{3}{c}{\textbf{$n=200$}} & \multicolumn{3}{c}{\textbf{$n=400$}} \\
    \cmidrule(lr){3-5}\cmidrule(lr){6-8}
    & \textbf{$\alpha/\gamma$} & 0 & 5 & 10 & 0 & 5 & 10\\
    \midrule
    \multirow{3}{*}{DM}
      & 0.10  & 0.133 & 0.419 & 0.172 & 0.131 & 0.856 & 0.215\\
      & 0.05  & 0.065 & 0.210 & 0.095 & 0.068 & 0.590 & 0.126\\
      & 0.01  & 0.018 & 0.043 & 0.018 & 0.018 & 0.171 & 0.030\\
    \midrule
    \multirow{3}{*}{PJ}
      & 0.10  & 0.096 & 0.453 & 0.219 & 0.095 & 0.775 & 0.355\\
      & 0.05  & 0.043 & 0.307 & 0.119 & 0.045 & 0.614 & 0.227\\
      & 0.01  & 0.007 & 0.129 & 0.040 & 0.009 & 0.342 & 0.096\\
    \midrule
    $c=2$ & & \multicolumn{3}{c}{\textbf{$n=200$}} & \multicolumn{3}{c}{\textbf{$n=400$}} \\
    \cmidrule(lr){3-5}\cmidrule(lr){6-8}
    & \textbf{$\alpha/\gamma$} & 0 & 5 & 10 & 0 & 5 & 10\\
    \midrule
    \multirow{3}{*}{DM}
      & 0.10  & 0.133 & 0.442 & 0.186 & 0.144 & 0.861 & 0.234\\
      & 0.05  & 0.062 & 0.213 & 0.089 & 0.069 & 0.599 & 0.127\\
      & 0.01  & 0.019 & 0.040 & 0.021 & 0.022 & 0.186 & 0.035\\
    \midrule
    \multirow{3}{*}{PJ}
      & 0.10  & 0.086 & 0.484 & 0.201 & 0.096 & 0.839 & 0.329\\
      & 0.05  & 0.040 & 0.340 & 0.112 & 0.048 & 0.674 & 0.189\\
      & 0.01  & 0.012 & 0.126 & 0.032 & 0.011 & 0.379 & 0.081\\
    \bottomrule
  \end{tabular}
  \label{tab:SNsind1p1q1}
\end{table}

\begin{table}[htbp]
  \centering
  \setlength{\tabcolsep}{5pt}
  \renewcommand{\arraystretch}{1.1}
  \caption{Significance test when $X\sim N(0,1)$ and $Z\sim N(0,1)$}
  \begin{tabular}{llcccccccc}
    \toprule
    $c=0.5$ & & \multicolumn{3}{c}{\textbf{$n=200$}} & \multicolumn{3}{c}{\textbf{$n=400$}} \\
    \cmidrule(lr){3-5}\cmidrule(lr){6-8}
    & \textbf{$\alpha/\gamma$} & 0 & 5 & 10 & 0 & 5 & 10\\
    \midrule
    \multirow{3}{*}{DM}
      & 0.10  & 0.164 & 0.428 & 0.196 & 0.147 & 0.729 & 0.257\\
      & 0.05  & 0.084 & 0.249 & 0.098 & 0.079 & 0.542 & 0.132\\
      & 0.01  & 0.020 & 0.070 & 0.030 & 0.024 & 0.230 & 0.037\\
    \midrule
    \multirow{3}{*}{PJ}
      & 0.10  & 0.109 & 0.346 & 0.256 & 0.115 & 0.648 & 0.411\\
      & 0.05  & 0.052 & 0.167 & 0.157 & 0.061 & 0.394 & 0.264\\
      & 0.01  & 0.012 & 0.056 & 0.040 & 0.014 & 0.169 & 0.114\\
    \midrule
    $c=1$ & & \multicolumn{3}{c}{\textbf{$n=200$}} & \multicolumn{3}{c}{\textbf{$n=400$}} \\
    \cmidrule(lr){3-5}\cmidrule(lr){6-8}
    & \textbf{$\alpha/\gamma$} & 0 & 5 & 10 & 0 & 5 & 10\\
    \midrule
    \multirow{3}{*}{DM}
      & 0.10  & 0.143 & 0.414 & 0.180 & 0.135 & 0.743 & 0.249\\
      & 0.05  & 0.078 & 0.241 & 0.092 & 0.066 & 0.546 & 0.128\\
      & 0.01  & 0.019 & 0.068 & 0.022 & 0.020 & 0.227 & 0.041\\
    \midrule
    \multirow{3}{*}{PJ}
      & 0.10  & 0.101 & 0.299 & 0.245 & 0.107 & 0.642 & 0.406\\
      & 0.05  & 0.051 & 0.175 & 0.151 & 0.058 & 0.460 & 0.280\\
      & 0.01  & 0.013 & 0.046 & 0.059 & 0.013 & 0.207 & 0.126\\
    \midrule
    $c=2$ & & \multicolumn{3}{c}{\textbf{$n=200$}} & \multicolumn{3}{c}{\textbf{$n=400$}} \\
    \cmidrule(lr){3-5}\cmidrule(lr){6-8}
    & \textbf{$\alpha/\gamma$} & 0 & 5 & 10 & 0 & 5 & 10\\
    \midrule
    \multirow{3}{*}{DM}
      & 0.10  & 0.146 & 0.401 & 0.165 & 0.131 & 0.744 & 0.254\\
      & 0.05  & 0.066 & 0.226 & 0.088 & 0.065 & 0.539 & 0.117\\
      & 0.01  & 0.018 & 0.062 & 0.023 & 0.019 & 0.237 & 0.041\\
    \midrule
    \multirow{3}{*}{PJ}
      & 0.10  & 0.100 & 0.388 & 0.234 & 0.105 & 0.786 & 0.390\\
      & 0.05  & 0.052 & 0.238 & 0.132 & 0.051 & 0.593 & 0.273\\
      & 0.01  & 0.012 & 0.067 & 0.046 & 0.017 & 0.282 & 0.110\\
    \bottomrule
  \end{tabular}
  \label{tab:SNsind2p1q1}
\end{table}

\begin{table}[htbp]
  \centering
  \setlength{\tabcolsep}{5pt}
  \renewcommand{\arraystretch}{1.1}
  \caption{Significance test when $X\sim U(0,1)$ and $Z\sim U(0,1)$}
  \begin{tabular}{llcccccccc}
    \toprule
    $c=0.5$ & & \multicolumn{3}{c}{\textbf{$n=200$}} & \multicolumn{3}{c}{\textbf{$n=400$}} \\
    \cmidrule(lr){3-5}\cmidrule(lr){6-8}
    & \textbf{$\alpha/\gamma$} & 0 & 5 & 10 & 0 & 5 & 10\\
    \midrule
    \multirow{3}{*}{DM}
      & 0.10  & 0.126 & 0.999 & 0.437 & 0.125 & 1.000 & 0.678\\
      & 0.05  & 0.058 & 0.983 & 0.303 & 0.054 & 1.000 & 0.569\\
      & 0.01  & 0.015 & 0.772 & 0.147 & 0.015 & 0.999 & 0.366\\
    \midrule
    \multirow{3}{*}{PJ}
      & 0.10  & 0.105 & 0.996 & 0.437 & 0.105 & 1.000 & 0.739\\
      & 0.05  & 0.054 & 0.980 & 0.335 & 0.055 & 1.000 & 0.644\\
      & 0.01  & 0.010 & 0.840 & 0.159 & 0.019 & 0.999 & 0.433\\
    \midrule
    $c=1$ & & \multicolumn{3}{c}{\textbf{$n=200$}} & \multicolumn{3}{c}{\textbf{$n=400$}} \\
    \cmidrule(lr){3-5}\cmidrule(lr){6-8}
    & \textbf{$\alpha/\gamma$} & 0 & 5 & 10 & 0 & 5 & 10\\
    \midrule
    \multirow{3}{*}{DM}
      & 0.10  & 0.120 & 0.999 & 0.461 & 0.117 & 1.000 & 0.700\\
      & 0.05  & 0.058 & 0.988 & 0.342 & 0.055 & 1.000 & 0.592\\
      & 0.01  & 0.011 & 0.796 & 0.158 & 0.011 & 1.000 & 0.376\\
    \midrule
    \multirow{3}{*}{PJ}
      & 0.10  & 0.095 & 0.997 & 0.463 & 0.100 & 1.000 & 0.737\\
      & 0.05  & 0.052 & 0.988 & 0.332 & 0.061 & 1.000 & 0.645\\
      & 0.01  & 0.015 & 0.872 & 0.157 & 0.018 & 1.000 & 0.449\\
    \midrule
    $c=2$ & & \multicolumn{3}{c}{\textbf{$n=200$}} & \multicolumn{3}{c}{\textbf{$n=400$}} \\
    \cmidrule(lr){3-5}\cmidrule(lr){6-8}
    & \textbf{$\alpha/\gamma$} & 0 & 5 & 10 & 0 & 5 & 10\\
    \midrule
    \multirow{3}{*}{DM}
      & 0.10  & 0.125 & 0.998 & 0.509 & 0.123 & 1.000 & 0.742\\
      & 0.05  & 0.059 & 0.992 & 0.383 & 0.059 & 1.000 & 0.636\\
      & 0.01  & 0.018 & 0.830 & 0.184 & 0.016 & 1.000 & 0.430\\
    \midrule
    \multirow{3}{*}{PJ}
      & 0.10  & 0.107 & 0.998 & 0.424 & 0.109 & 1.000 & 0.700\\
      & 0.05  & 0.063 & 0.976 & 0.286 & 0.058 & 1.000 & 0.602\\
      & 0.01  & 0.016 & 0.861 & 0.129 & 0.017 & 0.999 & 0.397\\
    \bottomrule
  \end{tabular}
  \label{tab:SNsind3p1q1}
\end{table}

Tables \ref{tab:SNsind1p1q1}--\ref{tab:SNsind3p1q1} report results for the case $p=q=1$. Table \ref{tab:SNsind1p1q1} corresponds to the design $X=Z+U$, where $Z$ and $U$ are independent $N(0,1)$ variables, which satisfies Assumptions \ref{ass.sample}--\ref{ass.bandwidth} but not the bounded-support assumption on the distribution of $X$ imposed in \cite{delgado2001significance}. Table \ref{tab:SNsind2p1q1} considers the case where $X$ and $Z$ are independent $N(0,1)$ variables, and Table \ref{tab:SNsind3p1q1} the case where $X$ and $Z$ are independent $U(0,1)$ variables. As is typically observed, both size and power improve substantially as sample size increases, demonstrating that the proposed procedure possesses desirable large-sample properties. 

For comparison with the method introduced in \cite{delgado2001significance} (named DM), the performance of each method in each case is reported in the tables. It is observed that DM tests exhibit lower accuracy, whereas the projection-based method achieves significant improvements. Regarding the power properties, Table \ref{tab:SNsind1p1q1} shows that the proposed projection-based tests (PJ) perform markedly better than the DM tests, which illustrates the suitability of our procedure for a wider class of unbounded distributions that satisfy Assumptions \ref{ass.sample}--\ref{ass.bandwidth}. In Table \ref{tab:SNsind2p1q1}, we investigate how the two tests behave when commonly used distributions that violate these assumptions are employed. We observe that the DM tests suffer from substantial size distortion, with empirical rejection frequencies well above the nominal level (oversized), whereas the PJ tests exhibit much better size accuracy. In Table \ref{tab:SNsind3p1q1}, both tests perform well, which is mainly due to the fact that the data are generated from the standard uniform distribution, a commonly used design that is fully compatible with the theoretical assumptions.

Further notable advantages of the proposed tests include reasonable robustness to bandwidth selection, as shown by results across different values of $c$, and relatively high power, as evidenced by comparisons across different values of $\gamma$. These features are characteristic of ICM-type tests in general and are naturally also reflected in the performance of the DM tests.

Finally, as further support for the theoretical results established in Section \ref{sec.CI}, we use the above DGPs to test the conditional independence assumption and the results are reported in Tables \ref{tab:CIsind1p1q1}--\ref{tab:CIsind3p1q1}. The only difference in the design is that
\begin{align*}
    Y = 1+\left[X+\sin\left(\gamma Z\right)\right]\epsilon,
\end{align*}
where the conditional independence under test pertains only to the variance structure. Note that the significance tests discussed in Sections \ref{sec.Test}--\ref{sec.boot} can also be interpreted as tests of conditional independence for the conditional mean function, and the conclusions drawn from Tables \ref{tab:CIsind1p1q1}--\ref{tab:CIsind3p1q1} are similar to those obtained above.
\begin{table}[htbp]
  \centering
  \setlength{\tabcolsep}{5pt}
  \renewcommand{\arraystretch}{1.1}
  \caption{Testing conditional independence when $\Psi(X,Z)=\sin(\gamma Z)$, $X=Z+U$, $Z\sim N(0,1)$ and $U\sim N(0,1)$}
  \begin{tabular}{llcccccccc}
    \toprule
    $c=0.5$ & & \multicolumn{3}{c}{\textbf{$n=200$}} & \multicolumn{3}{c}{\textbf{$n=400$}} \\
    \cmidrule(lr){3-5}\cmidrule(lr){6-8}
    & \textbf{$\alpha/\gamma$} & 0 & 5 & 10 & 0 & 5 & 10\\
    \midrule
    \multirow{3}{*}{DM}
      & 0.10  & 0.200 & 0.137 & 0.152 & 0.169 & 0.162 & 0.133\\
      & 0.05  & 0.105 & 0.078 & 0.096 & 0.079 & 0.088 & 0.083\\
      & 0.01  & 0.032 & 0.023 & 0.033 & 0.022 & 0.025 & 0.025\\
    \midrule
    \multirow{3}{*}{PJ}
      & 0.10  & 0.122 & 0.119 & 0.129 & 0.100 & 0.140 & 0.107\\
      & 0.05  & 0.071 & 0.066 & 0.076 & 0.048 & 0.069 & 0.061\\
      & 0.01  & 0.023 & 0.018 & 0.022 & 0.012 & 0.021 & 0.019\\
    \midrule
    $c=1$ & & \multicolumn{3}{c}{\textbf{$n=200$}} & \multicolumn{3}{c}{\textbf{$n=400$}} \\
    \cmidrule(lr){3-5}\cmidrule(lr){6-8}
    & \textbf{$\alpha/\gamma$} & 0 & 5 & 10 & 0 & 5 & 10\\
    \midrule
    \multirow{3}{*}{DM}
      & 0.10  & 0.161 & 0.128 & 0.126 & 0.170 & 0.158 & 0.119\\
      & 0.05  & 0.084 & 0.062 & 0.073 & 0.076 & 0.087 & 0.070\\
      & 0.01  & 0.024 & 0.016 & 0.024 & 0.018 & 0.025 & 0.020\\
    \midrule
    \multirow{3}{*}{PJ}
      & 0.10  & 0.102 & 0.098 & 0.103 & 0.086 & 0.117 & 0.093\\
      & 0.05  & 0.049 & 0.045 & 0.060 & 0.048 & 0.059 & 0.047\\
      & 0.01  & 0.013 & 0.012 & 0.020 & 0.010 & 0.025 & 0.015\\
    \midrule
    $c=2$ & & \multicolumn{3}{c}{\textbf{$n=200$}} & \multicolumn{3}{c}{\textbf{$n=400$}} \\
    \cmidrule(lr){3-5}\cmidrule(lr){6-8}
    & \textbf{$\alpha/\gamma$} & 0 & 5 & 10 & 0 & 5 & 10\\
    \midrule
    \multirow{3}{*}{DM}
      & 0.10  & 0.197 & 0.121 & 0.122 & 0.197 & 0.138 & 0.123\\
      & 0.05  & 0.100 & 0.055 & 0.067 & 0.096 & 0.082 & 0.062\\
      & 0.01  & 0.026 & 0.013 & 0.020 & 0.020 & 0.025 & 0.019\\
    \midrule
    \multirow{3}{*}{PJ}
      & 0.10  & 0.092 & 0.096 & 0.099 & 0.084 & 0.102 & 0.093\\
      & 0.05  & 0.044 & 0.045 & 0.050 & 0.052 & 0.057 & 0.051\\
      & 0.01  & 0.008 & 0.006 & 0.010 & 0.012 & 0.018 & 0.018\\
    \bottomrule
  \end{tabular}
  \label{tab:CIsind1p1q1}
\end{table}

\begin{table}[htbp]
  \centering
  \setlength{\tabcolsep}{5pt}
  \renewcommand{\arraystretch}{1.1}
  \caption{Testing conditional independence when $\Psi(X,Z)=\sin(\gamma Z)$, $X\sim N(0,1)$ and $Z\sim N(0,1)$}
  \begin{tabular}{llcccccccc}
    \toprule
    $c=0.5$ & & \multicolumn{3}{c}{\textbf{$n=200$}} & \multicolumn{3}{c}{\textbf{$n=400$}} \\
    \cmidrule(lr){3-5}\cmidrule(lr){6-8}
    & \textbf{$\alpha/\gamma$} & 0 & 5 & 10 & 0 & 5 & 10\\
    \midrule
    \multirow{3}{*}{DM}
      & 0.10  & 0.188 & 0.150 & 0.149 & 0.159 & 0.152 & 0.139\\
      & 0.05  & 0.096 & 0.091 & 0.090 & 0.084 & 0.080 & 0.076\\
      & 0.01  & 0.030 & 0.023 & 0.027 & 0.022 & 0.023 & 0.026\\
    \midrule
    \multirow{3}{*}{PJ}
      & 0.10  & 0.137 & 0.143 & 0.123 & 0.110 & 0.144 & 0.109\\
      & 0.05  & 0.064 & 0.077 & 0.072 & 0.062 & 0.070 & 0.059\\
      & 0.01  & 0.014 & 0.025 & 0.023 & 0.017 & 0.018 & 0.020\\
    \midrule
    $c=1$ & & \multicolumn{3}{c}{\textbf{$n=200$}} & \multicolumn{3}{c}{\textbf{$n=400$}} \\
    \cmidrule(lr){3-5}\cmidrule(lr){6-8}
    & \textbf{$\alpha/\gamma$} & 0 & 5 & 10 & 0 & 5 & 10\\
    \midrule
    \multirow{3}{*}{DM}
      & 0.10  & 0.157 & 0.134 & 0.135 & 0.147 & 0.143 & 0.115\\
      & 0.05  & 0.088 & 0.073 & 0.069 & 0.081 & 0.066 & 0.062\\
      & 0.01  & 0.023 & 0.016 & 0.022 & 0.021 & 0.020 & 0.023\\
    \midrule
    \multirow{3}{*}{PJ}
      & 0.10  & 0.101 & 0.111 & 0.101 & 0.101 & 0.118 & 0.106\\
      & 0.05  & 0.049 & 0.057 & 0.054 & 0.051 & 0.060 & 0.055\\
      & 0.01  & 0.012 & 0.018 & 0.012 & 0.012 & 0.020 & 0.013\\
    \midrule
    $c=2$ & & \multicolumn{3}{c}{\textbf{$n=200$}} & \multicolumn{3}{c}{\textbf{$n=400$}} \\
    \cmidrule(lr){3-5}\cmidrule(lr){6-8}
    & \textbf{$\alpha/\gamma$} & 0 & 5 & 10 & 0 & 5 & 10\\
    \midrule
    \multirow{3}{*}{DM}
      & 0.10  & 0.178 & 0.134 & 0.120 & 0.158 & 0.133 & 0.108\\
      & 0.05  & 0.090 & 0.072 & 0.058 & 0.088 & 0.065 & 0.058\\
      & 0.01  & 0.018 & 0.014 & 0.016 & 0.019 & 0.019 & 0.021\\
    \midrule
    \multirow{3}{*}{PJ}
      & 0.10  & 0.099 & 0.103 & 0.080 & 0.107 & 0.115 & 0.109\\
      & 0.05  & 0.047 & 0.053 & 0.046 & 0.049 & 0.066 & 0.053\\
      & 0.01  & 0.011 & 0.018 & 0.013 & 0.011 & 0.015 & 0.016\\
    \bottomrule
  \end{tabular}
  \label{tab:CIsind2p1q1}
\end{table}

\begin{table}[htbp]
  \centering
  \setlength{\tabcolsep}{5pt}
  \renewcommand{\arraystretch}{1.1}
  \caption{Testing conditional independence when $\Psi(X,Z)=\sin(\gamma Z)$, $X\sim U(0,1)$ and $Z\sim U(0,1)$}
  \begin{tabular}{llcccccccc}
    \toprule
    $c=0.5$ & & \multicolumn{3}{c}{\textbf{$n=200$}} & \multicolumn{3}{c}{\textbf{$n=400$}} \\
    \cmidrule(lr){3-5}\cmidrule(lr){6-8}
    & \textbf{$\alpha/\gamma$} & 0 & 5 & 10 & 0 & 5 & 10\\
    \midrule
    \multirow{3}{*}{DM}
      & 0.10  & 0.184 & 0.954 & 0.275 & 0.180 & 1.000 & 0.506\\
      & 0.05  & 0.095 & 0.882 & 0.147 & 0.095 & 0.999 & 0.269\\
      & 0.01  & 0.034 & 0.563 & 0.042 & 0.028 & 0.957 & 0.073\\
    \midrule
    \multirow{3}{*}{PJ}
      & 0.10  & 0.102 & 0.863 & 0.281 & 0.104 & 0.997 & 0.592\\
      & 0.05  & 0.050 & 0.764 & 0.156 & 0.056 & 0.992 & 0.400\\
      & 0.01  & 0.010 & 0.521 & 0.057 & 0.012 & 0.955 & 0.173\\
    \midrule
    $c=1$ & & \multicolumn{3}{c}{\textbf{$n=200$}} & \multicolumn{3}{c}{\textbf{$n=400$}} \\
    \cmidrule(lr){3-5}\cmidrule(lr){6-8}
    & \textbf{$\alpha/\gamma$} & 0 & 5 & 10 & 0 & 5 & 10\\
    \midrule
    \multirow{3}{*}{DM}
      & 0.10  & 0.189 & 0.959 & 0.278 & 0.199 & 1.000 & 0.547\\
      & 0.05  & 0.107 & 0.902 & 0.148 & 0.110 & 0.999 & 0.282\\
      & 0.01  & 0.030 & 0.623 & 0.044 & 0.030 & 0.969 & 0.078\\
    \midrule
    \multirow{3}{*}{PJ}
      & 0.10  & 0.086 & 0.910 & 0.285 & 0.107 & 0.999 & 0.637\\
      & 0.05  & 0.052 & 0.828 & 0.173 & 0.058 & 0.996 & 0.440\\
      & 0.01  & 0.013 & 0.575 & 0.063 & 0.015 & 0.968 & 0.200\\
    \midrule
    $c=2$ & & \multicolumn{3}{c}{\textbf{$n=200$}} & \multicolumn{3}{c}{\textbf{$n=400$}} \\
    \cmidrule(lr){3-5}\cmidrule(lr){6-8}
    & \textbf{$\alpha/\gamma$} & 0 & 5 & 10 & 0 & 5 & 10\\
    \midrule
    \multirow{3}{*}{DM}
      & 0.10  & 0.363 & 0.964 & 0.305 & 0.374 & 1.000 & 0.594\\
      & 0.05  & 0.196 & 0.916 & 0.167 & 0.211 & 1.000 & 0.305\\
      & 0.01  & 0.059 & 0.697 & 0.045 & 0.066 & 0.983 & 0.081\\
    \midrule
    \multirow{3}{*}{PJ}
      & 0.10  & 0.097 & 0.933 & 0.344 & 0.101 & 1.000 & 0.692\\
      & 0.05  & 0.050 & 0.826 & 0.203 & 0.051 & 1.000 & 0.539\\
      & 0.01  & 0.013 & 0.508 & 0.082 & 0.012 & 0.961 & 0.244\\
    \bottomrule
  \end{tabular}
  \label{tab:CIsind3p1q1}
\end{table}

\section{Conclusion}\label{sec.Conclusion}

This paper introduces a novel nonparametric significance test that uses a tailored projected weighting function to address the nonparametric estimation effect.
%offering superior theoretical properties. 
The approach relaxes the restrictive compact support assumption, provides a computationally efficient multiplier bootstrap, and extends straightforwardly to testing conditional independence. Numerical evidence confirms the strong performance of our proposal in finite samples.

\bibliographystyle{apalike_revised}
\bibliography{Ref}

\newpage
\appendix
\section{Supplementary simulation results}\label{sec.AppendixA}
In this subsection, we replicate all tests presented in the main text to assess whether the conclusions still hold across different choices of the frequency of the alternatives, dimensionality, and covariate distributions. For notational convenience, we consider the following generic DGP:
\begin{align*}
    Y=1+X+\Psi(X,Z)+\epsilon.
\end{align*}

As a complement to the results reported in the main text, tables \ref{tab:SNsind1p2q1}--\ref{tab:SNsind3p2q1} and \ref{tab:SNsind1p1q2}--\ref{tab:SNsind3p1q2} report the results for the above designs in the cases $p=2$ and $q=2$, respectively, where the regression function is specified as $\Psi(x,z)=\sin(\gamma z^{(1)})$ and $z^{(1)}$ denotes the first component of $z$. These results show that, although the size accuracy and power of the tests deteriorate to some extent in the multivariate setting, they remain within an acceptable range, and the empirical size under the null is markedly improved relative to the DM test. For clarity, we list below all multivariate distributional designs considered in these cases.
\begin{align*}
    &X=Z^{(1)}+U,\quad  Z^{(1)}\sim N(0,1),Z^{(2)}\sim N(0,1), U\sim N(0,1);\\
    &X\sim N(0,1),Z^{(1)}\sim N(0,1),Z^{(2)}\sim N(0,1);\\
    &X\sim U(0,1),Z^{(1)}\sim U(0,1),Z^{(2)}\sim U(0,1);\\
    &X^{(1)}=Z+U,\quad  X^{(2)}\sim N(0,1),Z\sim N(0,1), U\sim N(0,1);\\
    &X^{(1)}\sim N(0,1),X^{(2)}\sim N(0,1),Z\sim N(0,1);\\
    &X^{(1)}\sim U(0,1),X^{(2)}\sim U(0,1),Z\sim U(0,1).
\end{align*}

\begin{table}[htbp]
  \centering
  \setlength{\tabcolsep}{5pt}
  \renewcommand{\arraystretch}{1.1}
  \caption{Significance test when $\Psi(X,Z)=\sin(\gamma Z^{(1)})$, $X=Z^{(1)}+U$, $Z^{(1)}\sim N(0,1)$, $Z^{(2)}\sim N(0,1)$ and $U\sim N(0,1)$}
  \begin{tabular}{llcccccccc}
    \toprule
    $c=0.5$ & & \multicolumn{3}{c}{\textbf{$n=200$}} & \multicolumn{3}{c}{\textbf{$n=400$}} \\
    \cmidrule(lr){3-5}\cmidrule(lr){6-8}
    & \textbf{$\alpha/\gamma$} & 0 & 5 & 10 & 0 & 5 & 10\\
    \midrule
    \multirow{3}{*}{DM}
      & 0.10  & 0.135 & 0.381 & 0.173 & 0.147 & 0.792 & 0.209\\
      & 0.05  & 0.067 & 0.209 & 0.091 & 0.086 & 0.502 & 0.122\\
      & 0.01  & 0.017 & 0.059 & 0.023 & 0.027 & 0.151 & 0.035\\
    \midrule
    \multirow{3}{*}{PJ}
      & 0.10  & 0.096 & 0.465 & 0.216 & 0.100 & 0.817 & 0.343\\
      & 0.05  & 0.044 & 0.310 & 0.118 & 0.053 & 0.668 & 0.198\\
      & 0.01  & 0.009 & 0.133 & 0.038 & 0.012 & 0.345 & 0.061\\
    \midrule
    $c=1$ & & \multicolumn{3}{c}{\textbf{$n=200$}} & \multicolumn{3}{c}{\textbf{$n=400$}} \\
    \cmidrule(lr){3-5}\cmidrule(lr){6-8}
    & \textbf{$\alpha/\gamma$} & 0 & 5 & 10 & 0 & 5 & 10\\
    \midrule
    \multirow{3}{*}{DM}
      & 0.10  & 0.120 & 0.355 & 0.146 & 0.127 & 0.806 & 0.189\\
      & 0.05  & 0.045 & 0.183 & 0.086 & 0.073 & 0.504 & 0.103\\
      & 0.01  & 0.011 & 0.054 & 0.022 & 0.027 & 0.143 & 0.032\\
    \midrule
    \multirow{3}{*}{PJ}
      & 0.10  & 0.088 & 0.480 & 0.191 & 0.107 & 0.803 & 0.334\\
      & 0.05  & 0.046 & 0.308 & 0.117 & 0.047 & 0.638 & 0.205\\
      & 0.01  & 0.010 & 0.126 & 0.038 & 0.013 & 0.343 & 0.076\\
    \midrule
    $c=2$ & & \multicolumn{3}{c}{\textbf{$n=200$}} & \multicolumn{3}{c}{\textbf{$n=400$}} \\
    \cmidrule(lr){3-5}\cmidrule(lr){6-8}
    & \textbf{$\alpha/\gamma$} & 0 & 5 & 10 & 0 & 5 & 10\\
    \midrule
    \multirow{3}{*}{DM}
      & 0.10  & 0.111 & 0.366 & 0.139 & 0.120 & 0.798 & 0.187\\
      & 0.05  & 0.051 & 0.180 & 0.078 & 0.066 & 0.512 & 0.099\\
      & 0.01  & 0.008 & 0.047 & 0.020 & 0.022 & 0.131 & 0.031\\
    \midrule
    \multirow{3}{*}{PJ}
      & 0.10  & 0.084 & 0.442 & 0.172 & 0.096 & 0.791 & 0.288\\
      & 0.05  & 0.044 & 0.283 & 0.100 & 0.045 & 0.617 & 0.172\\
      & 0.01  & 0.009 & 0.112 & 0.035 & 0.010 & 0.322 & 0.055\\
    \bottomrule
  \end{tabular}
  \label{tab:SNsind1p2q1}
\end{table}

\begin{table}[htbp]
  \centering
  \setlength{\tabcolsep}{5pt}
  \renewcommand{\arraystretch}{1.1}
  \caption{Significance test when $\Psi(X,Z)=\sin(\gamma Z^{(1)})$, $X\sim N(0,1)$, $Z^{(1)}\sim N(0,1)$ and $Z^{(2)}\sim N(0,1)$}
  \begin{tabular}{llcccccccc}
    \toprule
    $c=0.5$ & & \multicolumn{3}{c}{\textbf{$n=200$}} & \multicolumn{3}{c}{\textbf{$n=400$}} \\
    \cmidrule(lr){3-5}\cmidrule(lr){6-8}
    & \textbf{$\alpha/\gamma$} & 0 & 5 & 10 & 0 & 5 & 10\\
    \midrule
    \multirow{3}{*}{DM}
      & 0.10  & 0.144 & 0.389 & 0.193 & 0.137 & 0.792 & 0.206\\
      & 0.05  & 0.070 & 0.218 & 0.092 & 0.084 & 0.486 & 0.101\\
      & 0.01  & 0.021 & 0.057 & 0.026 & 0.028 & 0.135 & 0.026\\
    \midrule
    \multirow{3}{*}{PJ}
      & 0.10  & 0.102 & 0.415 & 0.216 & 0.102 & 0.814 & 0.318\\
      & 0.05  & 0.048 & 0.254 & 0.132 & 0.050 & 0.570 & 0.196\\
      & 0.01  & 0.008 & 0.101 & 0.040 & 0.015 & 0.309 & 0.078\\
    \midrule
    $c=1$ & & \multicolumn{3}{c}{\textbf{$n=200$}} & \multicolumn{3}{c}{\textbf{$n=400$}} \\
    \cmidrule(lr){3-5}\cmidrule(lr){6-8}
    & \textbf{$\alpha/\gamma$} & 0 & 5 & 10 & 0 & 5 & 10\\
    \midrule
    \multirow{3}{*}{DM}
      & 0.10  & 0.121 & 0.364 & 0.164 & 0.131 & 0.792 & 0.183\\
      & 0.05  & 0.065 & 0.205 & 0.081 & 0.071 & 0.471 & 0.092\\
      & 0.01  & 0.013 & 0.049 & 0.021 & 0.018 & 0.124 & 0.029\\
    \midrule
    \multirow{3}{*}{PJ}
      & 0.10  & 0.107 & 0.340 & 0.215 & 0.105 & 0.665 & 0.308\\
      & 0.05  & 0.048 & 0.175 & 0.122 & 0.055 & 0.392 & 0.202\\
      & 0.01  & 0.011 & 0.070 & 0.044 & 0.010 & 0.131 & 0.062\\
    \midrule
    $c=2$ & & \multicolumn{3}{c}{\textbf{$n=200$}} & \multicolumn{3}{c}{\textbf{$n=400$}} \\
    \cmidrule(lr){3-5}\cmidrule(lr){6-8}
    & \textbf{$\alpha/\gamma$} & 0 & 5 & 10 & 0 & 5 & 10\\
    \midrule
    \multirow{3}{*}{DM}
      & 0.10  & 0.118 & 0.362 & 0.149 & 0.128 & 0.791 & 0.181\\
      & 0.05  & 0.066 & 0.190 & 0.079 & 0.065 & 0.477 & 0.086\\
      & 0.01  & 0.012 & 0.049 & 0.021 & 0.018 & 0.124 & 0.023\\
    \midrule
    \multirow{3}{*}{PJ}
      & 0.10  & 0.093 & 0.278 & 0.198 & 0.102 & 0.604 & 0.291\\
      & 0.05  & 0.045 & 0.138 & 0.109 & 0.052 & 0.348 & 0.174\\
      & 0.01  & 0.008 & 0.028 & 0.027 & 0.012 & 0.105 & 0.058\\
    \bottomrule
  \end{tabular}
  \label{tab:SNsind2p2q1}
\end{table}

\begin{table}[htbp]
  \centering
  \setlength{\tabcolsep}{5pt}
  \renewcommand{\arraystretch}{1.1}
  \caption{Significance test when $\Psi(X,Z)=\sin(\gamma Z^{(1)})$, $X\sim U(0,1)$, $Z^{(1)}\sim U(0,1)$ and $Z^{(2)}\sim U(0,1)$}
  \begin{tabular}{llcccccccc}
    \toprule
    $c=0.5$ & & \multicolumn{3}{c}{\textbf{$n=200$}} & \multicolumn{3}{c}{\textbf{$n=400$}} \\
    \cmidrule(lr){3-5}\cmidrule(lr){6-8}
    & \textbf{$\alpha/\gamma$} & 0 & 5 & 10 & 0 & 5 & 10\\
    \midrule
    \multirow{3}{*}{DM}
      & 0.10  & 0.147 & 0.969 & 0.615 & 0.129 & 1.000 & 0.866\\
      & 0.05  & 0.079 & 0.920 & 0.496 & 0.064 & 1.000 & 0.778\\
      & 0.01  & 0.025 & 0.601 & 0.289 & 0.018 & 0.992 & 0.562\\
    \midrule
    \multirow{3}{*}{PJ}
      & 0.10  & 0.105 & 0.967 & 0.611 & 0.110 & 1.000 & 0.851\\
      & 0.05  & 0.059 & 0.858 & 0.485 & 0.050 & 0.999 & 0.769\\
      & 0.01  & 0.014 & 0.583 & 0.288 & 0.014 & 0.983 & 0.579\\
    \midrule
    $c=1$ & & \multicolumn{3}{c}{\textbf{$n=200$}} & \multicolumn{3}{c}{\textbf{$n=400$}} \\
    \cmidrule(lr){3-5}\cmidrule(lr){6-8}
    & \textbf{$\alpha/\gamma$} & 0 & 5 & 10 & 0 & 5 & 10\\
    \midrule
    \multirow{3}{*}{DM}
      & 0.10  & 0.150 & 0.977 & 0.641 & 0.135 & 1.000 & 0.867\\
      & 0.05  & 0.078 & 0.928 & 0.506 & 0.067 & 1.000 & 0.794\\
      & 0.01  & 0.021 & 0.611 & 0.302 & 0.013 & 0.995 & 0.575\\
    \midrule
    \multirow{3}{*}{PJ}
      & 0.10  & 0.110 & 0.973 & 0.621 & 0.099 & 1.000 & 0.859\\
      & 0.05  & 0.061 & 0.905 & 0.487 & 0.048 & 1.000 & 0.782\\
      & 0.01  & 0.015 & 0.675 & 0.287 & 0.013 & 0.997 & 0.607\\
    \midrule
    $c=2$ & & \multicolumn{3}{c}{\textbf{$n=200$}} & \multicolumn{3}{c}{\textbf{$n=400$}} \\
    \cmidrule(lr){3-5}\cmidrule(lr){6-8}
    & \textbf{$\alpha/\gamma$} & 0 & 5 & 10 & 0 & 5 & 10\\
    \midrule
    \multirow{3}{*}{DM}
      & 0.10  & 0.160 & 0.983 & 0.648 & 0.150 & 1.000 & 0.874\\
      & 0.05  & 0.095 & 0.930 & 0.516 & 0.073 & 1.000 & 0.798\\
      & 0.01  & 0.025 & 0.643 & 0.299 & 0.017 & 0.994 & 0.592\\
    \midrule
    \multirow{3}{*}{PJ}
      & 0.10  & 0.114 & 0.987 & 0.597 & 0.104 & 1.000 & 0.847\\
      & 0.05  & 0.061 & 0.968 & 0.464 & 0.051 & 1.000 & 0.765\\
      & 0.01  & 0.016 & 0.827 & 0.260 & 0.015 & 0.998 & 0.562\\
    \bottomrule
  \end{tabular}
  \label{tab:SNsind3p2q1}
\end{table}

\begin{table}[htbp]
  \centering
  \setlength{\tabcolsep}{5pt}
  \renewcommand{\arraystretch}{1.1}
  \caption{Significance test when $\Psi(X,Z)=\sin(\gamma Z)$, $X^{(1)}=Z+U$, $X^{(2)}\sim N(0,1)$,$Z\sim N(0,1)$ and $U\sim N(0,1)$}
  \begin{tabular}{llcccccccc}
    \toprule
    $c=0.5$ & & \multicolumn{3}{c}{\textbf{$n=200$}} & \multicolumn{3}{c}{\textbf{$n=400$}} \\
    \cmidrule(lr){3-5}\cmidrule(lr){6-8}
    & \textbf{$\alpha/\gamma$} & 0 & 5 & 10 & 0 & 5 & 10\\
    \midrule
    \multirow{3}{*}{DM}
      & 0.10  & 0.598 & 0.669 & 0.742 & 0.497 & 0.764 & 0.537\\
      & 0.05  & 0.425 & 0.493 & 0.528 & 0.328 & 0.597 & 0.393\\
      & 0.01  & 0.200 & 0.254 & 0.234 & 0.134 & 0.253 & 0.144\\
    \midrule
    \multirow{3}{*}{PJ}
      & 0.10  & 0.213 & 0.338 & 0.274 & 0.192 & 0.547 & 0.326\\
      & 0.05  & 0.099 & 0.204 & 0.168 & 0.110 & 0.388 & 0.195\\
      & 0.01  & 0.025 & 0.069 & 0.044 & 0.033 & 0.171 & 0.067\\
    \midrule
    $c=1$ & & \multicolumn{3}{c}{\textbf{$n=200$}} & \multicolumn{3}{c}{\textbf{$n=400$}} \\
    \cmidrule(lr){3-5}\cmidrule(lr){6-8}
    & \textbf{$\alpha/\gamma$} & 0 & 5 & 10 & 0 & 5 & 10\\
    \midrule
    \multirow{3}{*}{DM}
      & 0.10  & 0.398 & 0.567 & 0.444 & 0.293 & 0.760 & 0.399\\
      & 0.05  & 0.231 & 0.396 & 0.237 & 0.163 & 0.561 & 0.215\\
      & 0.01  & 0.074 & 0.157 & 0.090 & 0.049 & 0.192 & 0.062\\
    \midrule
    \multirow{3}{*}{PJ}
      & 0.10  & 0.153 & 0.361 & 0.225 & 0.143 & 0.577 & 0.287\\
      & 0.05  & 0.083 & 0.231 & 0.119 & 0.069 & 0.434 & 0.171\\
      & 0.01  & 0.015 & 0.072 & 0.041 & 0.014 & 0.194 & 0.054\\
    \midrule
    $c=2$ & & \multicolumn{3}{c}{\textbf{$n=200$}} & \multicolumn{3}{c}{\textbf{$n=400$}} \\
    \cmidrule(lr){3-5}\cmidrule(lr){6-8}
    & \textbf{$\alpha/\gamma$} & 0 & 5 & 10 & 0 & 5 & 10\\
    \midrule
    \multirow{3}{*}{DM}
      & 0.10  & 0.278 & 0.499 & 0.315 & 0.238 & 0.778 & 0.359\\
      & 0.05  & 0.155 & 0.336 & 0.160 & 0.125 & 0.588 & 0.191\\
      & 0.01  & 0.033 & 0.114 & 0.044 & 0.037 & 0.185 & 0.042\\
    \midrule
    \multirow{3}{*}{PJ}
      & 0.10  & 0.102 & 0.363 & 0.173 & 0.109 & 0.624 & 0.260\\
      & 0.05  & 0.048 & 0.225 & 0.086 & 0.044 & 0.460 & 0.164\\
      & 0.01  & 0.011 & 0.082 & 0.016 & 0.009 & 0.231 & 0.049\\
    \bottomrule
  \end{tabular}
  \label{tab:SNsind1p1q2}
\end{table}

\begin{table}[htbp]
  \centering
  \setlength{\tabcolsep}{5pt}
  \renewcommand{\arraystretch}{1.1}
  \caption{Significance test when $\Psi(X,Z)=\sin(\gamma Z)$, $X^{(1)}\sim N(0,1)$, $X^{(2)}\sim N(0,1)$, and $Z\sim N(0,1)$}
  \begin{tabular}{llcccccccc}
    \toprule
    $c=0.5$ & & \multicolumn{3}{c}{\textbf{$n=200$}} & \multicolumn{3}{c}{\textbf{$n=400$}} \\
    \cmidrule(lr){3-5}\cmidrule(lr){6-8}
    & \textbf{$\alpha/\gamma$} & 0 & 5 & 10 & 0 & 5 & 10\\
    \midrule
    \multirow{3}{*}{DM}
      & 0.10  & 0.618 & 0.615 & 0.644 & 0.431 & 0.657 & 0.530\\
      & 0.05  & 0.421 & 0.440 & 0.431 & 0.298 & 0.500 & 0.321\\
      & 0.01  & 0.169 & 0.190 & 0.180 & 0.115 & 0.266 & 0.132\\
    \midrule
    \multirow{3}{*}{PJ}
      & 0.10  & 0.225 & 0.332 & 0.281 & 0.213 & 0.521 & 0.355\\
      & 0.05  & 0.123 & 0.189 & 0.162 & 0.105 & 0.339 & 0.232\\
      & 0.01  & 0.032 & 0.045 & 0.047 & 0.043 & 0.118 & 0.102\\
    \midrule
    $c=1$ & & \multicolumn{3}{c}{\textbf{$n=200$}} & \multicolumn{3}{c}{\textbf{$n=400$}} \\
    \cmidrule(lr){3-5}\cmidrule(lr){6-8}
    & \textbf{$\alpha/\gamma$} & 0 & 5 & 10 & 0 & 5 & 10\\
    \midrule
    \multirow{3}{*}{DM}
      & 0.10  & 0.314 & 0.497 & 0.379 & 0.214 & 0.621 & 0.321\\
      & 0.05  & 0.183 & 0.314 & 0.204 & 0.133 & 0.440 & 0.176\\
      & 0.01  & 0.064 & 0.111 & 0.069 & 0.037 & 0.211 & 0.053\\
    \midrule
    \multirow{3}{*}{PJ}
      & 0.10  & 0.159 & 0.312 & 0.230 & 0.148 & 0.508 & 0.334\\
      & 0.05  & 0.092 & 0.161 & 0.145 & 0.071 & 0.298 & 0.225\\
      & 0.01  & 0.021 & 0.052 & 0.043 & 0.014 & 0.100 & 0.082\\
    \midrule
    $c=2$ & & \multicolumn{3}{c}{\textbf{$n=200$}} & \multicolumn{3}{c}{\textbf{$n=400$}} \\
    \cmidrule(lr){3-5}\cmidrule(lr){6-8}
    & \textbf{$\alpha/\gamma$} & 0 & 5 & 10 & 0 & 5 & 10\\
    \midrule
    \multirow{3}{*}{DM}
      & 0.10  & 0.203 & 0.421 & 0.275 & 0.167 & 0.627 & 0.264\\
      & 0.05  & 0.100 & 0.242 & 0.139 & 0.084 & 0.437 & 0.127\\
      & 0.01  & 0.027 & 0.075 & 0.031 & 0.024 & 0.176 & 0.034\\
    \midrule
    \multirow{3}{*}{PJ}
      & 0.10  & 0.104 & 0.269 & 0.198 & 0.102 & 0.534 & 0.332\\
      & 0.05  & 0.046 & 0.150 & 0.122 & 0.046 & 0.358 & 0.193\\
      & 0.01  & 0.012 & 0.044 & 0.032 & 0.010 & 0.125 & 0.069\\
    \bottomrule
  \end{tabular}
  \label{tab:SNsind2p1q2}
\end{table}

\begin{table}[htbp]
  \centering
  \setlength{\tabcolsep}{5pt}
  \renewcommand{\arraystretch}{1.1}
  \caption{Significance test when $\Psi(X,Z)=\sin(\gamma Z)$, $X^{(1)}\sim U(0,1)$, $X^{(2)}\sim U(0,1)$ and $Z\sim U(0,1)$}
  \begin{tabular}{llcccccccc}
    \toprule
    $c=0.5$ & & \multicolumn{3}{c}{\textbf{$n=200$}} & \multicolumn{3}{c}{\textbf{$n=400$}} \\
    \cmidrule(lr){3-5}\cmidrule(lr){6-8}
    & \textbf{$\alpha/\gamma$} & 0 & 5 & 10 & 0 & 5 & 10\\
    \midrule
    \multirow{3}{*}{DM}
      & 0.10  & 0.245 & 0.963 & 0.471 & 0.209 & 0.998 & 0.637\\
      & 0.05  & 0.144 & 0.888 & 0.320 & 0.111 & 0.998 & 0.510\\
      & 0.01  & 0.049 & 0.619 & 0.151 & 0.036 & 0.986 & 0.287\\
    \midrule
    \multirow{3}{*}{PJ}
      & 0.10  & 0.133 & 0.922 & 0.369 & 0.120 & 1.000 & 0.600\\
      & 0.05  & 0.069 & 0.817 & 0.260 & 0.077 & 1.000 & 0.484\\
      & 0.01  & 0.019 & 0.544 & 0.112 & 0.018 & 0.977 & 0.294\\
    \midrule
    $c=1$ & & \multicolumn{3}{c}{\textbf{$n=200$}} & \multicolumn{3}{c}{\textbf{$n=400$}} \\
    \cmidrule(lr){3-5}\cmidrule(lr){6-8}
    & \textbf{$\alpha/\gamma$} & 0 & 5 & 10 & 0 & 5 & 10\\
    \midrule
    \multirow{3}{*}{DM}
      & 0.10  & 0.201 & 0.985 & 0.501 & 0.196 & 1.000 & 0.671\\
      & 0.05  & 0.107 & 0.950 & 0.345 & 0.092 & 1.000 & 0.567\\
      & 0.01  & 0.033 & 0.692 & 0.159 & 0.022 & 0.990 & 0.330\\
    \midrule
    \multirow{3}{*}{PJ}
      & 0.10  & 0.106 & 0.968 & 0.361 & 0.115 & 1.000 & 0.633\\
      & 0.05  & 0.052 & 0.907 & 0.249 & 0.058 & 1.000 & 0.512\\
      & 0.01  & 0.012 & 0.696 & 0.110 & 0.012 & 0.994 & 0.307\\
    \midrule
    $c=2$ & & \multicolumn{3}{c}{\textbf{$n=200$}} & \multicolumn{3}{c}{\textbf{$n=400$}} \\
    \cmidrule(lr){3-5}\cmidrule(lr){6-8}
    & \textbf{$\alpha/\gamma$} & 0 & 5 & 10 & 0 & 5 & 10\\
    \midrule
    \multirow{3}{*}{DM}
      & 0.10  & 0.408 & 0.988 & 0.610 & 0.365 & 1.000 & 0.815\\
      & 0.05  & 0.230 & 0.955 & 0.463 & 0.215 & 1.000 & 0.721\\
      & 0.01  & 0.082 & 0.802 & 0.245 & 0.077 & 0.997 & 0.511\\
    \midrule
    \multirow{3}{*}{PJ}
      & 0.10  & 0.101 & 0.968 & 0.329 & 0.107 & 1.000 & 0.571\\
      & 0.05  & 0.045 & 0.905 & 0.224 & 0.052 & 1.000 & 0.459\\
      & 0.01  & 0.010 & 0.709 & 0.087 & 0.010 & 0.990 & 0.241\\
    \bottomrule
  \end{tabular}
  \label{tab:SNsind3p1q2}
\end{table}

Tables \ref{tab:SNsind1}–\ref{tab:SNsind3} report power results obtained by replacing $\Psi(x,z)=\sin(\gamma z^{(1)})$ with $\Psi(x,z)=\exp(z^{(1)})$ in the above designs to verify that the tests retain good powers under other types of alternatives and to provide further evidence on the admissibility of ICM-type tests discussed in \cite{bierens1997asymptotic}. For notational convenience, we label each design in the tables by the values of $p$ and $q$ together with the distributions of the first components, rather than repeating the full distribution in the table captions.
\begin{table}[htbp]
  \centering
  \setlength{\tabcolsep}{5pt}
  \renewcommand{\arraystretch}{1.1}
  \caption{Power of significance test when $\Psi(X,Z)=\exp(\gamma Z^{(1)})$, $X^{(1)}=Z^{(1)}+U$, $Z^{(1)}\sim N(0,1)$ and $U\sim N(0,1)$}
  \begin{tabular}{llcccccccc}
    \toprule
    $p=1,q=1$ & & \multicolumn{3}{c}{\textbf{$n=200$}} & \multicolumn{3}{c}{\textbf{$n=400$}} \\
    \cmidrule(lr){3-5}\cmidrule(lr){6-8}
    & \textbf{$\alpha/c$} & 0.5 & 1 & 2 & 0.5 & 1 & 2\\
    \midrule
    \multirow{3}{*}{DM}
      & 0.10  & 0.998 & 1.000 & 1.000 & 1.000 & 1.000 & 1.000\\
      & 0.05  & 0.994 & 0.998 & 0.999 & 1.000 & 1.000 & 1.000\\
      & 0.01  & 0.961 & 0.972 & 0.986 & 0.999 & 1.000 & 1.000\\
    \midrule
    \multirow{3}{*}{PJ}
      & 0.10  & 0.992 & 0.992 & 0.997 & 1.000 & 1.000 & 1.000\\
      & 0.05  & 0.981 & 0.981 & 0.984 & 1.000 & 1.000 & 1.000\\
      & 0.01  & 0.913 & 0.911 & 0.902 & 1.000 & 0.999 & 0.998\\
    \midrule
    $p=2,q=1$ & & \multicolumn{3}{c}{\textbf{$n=200$}} & \multicolumn{3}{c}{\textbf{$n=400$}} \\
    \cmidrule(lr){3-5}\cmidrule(lr){6-8}
    & \textbf{$\alpha/c$} & 0.5 & 1 & 2 & 0.5 & 1 & 2\\
    \midrule
    \multirow{3}{*}{DM}
      & 0.10  & 0.801 & 0.800 & 0.817 & 0.991 & 0.992 & 0.991\\
      & 0.05  & 0.607 & 0.618 & 0.632 & 0.952 & 0.964 & 0.969\\
      & 0.01  & 0.296 & 0.278 & 0.301 & 0.690 & 0.711 & 0.724\\
    \midrule
    \multirow{3}{*}{PJ}
      & 0.10  & 0.821 & 0.844 & 0.872 & 0.996 & 0.998 & 0.998\\
      & 0.05  & 0.670 & 0.725 & 0.765 & 0.984 & 0.993 & 0.993\\
      & 0.01  & 0.399 & 0.442 & 0.510 & 0.903 & 0.943 & 0.963\\
    \midrule
    $p=1,q=2$ & & \multicolumn{3}{c}{\textbf{$n=200$}} & \multicolumn{3}{c}{\textbf{$n=400$}} \\
    \cmidrule(lr){3-5}\cmidrule(lr){6-8}
    & \textbf{$\alpha/c$} & 0.5 & 1 & 2 & 0.5 & 1 & 2\\
    \midrule
    \multirow{3}{*}{DM}
      & 0.10  & 0.922 & 0.969 & 0.984 & 0.988 & 0.989 & 0.999\\
      & 0.05  & 0.806 & 0.923 & 0.971 & 0.972 & 0.984 & 0.998\\
      & 0.01  & 0.530 & 0.769 & 0.884 & 0.898 & 0.974 & 0.997\\
    \midrule
    \multirow{3}{*}{PJ}
      & 0.10  & 0.788 & 0.948 & 0.978 & 0.994 & 1.000 & 1.000\\
      & 0.05  & 0.598 & 0.867 & 0.948 & 0.980 & 1.000 & 1.000\\
      & 0.01  & 0.291 & 0.663 & 0.777 & 0.862 & 0.983 & 0.999\\
    \bottomrule
  \end{tabular}
  \label{tab:SNsind1}
\end{table}

\begin{table}[htbp]
  \centering
  \setlength{\tabcolsep}{5pt}
  \renewcommand{\arraystretch}{1.1}
  \caption{Power of significance test when $\Psi(X,Z)=\exp(\gamma Z^{(1)})$, $X^{(1)}\sim N(0,1)$ and $Z^{(1)}\sim N(0,1)$}
  \begin{tabular}{llcccccccc}
    \toprule
    $p=1,q=1$ & & \multicolumn{3}{c}{\textbf{$n=200$}} & \multicolumn{3}{c}{\textbf{$n=400$}} \\
    \cmidrule(lr){3-5}\cmidrule(lr){6-8}
    & \textbf{$\alpha/c$} & 0.5 & 1 & 2 & 0.5 & 1 & 2\\
    \midrule
    \multirow{3}{*}{DM}
      & 0.10  & 0.495 & 0.494 & 0.501 & 0.927 & 0.930 & 0.943\\
      & 0.05  & 0.253 & 0.262 & 0.272 & 0.693 & 0.718 & 0.743\\
      & 0.01  & 0.075 & 0.073 & 0.072 & 0.255 & 0.247 & 0.252\\
    \midrule
    \multirow{3}{*}{PJ}
      & 0.10  & 0.571 & 0.546 & 0.604 & 0.877 & 0.905 & 0.914\\
      & 0.05  & 0.391 & 0.391 & 0.420 & 0.744 & 0.752 & 0.795\\
      & 0.01  & 0.186 & 0.184 & 0.197 & 0.461 & 0.477 & 0.512\\
    \midrule
    $p=2,q=1$ & & \multicolumn{3}{c}{\textbf{$n=200$}} & \multicolumn{3}{c}{\textbf{$n=400$}} \\
    \cmidrule(lr){3-5}\cmidrule(lr){6-8}
    & \textbf{$\alpha/c$} & 0.5 & 1 & 2 & 0.5 & 1 & 2\\
    \midrule
    \multirow{3}{*}{DM}
      & 0.10  & 0.447 & 0.432 & 0.409 & 0.795 & 0.788 & 0.775\\
      & 0.05  & 0.299 & 0.261 & 0.221 & 0.650 & 0.636 & 0.626\\
      & 0.01  & 0.090 & 0.072 & 0.064 & 0.259 & 0.236 & 0.215\\
    \midrule
    \multirow{3}{*}{PJ}
      & 0.10  & 0.472 & 0.468 & 0.494 & 0.852 & 0.843 & 0.845\\
      & 0.05  & 0.319 & 0.324 & 0.338 & 0.685 & 0.671 & 0.681\\
      & 0.01  & 0.137 & 0.130 & 0.137 & 0.397 & 0.396 & 0.391\\
    \midrule
    $p=1,q=2$ & & \multicolumn{3}{c}{\textbf{$n=200$}} & \multicolumn{3}{c}{\textbf{$n=400$}} \\
    \cmidrule(lr){3-5}\cmidrule(lr){6-8}
    & \textbf{$\alpha/c$} & 0.5 & 1 & 2 & 0.5 & 1 & 2\\
    \midrule
    \multirow{3}{*}{DM}
      & 0.10  & 0.700 & 0.588 & 0.537 & 0.803 & 0.783 & 0.839\\
      & 0.05  & 0.525 & 0.351 & 0.288 & 0.578 & 0.563 & 0.556\\
      & 0.01  & 0.254 & 0.138 & 0.075 & 0.240 & 0.181 & 0.165\\
    \midrule
    \multirow{3}{*}{PJ}
      & 0.10  & 0.421 & 0.409 & 0.442 & 0.647 & 0.690 & 0.753\\
      & 0.05  & 0.270 & 0.274 & 0.301 & 0.464 & 0.539 & 0.576\\
      & 0.01  & 0.110 & 0.106 & 0.125 & 0.220 & 0.266 & 0.297\\
    \bottomrule
  \end{tabular}
  \label{tab:SNsind2}
\end{table}

\begin{table}[htbp]
  \centering
  \setlength{\tabcolsep}{5pt}
  \renewcommand{\arraystretch}{1.1}
  \caption{Power of significance test when $\Psi(X,Z)=\exp(\gamma Z^{(1)})$, $X^{(1)}\sim U(0,1)$ and $Z^{(1)}\sim U(0,1)$}
  \begin{tabular}{llcccccccc}
    \toprule
    $p=1,q=1$ & & \multicolumn{3}{c}{\textbf{$n=200$}} & \multicolumn{3}{c}{\textbf{$n=400$}} \\
    \cmidrule(lr){3-5}\cmidrule(lr){6-8}
    & \textbf{$\alpha/c$} & 0.5 & 1 & 2 & 0.5 & 1 & 2\\
    \midrule
    \multirow{3}{*}{DM}
      & 0.10  & 1.000 & 1.000 & 1.000 & 1.000 & 1.000 & 1.000\\
      & 0.05  & 1.000 & 1.000 & 1.000 & 1.000 & 1.000 & 1.000\\
      & 0.01  & 1.000 & 1.000 & 1.000 & 1.000 & 1.000 & 1.000\\
    \midrule
    \multirow{3}{*}{PJ}
      & 0.10  & 1.000 & 1.000 & 1.000 & 1.000 & 1.000 & 1.000\\
      & 0.05  & 1.000 & 1.000 & 1.000 & 1.000 & 1.000 & 1.000\\
      & 0.01  & 1.000 & 1.000 & 1.000 & 1.000 & 1.000 & 1.000\\
    \midrule
    $p=2,q=1$ & & \multicolumn{3}{c}{\textbf{$n=200$}} & \multicolumn{3}{c}{\textbf{$n=400$}} \\
    \cmidrule(lr){3-5}\cmidrule(lr){6-8}
    & \textbf{$\alpha/c$} & 0.5 & 1 & 2 & 0.5 & 1 & 2\\
    \midrule
    \multirow{3}{*}{DM}
      & 0.10  & 1.000 & 1.000 & 1.000 & 1.000 & 1.000 & 1.000\\
      & 0.05  & 1.000 & 1.000 & 1.000 & 1.000 & 1.000 & 1.000\\
      & 0.01  & 1.000 & 1.000 & 1.000 & 1.000 & 1.000 & 1.000\\
    \midrule
    \multirow{3}{*}{PJ}
      & 0.10  & 1.000 & 1.000 & 1.000 & 1.000 & 1.000 & 1.000\\
      & 0.05  & 1.000 & 1.000 & 1.000 & 1.000 & 1.000 & 1.000\\
      & 0.01  & 1.000 & 1.000 & 1.000 & 1.000 & 1.000 & 1.000\\
    \midrule
    $p=1,q=2$ & & \multicolumn{3}{c}{\textbf{$n=200$}} & \multicolumn{3}{c}{\textbf{$n=400$}} \\
    \cmidrule(lr){3-5}\cmidrule(lr){6-8}
    & \textbf{$\alpha/c$} & 0.5 & 1 & 2 & 0.5 & 1 & 2\\
    \midrule
    \multirow{3}{*}{DM}
      & 0.10  & 0.998 & 1.000 & 1.000 & 0.998 & 1.000 & 1.000\\
      & 0.05  & 0.998 & 1.000 & 1.000 & 0.998 & 1.000 & 1.000\\
      & 0.01  & 0.998 & 1.000 & 0.999 & 0.998 & 1.000 & 1.000\\
    \midrule
    \multirow{3}{*}{PJ}
      & 0.10  & 1.000 & 1.000 & 1.000 & 1.000 & 1.000 & 1.000\\
      & 0.05  & 1.000 & 1.000 & 1.000 & 1.000 & 1.000 & 1.000\\
      & 0.01  & 0.999 & 0.998 & 0.998 & 1.000 & 1.000 & 1.000\\
    \bottomrule
  \end{tabular}
  \label{tab:SNsind3}
\end{table}

Still, as a further complement to the results reported in the main text, tables \ref{tab:CIsind1p2q1}–\ref{tab:CIsind3} report the corresponding results when these DGPs are used in the conditional independence testing framework. The only difference relative to the previous designs is that the data generating process for $Y$ is modified so that the variable whose significance is being tested enters through the variance rather than the mean, which is
\begin{align*}
    Y=1+\left[X+\Psi(X,Z)\right]\epsilon.
\end{align*}
Since the significance tests for the conditional mean considered above can themselves be viewed as tests of conditional independence, it is not surprising that the qualitative conclusions are similar.
\begin{table}[htbp]
  \centering
  \setlength{\tabcolsep}{5pt}
  \renewcommand{\arraystretch}{1.1}
  \caption{Testing conditional independence when $\Psi(X,Z)=\sin(\gamma Z^{(1)})$, $X=Z^{(1)}+U$, $Z^{(1)}\sim N(0,1)$, $Z^{(2)}\sim N(0,1)$ and $U\sim N(0,1)$}
  \begin{tabular}{llcccccccc}
    \toprule
    $c=0.5$ & & \multicolumn{3}{c}{\textbf{$n=200$}} & \multicolumn{3}{c}{\textbf{$n=400$}} \\
    \cmidrule(lr){3-5}\cmidrule(lr){6-8}
    & \textbf{$\alpha/\gamma$} & 0 & 5 & 10 & 0 & 5 & 10\\
    \midrule
    \multirow{3}{*}{DM}
      & 0.10  & 0.166 & 0.151 & 0.138 & 0.161 & 0.161 & 0.141\\
      & 0.05  & 0.093 & 0.086 & 0.073 & 0.084 & 0.079 & 0.076\\
      & 0.01  & 0.038 & 0.027 & 0.026 & 0.026 & 0.025 & 0.023\\
    \midrule
    \multirow{3}{*}{PJ}
      & 0.10  & 0.093 & 0.120 & 0.113 & 0.099 & 0.122 & 0.120\\
      & 0.05  & 0.053 & 0.065 & 0.064 & 0.054 & 0.067 & 0.058\\
      & 0.01  & 0.015 & 0.024 & 0.020 & 0.013 & 0.016 & 0.015\\
    \midrule
    $c=1$ & & \multicolumn{3}{c}{\textbf{$n=200$}} & \multicolumn{3}{c}{\textbf{$n=400$}} \\
    \cmidrule(lr){3-5}\cmidrule(lr){6-8}
    & \textbf{$\alpha/\gamma$} & 0 & 5 & 10 & 0 & 5 & 10\\
    \midrule
    \multirow{3}{*}{DM}
      & 0.10  & 0.140 & 0.126 & 0.125 & 0.135 & 0.136 & 0.124\\
      & 0.05  & 0.094 & 0.074 & 0.066 & 0.071 & 0.066 & 0.068\\
      & 0.01  & 0.032 & 0.027 & 0.021 & 0.022 & 0.015 & 0.017\\
    \midrule
    \multirow{3}{*}{PJ}
      & 0.10  & 0.100 & 0.108 & 0.100 & 0.098 & 0.095 & 0.094\\
      & 0.05  & 0.053 & 0.060 & 0.055 & 0.049 & 0.053 & 0.044\\
      & 0.01  & 0.018 & 0.019 & 0.017 & 0.010 & 0.026 & 0.010\\
    \midrule
    $c=2$ & & \multicolumn{3}{c}{\textbf{$n=200$}} & \multicolumn{3}{c}{\textbf{$n=400$}} \\
    \cmidrule(lr){3-5}\cmidrule(lr){6-8}
    & \textbf{$\alpha/\gamma$} & 0 & 5 & 10 & 0 & 5 & 10\\
    \midrule
    \multirow{3}{*}{DM}
      & 0.10  & 0.144 & 0.126 & 0.115 & 0.135 & 0.122 & 0.123\\
      & 0.05  & 0.093 & 0.066 & 0.066 & 0.073 & 0.065 & 0.062\\
      & 0.01  & 0.030 & 0.020 & 0.015 & 0.017 & 0.012 & 0.014\\
    \midrule
    \multirow{3}{*}{PJ}
      & 0.10  & 0.098 & 0.103 & 0.102 & 0.087 & 0.100 & 0.096\\
      & 0.05  & 0.046 & 0.055 & 0.058 & 0.047 & 0.053 & 0.047\\
      & 0.01  & 0.013 & 0.014 & 0.014 & 0.013 & 0.012 & 0.010\\
    \bottomrule
  \end{tabular}
  \label{tab:CIsind1p2q1}
\end{table}

\begin{table}[htbp]
  \centering
  \setlength{\tabcolsep}{5pt}
  \renewcommand{\arraystretch}{1.1}
  \caption{Testing conditional independence when $\Psi(X,Z)=\sin(\gamma Z^{(1)})$, $X\sim N(0,1)$, $Z^{(1)}\sim N(0,1)$ and $Z^{(2)}\sim N(0,1)$}
  \begin{tabular}{llcccccccc}
    \toprule
    $c=0.5$ & & \multicolumn{3}{c}{\textbf{$n=200$}} & \multicolumn{3}{c}{\textbf{$n=400$}} \\
    \cmidrule(lr){3-5}\cmidrule(lr){6-8}
    & \textbf{$\alpha/\gamma$} & 0 & 5 & 10 & 0 & 5 & 10\\
    \midrule
    \multirow{3}{*}{DM}
      & 0.10  & 0.185 & 0.183 & 0.164 & 0.167 & 0.139 & 0.140\\
      & 0.05  & 0.101 & 0.119 & 0.088 & 0.093 & 0.074 & 0.080\\
      & 0.01  & 0.036 & 0.038 & 0.023 & 0.026 & 0.019 & 0.019\\
    \midrule
    \multirow{3}{*}{PJ}
      & 0.10  & 0.092 & 0.137 & 0.127 & 0.114 & 0.119 & 0.106\\
      & 0.05  & 0.051 & 0.078 & 0.061 & 0.056 & 0.060 & 0.056\\
      & 0.01  & 0.017 & 0.031 & 0.016 & 0.019 & 0.014 & 0.010\\
    \midrule
    $c=1$ & & \multicolumn{3}{c}{\textbf{$n=200$}} & \multicolumn{3}{c}{\textbf{$n=400$}} \\
    \cmidrule(lr){3-5}\cmidrule(lr){6-8}
    & \textbf{$\alpha/\gamma$} & 0 & 5 & 10 & 0 & 5 & 10\\
    \midrule
    \multirow{3}{*}{DM}
      & 0.10  & 0.156 & 0.152 & 0.126 & 0.144 & 0.111 & 0.122\\
      & 0.05  & 0.089 & 0.088 & 0.070 & 0.078 & 0.060 & 0.061\\
      & 0.01  & 0.029 & 0.032 & 0.016 & 0.024 & 0.018 & 0.017\\
    \midrule
    \multirow{3}{*}{PJ}
      & 0.10  & 0.096 & 0.122 & 0.102 & 0.100 & 0.106 & 0.097\\
      & 0.05  & 0.046 & 0.072 & 0.055 & 0.049 & 0.060 & 0.041\\
      & 0.01  & 0.015 & 0.026 & 0.016 & 0.012 & 0.016 & 0.013\\
    \midrule
    $c=2$ & & \multicolumn{3}{c}{\textbf{$n=200$}} & \multicolumn{3}{c}{\textbf{$n=400$}} \\
    \cmidrule(lr){3-5}\cmidrule(lr){6-8}
    & \textbf{$\alpha/\gamma$} & 0 & 5 & 10 & 0 & 5 & 10\\
    \midrule
    \multirow{3}{*}{DM}
      & 0.10  & 0.155 & 0.136 & 0.116 & 0.141 & 0.104 & 0.105\\
      & 0.05  & 0.088 & 0.076 & 0.060 & 0.079 & 0.053 & 0.056\\
      & 0.01  & 0.029 & 0.025 & 0.016 & 0.024 & 0.015 & 0.012\\
    \midrule
    \multirow{3}{*}{PJ}
      & 0.10  & 0.100 & 0.108 & 0.088 & 0.101 & 0.108 & 0.085\\
      & 0.05  & 0.061 & 0.065 & 0.039 & 0.052 & 0.053 & 0.042\\
      & 0.01  & 0.014 & 0.020 & 0.011 & 0.012 & 0.016 & 0.012\\
    \bottomrule
  \end{tabular}
  \label{tab:CIsind2p2q1}
\end{table}

\begin{table}[htbp]
  \centering
  \setlength{\tabcolsep}{5pt}
  \renewcommand{\arraystretch}{1.1}
  \caption{Testing conditional independence when $\Psi(X,Z)=\sin(\gamma Z^{(1)})$, $X\sim U(0,1)$, $Z^{(1)}\sim U(0,1)$ and $Z^{(2)}\sim U(0,1)$}
  \begin{tabular}{llcccccccc}
    \toprule
    $c=0.5$ & & \multicolumn{3}{c}{\textbf{$n=200$}} & \multicolumn{3}{c}{\textbf{$n=400$}} \\
    \cmidrule(lr){3-5}\cmidrule(lr){6-8}
    & \textbf{$\alpha/\gamma$} & 0 & 5 & 10 & 0 & 5 & 10\\
    \midrule
    \multirow{3}{*}{DM}
      & 0.10  & 0.169 & 0.926 & 0.157 & 0.152 & 1.000 & 0.239\\
      & 0.05  & 0.095 & 0.835 & 0.081 & 0.082 & 0.998 & 0.136\\
      & 0.01  & 0.027 & 0.557 & 0.023 & 0.019 & 0.958 & 0.036\\
    \midrule
    \multirow{3}{*}{PJ}
      & 0.10  & 0.107 & 0.924 & 0.133 & 0.100 & 1.000 & 0.196\\
      & 0.05  & 0.055 & 0.798 & 0.067 & 0.054 & 0.995 & 0.115\\
      & 0.01  & 0.017 & 0.409 & 0.023 & 0.009 & 0.930 & 0.039\\
    \midrule
    $c=1$ & & \multicolumn{3}{c}{\textbf{$n=200$}} & \multicolumn{3}{c}{\textbf{$n=400$}} \\
    \cmidrule(lr){3-5}\cmidrule(lr){6-8}
    & \textbf{$\alpha/\gamma$} & 0 & 5 & 10 & 0 & 5 & 10\\
    \midrule
    \multirow{3}{*}{DM}
      & 0.10  & 0.157 & 0.930 & 0.146 & 0.154 & 1.000 & 0.228\\
      & 0.05  & 0.094 & 0.836 & 0.081 & 0.084 & 0.999 & 0.134\\
      & 0.01  & 0.027 & 0.554 & 0.019 & 0.018 & 0.962 & 0.035\\
    \midrule
    \multirow{3}{*}{PJ}
      & 0.10  & 0.093 & 0.918 & 0.138 & 0.096 & 1.000 & 0.216\\
      & 0.05  & 0.050 & 0.785 & 0.076 & 0.047 & 0.997 & 0.131\\
      & 0.01  & 0.017 & 0.424 & 0.023 & 0.013 & 0.941 & 0.042\\
    \midrule
    $c=2$ & & \multicolumn{3}{c}{\textbf{$n=200$}} & \multicolumn{3}{c}{\textbf{$n=400$}} \\
    \cmidrule(lr){3-5}\cmidrule(lr){6-8}
    & \textbf{$\alpha/\gamma$} & 0 & 5 & 10 & 0 & 5 & 10\\
    \midrule
    \multirow{3}{*}{DM}
      & 0.10  & 0.206 & 0.924 & 0.159 & 0.208 & 1.000 & 0.233\\
      & 0.05  & 0.115 & 0.834 & 0.076 & 0.114 & 0.997 & 0.134\\
      & 0.01  & 0.036 & 0.547 & 0.021 & 0.024 & 0.962 & 0.039\\
    \midrule
    \multirow{3}{*}{PJ}
      & 0.10  & 0.103 & 0.904 & 0.163 & 0.100 & 1.000 & 0.283\\
      & 0.05  & 0.056 & 0.772 & 0.090 & 0.052 & 0.996 & 0.175\\
      & 0.01  & 0.018 & 0.381 & 0.029 & 0.011 & 0.917 & 0.049\\
    \bottomrule
  \end{tabular}
  \label{tab:CIsind3p2q1}
\end{table}

\begin{table}[htbp]
  \centering
  \setlength{\tabcolsep}{5pt}
  \renewcommand{\arraystretch}{1.1}
  \caption{Testing conditional independence when $\Psi(X,Z)=\sin(\gamma Z)$, $X^{(1)}=Z+U$, $X^{(2)}\sim N(0,1)$,$Z\sim N(0,1)$ and $U\sim N(0,1)$}
  \begin{tabular}{llcccccccc}
    \toprule
    $c=0.5$ & & \multicolumn{3}{c}{\textbf{$n=200$}} & \multicolumn{3}{c}{\textbf{$n=400$}} \\
    \cmidrule(lr){3-5}\cmidrule(lr){6-8}
    & \textbf{$\alpha/\gamma$} & 0 & 5 & 10 & 0 & 5 & 10\\
    \midrule
    \multirow{3}{*}{DM}
      & 0.10  & 0.905 & 0.508 & 0.504 & 0.831 & 0.464 & 0.426\\
      & 0.05  & 0.781 & 0.335 & 0.345 & 0.667 & 0.304 & 0.274\\
      & 0.01  & 0.476 & 0.127 & 0.117 & 0.332 & 0.122 & 0.124\\
    \midrule
    \multirow{3}{*}{PJ}
      & 0.10  & 0.181 & 0.270 & 0.278 & 0.183 & 0.278 & 0.267\\
      & 0.05  & 0.098 & 0.126 & 0.130 & 0.105 & 0.155 & 0.142\\
      & 0.01  & 0.017 & 0.036 & 0.032 & 0.031 & 0.052 & 0.044\\
    \midrule
    $c=1$ & & \multicolumn{3}{c}{\textbf{$n=200$}} & \multicolumn{3}{c}{\textbf{$n=400$}} \\
    \cmidrule(lr){3-5}\cmidrule(lr){6-8}
    & \textbf{$\alpha/\gamma$} & 0 & 5 & 10 & 0 & 5 & 10\\
    \midrule
    \multirow{3}{*}{DM}
      & 0.10  & 0.712 & 0.319 & 0.274 & 0.555 & 0.236 & 0.231\\
      & 0.05  & 0.547 & 0.185 & 0.159 & 0.351 & 0.139 & 0.139\\
      & 0.01  & 0.237 & 0.049 & 0.051 & 0.140 & 0.051 & 0.041\\
    \midrule
    \multirow{3}{*}{PJ}
      & 0.10  & 0.147 & 0.199 & 0.172 & 0.149 & 0.169 & 0.159\\
      & 0.05  & 0.077 & 0.101 & 0.099 & 0.076 & 0.100 & 0.083\\
      & 0.01  & 0.018 & 0.022 & 0.024 & 0.019 & 0.035 & 0.027\\
    \midrule
    $c=2$ & & \multicolumn{3}{c}{\textbf{$n=200$}} & \multicolumn{3}{c}{\textbf{$n=400$}} \\
    \cmidrule(lr){3-5}\cmidrule(lr){6-8}
    & \textbf{$\alpha/\gamma$} & 0 & 5 & 10 & 0 & 5 & 10\\
    \midrule
    \multirow{3}{*}{DM}
      & 0.10  & 0.626 & 0.217 & 0.222 & 0.474 & 0.178 & 0.178\\
      & 0.05  & 0.422 & 0.133 & 0.114 & 0.271 & 0.107 & 0.098\\
      & 0.01  & 0.182 & 0.047 & 0.026 & 0.100 & 0.038 & 0.023\\
    \midrule
    \multirow{3}{*}{PJ}
      & 0.10  & 0.097 & 0.130 & 0.103 & 0.099 & 0.109 & 0.107\\
      & 0.05  & 0.051 & 0.068 & 0.059 & 0.053 & 0.061 & 0.054\\
      & 0.01  & 0.012 & 0.020 & 0.015 & 0.013 & 0.021 & 0.015\\
    \bottomrule
  \end{tabular}
  \label{tab:CIsind1p1q2}
\end{table}

\begin{table}[htbp]
  \centering
  \setlength{\tabcolsep}{5pt}
  \renewcommand{\arraystretch}{1.1}
  \caption{Testing conditional independence when $\Psi(X,Z)=\sin(\gamma Z)$, $X^{(1)}\sim N(0,1)$, $X^{(2)}\sim N(0,1)$, and $Z\sim N(0,1)$}
  \begin{tabular}{llcccccccc}
    \toprule
    $c=0.5$ & & \multicolumn{3}{c}{\textbf{$n=200$}} & \multicolumn{3}{c}{\textbf{$n=400$}} \\
    \cmidrule(lr){3-5}\cmidrule(lr){6-8}
    & \textbf{$\alpha/\gamma$} & 0 & 5 & 10 & 0 & 5 & 10\\
    \midrule
    \multirow{3}{*}{DM}
      & 0.10  & 0.854 & 0.513 & 0.527 & 0.755 & 0.441 & 0.455\\
      & 0.05  & 0.706 & 0.353 & 0.343 & 0.565 & 0.292 & 0.317\\
      & 0.01  & 0.370 & 0.145 & 0.141 & 0.251 & 0.124 & 0.137\\
    \midrule
    \multirow{3}{*}{PJ}
      & 0.10  & 0.181 & 0.292 & 0.321 & 0.219 & 0.307 & 0.332\\
      & 0.05  & 0.095 & 0.156 & 0.169 & 0.116 & 0.180 & 0.172\\
      & 0.01  & 0.027 & 0.039 & 0.039 & 0.040 & 0.057 & 0.056\\
    \midrule
    $c=1$ & & \multicolumn{3}{c}{\textbf{$n=200$}} & \multicolumn{3}{c}{\textbf{$n=400$}} \\
    \cmidrule(lr){3-5}\cmidrule(lr){6-8}
    & \textbf{$\alpha/\gamma$} & 0 & 5 & 10 & 0 & 5 & 10\\
    \midrule
    \multirow{3}{*}{DM}
      & 0.10  & 0.604 & 0.283 & 0.280 & 0.431 & 0.244 & 0.253\\
      & 0.05  & 0.421 & 0.179 & 0.170 & 0.265 & 0.141 & 0.154\\
      & 0.01  & 0.168 & 0.054 & 0.070 & 0.083 & 0.050 & 0.055\\
    \midrule
    \multirow{3}{*}{PJ}
      & 0.10  & 0.156 & 0.166 & 0.214 & 0.166 & 0.199 & 0.195\\
      & 0.05  & 0.087 & 0.096 & 0.103 & 0.085 & 0.108 & 0.107\\
      & 0.01  & 0.029 & 0.025 & 0.030 & 0.023 & 0.036 & 0.032\\
    \midrule
    $c=2$ & & \multicolumn{3}{c}{\textbf{$n=200$}} & \multicolumn{3}{c}{\textbf{$n=400$}} \\
    \cmidrule(lr){3-5}\cmidrule(lr){6-8}
    & \textbf{$\alpha/\gamma$} & 0 & 5 & 10 & 0 & 5 & 10\\
    \midrule
    \multirow{3}{*}{DM}
      & 0.10  & 0.436 & 0.208 & 0.192 & 0.310 & 0.190 & 0.177\\
      & 0.05  & 0.279 & 0.117 & 0.112 & 0.163 & 0.104 & 0.104\\
      & 0.01  & 0.111 & 0.032 & 0.035 & 0.048 & 0.034 & 0.027\\
    \midrule
    \multirow{3}{*}{PJ}
      & 0.10  & 0.122 & 0.131 & 0.122 & 0.094 & 0.130 & 0.116\\
      & 0.05  & 0.062 & 0.070 & 0.059 & 0.054 & 0.070 & 0.069\\
      & 0.01  & 0.015 & 0.017 & 0.019 & 0.009 & 0.016 & 0.019\\
    \bottomrule
  \end{tabular}
  \label{tab:CIsind2p1q2}
\end{table}

\begin{table}[htbp]
  \centering
  \setlength{\tabcolsep}{5pt}
  \renewcommand{\arraystretch}{1.1}
  \caption{Testing conditional independence when $\Psi(X,Z)=\sin(\gamma Z)$, $X^{(1)}\sim U(0,1)$, $X^{(2)}\sim U(0,1)$ and $Z\sim U(0,1)$}
  \begin{tabular}{llcccccccc}
    \toprule
    $c=0.5$ & & \multicolumn{3}{c}{\textbf{$n=200$}} & \multicolumn{3}{c}{\textbf{$n=400$}} \\
    \cmidrule(lr){3-5}\cmidrule(lr){6-8}
    & \textbf{$\alpha/\gamma$} & 0 & 5 & 10 & 0 & 5 & 10\\
    \midrule
    \multirow{3}{*}{DM}
      & 0.10  & 0.579 & 0.972 & 0.470 & 0.416 & 0.998 & 0.738\\
      & 0.05  & 0.359 & 0.938 & 0.318 & 0.231 & 0.998 & 0.557\\
      & 0.01  & 0.130 & 0.762 & 0.125 & 0.078 & 0.991 & 0.249\\
    \midrule
    \multirow{3}{*}{PJ}
      & 0.10  & 0.126 & 0.961 & 0.326 & 0.127 & 0.999 & 0.611\\
      & 0.05  & 0.063 & 0.918 & 0.193 & 0.073 & 0.998 & 0.446\\
      & 0.01  & 0.018 & 0.706 & 0.069 & 0.018 & 0.990 & 0.194\\
    \midrule
    $c=1$ & & \multicolumn{3}{c}{\textbf{$n=200$}} & \multicolumn{3}{c}{\textbf{$n=400$}} \\
    \cmidrule(lr){3-5}\cmidrule(lr){6-8}
    & \textbf{$\alpha/\gamma$} & 0 & 5 & 10 & 0 & 5 & 10\\
    \midrule
    \multirow{3}{*}{DM}
      & 0.10  & 0.467 & 0.985 & 0.428 & 0.353 & 1.000 & 0.744\\
      & 0.05  & 0.267 & 0.953 & 0.264 & 0.198 & 1.000 & 0.574\\
      & 0.01  & 0.079 & 0.809 & 0.110 & 0.057 & 0.998 & 0.234\\
    \midrule
    \multirow{3}{*}{PJ}
      & 0.10  & 0.103 & 0.985 & 0.341 & 0.104 & 1.000 & 0.654\\
      & 0.05  & 0.051 & 0.957 & 0.213 & 0.062 & 1.000 & 0.494\\
      & 0.01  & 0.011 & 0.855 & 0.083 & 0.018 & 0.998 & 0.241\\
    \midrule
    $c=2$ & & \multicolumn{3}{c}{\textbf{$n=200$}} & \multicolumn{3}{c}{\textbf{$n=400$}} \\
    \cmidrule(lr){3-5}\cmidrule(lr){6-8}
    & \textbf{$\alpha/\gamma$} & 0 & 5 & 10 & 0 & 5 & 10\\
    \midrule
    \multirow{3}{*}{DM}
      & 0.10  & 0.444 & 0.986 & 0.639 & 0.590 & 1.000 & 0.764\\
      & 0.05  & 0.291 & 0.958 & 0.419 & 0.362 & 1.000 & 0.586\\
      & 0.01  & 0.118 & 0.818 & 0.148 & 0.112 & 0.997 & 0.289\\
    \midrule
    \multirow{3}{*}{PJ}
      & 0.10  & 0.102 & 0.982 & 0.377 & 0.104 & 1.000 & 0.768\\
      & 0.05  & 0.046 & 0.956 & 0.222 & 0.051 & 1.000 & 0.555\\
      & 0.01  & 0.010 & 0.851 & 0.087 & 0.017 & 0.999 & 0.280\\
    \bottomrule
  \end{tabular}
  \label{tab:CIsind3p1q2}
\end{table}

\begin{table}[htbp]
  \centering
  \setlength{\tabcolsep}{5pt}
  \renewcommand{\arraystretch}{1.1}
  \caption{Power of testing conditional independence when $\Psi(X,Z)=\exp(\gamma Z^{(1)})$, $X^{(1)}=Z^{(1)}+U$, $Z^{(1)}\sim N(0,1)$ and $U\sim N(0,1)$}
  \begin{tabular}{llcccccccc}
    \toprule
    $p=1,q=1$ & & \multicolumn{3}{c}{\textbf{$n=200$}} & \multicolumn{3}{c}{\textbf{$n=400$}} \\
    \cmidrule(lr){3-5}\cmidrule(lr){6-8}
    & \textbf{$\alpha/c$} & 0.5 & 1 & 2 & 0.5 & 1 & 2\\
    \midrule
    \multirow{3}{*}{DM}
      & 0.10  & 0.922 & 0.918 & 0.922 & 1.000 & 1.000 & 1.000\\
      & 0.05  & 0.779 & 0.768 & 0.771 & 0.995 & 0.995 & 0.997\\
      & 0.01  & 0.442 & 0.436 & 0.429 & 0.904 & 0.918 & 0.922\\
    \midrule
    \multirow{3}{*}{PJ}
      & 0.10  & 0.897 & 0.826 & 0.855 & 0.999 & 1.000 & 1.000\\
      & 0.05  & 0.649 & 0.633 & 0.711 & 0.992 & 0.989 & 0.997\\
      & 0.01  & 0.293 & 0.337 & 0.410 & 0.791 & 0.820 & 0.880\\
    \midrule
    $p=2,q=1$ & & \multicolumn{3}{c}{\textbf{$n=200$}} & \multicolumn{3}{c}{\textbf{$n=400$}} \\
    \cmidrule(lr){3-5}\cmidrule(lr){6-8}
    & \textbf{$\alpha/c$} & 0.5 & 1 & 2 & 0.5 & 1 & 2\\
    \midrule
    \multirow{3}{*}{DM}
      & 0.10  & 0.282 & 0.245 & 0.237 & 0.474 & 0.464 & 0.468\\
      & 0.05  & 0.179 & 0.143 & 0.137 & 0.326 & 0.323 & 0.312\\
      & 0.01  & 0.049 & 0.048 & 0.046 & 0.104 & 0.091 & 0.086\\
    \midrule
    \multirow{3}{*}{PJ}
      & 0.10  & 0.237 & 0.203 & 0.191 & 0.439 & 0.315 & 0.360\\
      & 0.05  & 0.128 & 0.099 & 0.091 & 0.270 & 0.151 & 0.180\\
      & 0.01  & 0.036 & 0.028 & 0.024 & 0.102 & 0.039 & 0.034\\
    \midrule
    $p=1,q=2$ & & \multicolumn{3}{c}{\textbf{$n=200$}} & \multicolumn{3}{c}{\textbf{$n=400$}} \\
    \cmidrule(lr){3-5}\cmidrule(lr){6-8}
    & \textbf{$\alpha/c$} & 0.5 & 1 & 2 & 0.5 & 1 & 2\\
    \midrule
    \multirow{3}{*}{DM}
      & 0.10  & 0.594 & 0.405 & 0.389 & 0.627 & 0.512 & 0.478\\
      & 0.05  & 0.417 & 0.258 & 0.232 & 0.442 & 0.339 & 0.314\\
      & 0.01  & 0.172 & 0.095 & 0.079 & 0.215 & 0.135 & 0.117\\
    \midrule
    \multirow{3}{*}{PJ}
      & 0.10  & 0.422 & 0.365 & 0.309 & 0.561 & 0.558 & 0.543\\
      & 0.05  & 0.216 & 0.223 & 0.156 & 0.381 & 0.352 & 0.311\\
      & 0.01  & 0.054 & 0.063 & 0.049 & 0.137 & 0.127 & 0.087\\
    \bottomrule
  \end{tabular}
  \label{tab:CIsind1}
\end{table}

\begin{table}[htbp]
  \centering
  \setlength{\tabcolsep}{5pt}
  \renewcommand{\arraystretch}{1.1}
  \caption{Power of testing conditional independence when $\Psi(X,Z)=\exp(\gamma Z^{(1)})$, $X^{(1)}\sim N(0,1)$ and $Z^{(1)}\sim N(0,1)$}
  \begin{tabular}{llcccccccc}
    \toprule
    $p=1,q=1$ & & \multicolumn{3}{c}{\textbf{$n=200$}} & \multicolumn{3}{c}{\textbf{$n=400$}} \\
    \cmidrule(lr){3-5}\cmidrule(lr){6-8}
    & \textbf{$\alpha/c$} & 0.5 & 1 & 2 & 0.5 & 1 & 2\\
    \midrule
    \multirow{3}{*}{DM}
      & 0.10  & 0.869 & 0.870 & 0.880 & 0.999 & 1.000 & 1.000\\
      & 0.05  & 0.699 & 0.678 & 0.699 & 0.989 & 0.991 & 0.991\\
      & 0.01  & 0.338 & 0.301 & 0.307 & 0.860 & 0.858 & 0.860\\
    \midrule
    \multirow{3}{*}{PJ}
      & 0.10  & 0.851 & 0.838 & 0.920 & 1.000 & 1.000 & 1.000\\
      & 0.05  & 0.581 & 0.662 & 0.810 & 0.994 & 0.987 & 0.998\\
      & 0.01  & 0.293 & 0.361 & 0.502 & 0.752 & 0.863 & 0.937\\
    \midrule
    $p=2,q=1$ & & \multicolumn{3}{c}{\textbf{$n=200$}} & \multicolumn{3}{c}{\textbf{$n=400$}} \\
    \cmidrule(lr){3-5}\cmidrule(lr){6-8}
    & \textbf{$\alpha/c$} & 0.5 & 1 & 2 & 0.5 & 1 & 2\\
    \midrule
    \multirow{3}{*}{DM}
      & 0.10  & 0.238 & 0.201 & 0.195 & 0.337 & 0.334 & 0.326\\
      & 0.05  & 0.130 & 0.112 & 0.099 & 0.217 & 0.202 & 0.200\\
      & 0.01  & 0.038 & 0.027 & 0.025 & 0.080 & 0.064 & 0.062\\
    \midrule
    \multirow{3}{*}{PJ}
      & 0.10  & 0.201 & 0.167 & 0.166 & 0.305 & 0.226 & 0.298\\
      & 0.05  & 0.121 & 0.096 & 0.094 & 0.176 & 0.116 & 0.158\\
      & 0.01  & 0.040 & 0.031 & 0.024 & 0.054 & 0.034 & 0.030\\
    \midrule
    $p=1,q=2$ & & \multicolumn{3}{c}{\textbf{$n=200$}} & \multicolumn{3}{c}{\textbf{$n=400$}} \\
    \cmidrule(lr){3-5}\cmidrule(lr){6-8}
    & \textbf{$\alpha/c$} & 0.5 & 1 & 2 & 0.5 & 1 & 2\\
    \midrule
    \multirow{3}{*}{DM}
      & 0.10  & 0.542 & 0.351 & 0.271 & 0.529 & 0.372 & 0.289\\
      & 0.05  & 0.363 & 0.223 & 0.140 & 0.367 & 0.231 & 0.166\\
      & 0.01  & 0.160 & 0.079 & 0.050 & 0.164 & 0.078 & 0.051\\
    \midrule
    \multirow{3}{*}{PJ}
      & 0.10  & 0.360 & 0.253 & 0.203 & 0.398 & 0.294 & 0.287\\
      & 0.05  & 0.196 & 0.152 & 0.107 & 0.232 & 0.154 & 0.151\\
      & 0.01  & 0.051 & 0.044 & 0.031 & 0.085 & 0.050 & 0.042\\
    \bottomrule
  \end{tabular}
  \label{tab:CIsind2}
\end{table}

\begin{table}[htbp]
  \centering
  \setlength{\tabcolsep}{5pt}
  \renewcommand{\arraystretch}{1.1}
  \caption{Power of testing conditional independence when $\Psi(X,Z)=\exp(\gamma Z^{(1)})$, $X^{(1)}\sim U(0,1)$ and $Z^{(1)}\sim U(0,1)$}
  \begin{tabular}{llcccccccc}
    \toprule
    $p=1,q=1$ & & \multicolumn{3}{c}{\textbf{$n=200$}} & \multicolumn{3}{c}{\textbf{$n=400$}} \\
    \cmidrule(lr){3-5}\cmidrule(lr){6-8}
    & \textbf{$\alpha/c$} & 0.5 & 1 & 2 & 0.5 & 1 & 2\\
    \midrule
    \multirow{3}{*}{DM}
      & 0.10  & 1.000 & 1.000 & 1.000 & 1.000 & 1.000 & 1.000\\
      & 0.05  & 1.000 & 1.000 & 1.000 & 1.000 & 1.000 & 1.000\\
      & 0.01  & 1.000 & 1.000 & 1.000 & 1.000 & 1.000 & 1.000\\
    \midrule
    \multirow{3}{*}{PJ}
      & 0.10  & 1.000 & 1.000 & 1.000 & 1.000 & 1.000 & 1.000\\
      & 0.05  & 1.000 & 1.000 & 1.000 & 1.000 & 1.000 & 1.000\\
      & 0.01  & 1.000 & 1.000 & 1.000 & 1.000 & 1.000 & 1.000\\
    \midrule
    $p=2,q=1$ & & \multicolumn{3}{c}{\textbf{$n=200$}} & \multicolumn{3}{c}{\textbf{$n=400$}} \\
    \cmidrule(lr){3-5}\cmidrule(lr){6-8}
    & \textbf{$\alpha/c$} & 0.5 & 1 & 2 & 0.5 & 1 & 2\\
    \midrule
    \multirow{3}{*}{DM}
      & 0.10  & 0.924 & 0.930 & 0.917 & 1.000 & 1.000 & 1.000\\
      & 0.05  & 0.817 & 0.820 & 0.819 & 0.997 & 0.998 & 0.998\\
      & 0.01  & 0.505 & 0.519 & 0.501 & 0.948 & 0.942 & 0.947\\
    \midrule
    \multirow{3}{*}{PJ}
      & 0.10  & 0.990 & 0.990 & 0.990 & 1.000 & 1.000 & 1.000\\
      & 0.05  & 0.951 & 0.956 & 0.968 & 1.000 & 1.000 & 1.000\\
      & 0.01  & 0.686 & 0.712 & 0.759 & 0.997 & 0.999 & 1.000\\
    \midrule
    $p=1,q=2$ & & \multicolumn{3}{c}{\textbf{$n=200$}} & \multicolumn{3}{c}{\textbf{$n=400$}} \\
    \cmidrule(lr){3-5}\cmidrule(lr){6-8}
    & \textbf{$\alpha/c$} & 0.5 & 1 & 2 & 0.5 & 1 & 2\\
    \midrule
    \multirow{3}{*}{DM}
      & 0.10  & 0.401 & 0.362 & 0.420 & 0.620 & 0.626 & 0.657\\
      & 0.05  & 0.239 & 0.228 & 0.269 & 0.450 & 0.447 & 0.469\\
      & 0.01  & 0.091 & 0.072 & 0.088 & 0.221 & 0.198 & 0.208\\
    \midrule
    \multirow{3}{*}{PJ}
      & 0.10  & 0.388 & 0.357 & 0.320 & 0.703 & 0.711 & 0.652\\
      & 0.05  & 0.230 & 0.247 & 0.183 & 0.544 & 0.569 & 0.468\\
      & 0.01  & 0.067 & 0.073 & 0.058 & 0.265 & 0.267 & 0.203\\
    \bottomrule
  \end{tabular}
  \label{tab:CIsind3}
\end{table}

%\section{Definitions In This Article}\label{sec.AppendixB}
\section{Some definitions}\label{sec.AppendixB}
    In this section, we first introduce the necessary notations to simplify the presentation of the proofs and recall several important notations defined earlier in the paper. We adopt the notation used in \cite{delgado2001significance} and denote each sample as $\chi_i = (Y_i,X_i,Z_i)$. For a three-variable asymmetric kernel function $\varphi_{wb}(\chi_i,\chi_j,\chi_k)$, indexed by $w$ and depending on the bandwidth parameter $b$, we define its symmetric counterpart 
    \begin{align*}
        \Phi_{wb}(\chi_i,\chi_j,\chi_k)=\frac{1}{6}\sum_{\substack{l_1,l_2,l_3\in\{i,j,k\}\\ l_1\neq l_2 \neq l_3}}\varphi_{wb}(\chi_{l_1},\chi_{l_2},\chi_{l_3}).
    \end{align*}
    We then employ the $U$-process and its Hoeffding projections constructed from the symmetric kernel in \cite{delgado2001significance} to define the counterpart from the asymmetric kernel,
    \begin{align*}
        U_n^{(m)}(\varphi_{wb}):=U_n^{(m)}(\Phi_{wb}),\quad U_n^{(m)}(\pi_k\varphi_{wb}):=U_n^{(m)}(\pi_k\Phi_{wb}).
    \end{align*}
    More specifically, we define the conditional expectation of the function $\phi$ given the $i$-th sample as $\mathbb{E}_i(\phi):=\mathbb{E}(\phi\vert \chi_i)$. Consequently, the first-order component of the Hoeffding projection $U_n^{(1)}(\pi_1\varphi_{wb})$ is equivalent to
    \begin{align*}
        \frac{1}{n}\sum_{l=1}^n\sum_{l\in\{i,j,k\}}\mathbb{E}_l\left[\varphi_{wb}(\chi_i,\chi_j,\chi_k)\right]-3\mathbb{E}\left[\varphi_{wb}(\chi_i,\chi_j,\chi_k)\right].
    \end{align*}
    Subsequently, for the $q$-dimensional variable $X$ and the $p$-dimensional variable $Z$, we employ product kernel functions $K(\cdot)$ on $\mathbb{R}^q$ with bandwidth $a$ and  $L(\cdot)$ on $\mathbb{R}^p$ with bandwidth $b$, respectively, which are defined as follows,
    \begin{align*}
        K_{ij} = K_a\left(\frac{X_i-X_j}{a}\right), \quad L_{ij} = L_b\left(\frac{Z_i-Z_j}{b}\right).
    \end{align*}
    Note that for a $q$-dimensional vector $u$ and a $p$-dimensional vector $v$, the kernel functions $K(\cdot)$ and $L(\cdot)$ are constructed using the same univariate kernel function $k(\cdot)$. Specifically,
    \begin{align*}
        K(u) = \prod_{i=1}^qk(u^{(i)}), \quad L(v) = \prod_{j=1}^pk(v^{(j)}),
    \end{align*}
    where $u^{(i)}$ denotes the $i$-th component of $u$. Consequently, the kernel density estimator defined above can be expressed as
    \begin{align*} 
        \hat{f}_X(X_i) = \frac{1}{(n-1)} \sum_{\substack{j=1 \\ j \neq i}}^n K_{ij},\quad \hat{f}_W(W_i) = \frac{1}{(n-1)} \sum_{\substack{j=1 \\ j \neq i}}^n K_{ij}L_{ij}.
    \end{align*}
    In addition, we define several projections of the kernel function given a sample $\chi_i$ or a covariate $X_i$, which are frequently used in the subsequent analysis,
    \begin{align*}
        &d_X(X_i) = \mathbb{E}\left[K_{ij}\vert X_i\right], \quad d_W(W_i) = \mathbb{E}\left[K_{ij}L_{ij}\vert W_i\right], \quad d_W^{(2)}(W_i) = \mathbb{E}\left[K^2_{ij}L_{ij}\vert W_i\right].
    \end{align*}
    Subsequently, we simplify the notations for several terms that are repeatedly used. Specifically, we use $1_x(X_i) = 1(X_i\leq x)$ to denote the indicator function, $p_w(X_i) = f_X(X_i)1_x(X_i)\Delta^{-1}(X_i)G(z;X_i)$ as a shorthand for the main structure of the projection constructed in this paper, and define the commonly used components in the statistics as
    \begin{align*}
        &\psi^{(1)}_{wb}(\chi_i,\chi_j) = \left(Y_i-m(X_i)\right)K^2_{ij}L_{ij}1_x(X_i)\Delta^{-1}(X_i)G(z;X_i),\\
        &\varphi^{(1)}_{wb}(\chi_i,\chi_j,\chi_k) = \left(Y_i-m(X_i)\right)K_{ij}K_{ik}L_{ik}1_x(X_i)\Delta^{-1}(X_i)G(z;X_i),\\
        &\psi_{wb}^{(2)}(\chi_i,\chi_j) = \left(Y_j-m(X_j)\right)K^2_{ij}L_{ij}1_x(X_i)\Delta^{-1}(X_i)G(z;X_i),\\
        &\varphi_{wb}^{(2)}(\chi_i,\chi_j,\chi_k) = \left(Y_j-m(X_j)\right)K_{ij}K_{ik}L_{ik}1_x(X_i)\Delta^{-1}(X_i)G(z;X_i),\\
        &\psi_{wb}^{(3)}(\chi_i,\chi_j) = \left(m(X_i)-m(X_j)\right)K^2_{ij}L_{ij}1_x(X_i)\Delta^{-1}(X_i)G(z;X_i),\\
        &\varphi_{wb}^{(3)}(\chi_i,\chi_j,\chi_k) = \left(m(X_i)-m(X_j)\right)K_{ij}K_{ik}L_{ik}1_x(X_i)\Delta^{-1}(X_i)G(z;X_i).
    \end{align*}
    We decompose the statistic into the same form as in \cite{delgado2001significance}, along with the components introduced by the projection. We then define its population counterpart, obtained by removing the estimation effect, as well as the bootstrap versions of all decomposed terms, as follows
    \begin{align*}
         &\hat T_n(w) = \frac{1}{n(n-1)} \sum_{i=1}^n\sum_{\substack{j=1 \\ j \neq i}}^n (Y_i - Y_j)K_{ij}1_w(W_i),\\
         &\hat T_n^\ast(w) = \frac{1}{n(n-1)} \sum_{i=1}^n\sum_{\substack{j=1 \\ j \neq i}}^n V_i(Y_i - Y_j)K_{ij}1_w(W_i),\\
         &\hat{S}_n(w) = \frac{1}{n(n-1)^2} \sum_{i=1}^n\sum_{\substack{j=1 \\ j \neq i}}^n\sum_{\substack{k=1 \\ k \neq i}}^n (Y_i - Y_j)K_{ij}K_{ik}L_{ik}1_x(X_i)\hat{\Delta}_n^{-1}(X_i)\hat{G}_n(z;X_i),\\
        &\hat{S}^\ast_n(w) = \frac{1}{n(n-1)^2} \sum_{i=1}^n\sum_{\substack{j=1 \\ j \neq i}}^n\sum_{\substack{k=1 \\ k \neq i}}^n V_i (Y_i - Y_j)K_{ij}K_{ik}L_{ik}1_x(X_i)\hat{\Delta}_n^{-1}(X_i)\hat{G}_n(z;X_i),\\
        &S_n(w) = \frac{1}{n(n-1)^2} \sum_{i=1}^n\sum_{\substack{j=1 \\ j \neq i}}^n\sum_{\substack{k=1 \\ k \neq i}}^n (Y_i - Y_j)K_{ij}K_{ik}L_{ik}1_x(X_i)\Delta^{-1}(X_i)G(z;X_i),\\
        &S_n^\ast(w) = \frac{1}{n(n-1)^2} \sum_{i=1}^n\sum_{\substack{j=1 \\ j \neq i}}^n\sum_{\substack{k=1 \\ k \neq i}}^n V_i(Y_i - Y_j)K_{ij}K_{ik}L_{ik}1_x(X_i)\Delta^{-1}(X_i)G(z;X_i),
    \end{align*}
    and their asymptotically equivalent counterparts
    \begin{align*}
        &T_n^{(1)}(w) = \frac{1}{n}\sum_{i=1}^n\left(Y_i-m(X_i)\right)f_X(X_i)1_w(W_i),\\
        &T_n^{(2)}(w) = \frac{1}{n}\sum_{i=1}^n\left(Y_i-m(X_i)\right)1_x(X_i)G(z;X_i),\\
        &S_n^{(1)}(w) = \frac{1}{n}\sum_{i=1}^n\left(Y_i-m(X_i)\right)f_X(X_i)1_x(X_i)f_W(W_i)\Delta^{-1}(X_i)G(z;X_i),\\
        &{T_n^{(1)}}^\ast(w) = \frac{1}{n}\sum_{i=1}^nV_i\left(Y_i-m(X_i)\right)f_X(X_i)1_w(W_i),\\
        &{S_n^{(1)}}^\ast(w) = \frac{1}{n}\sum_{i=1}^nV_i\left(Y_i-m(X_i)\right)f_X(X_i)1_x(X_i)f_W(W_i)\Delta^{-1}(X_i)G(z;X_i).
    \end{align*}

\section{Proofs of theorems}\label{sec.AppendixC}
\begin{proof}[Proof of Theorem \ref{thm.null}]
    We start by decomposing 
    \begin{align}\label{thmproof.null decomp}
        \hat R_n(w) = \hat T_n(w) - \hat{S}_n(w),
    \end{align}
    where we recall the definitions of $\hat T_n(w)$ and $\hat{S}_n(w)$ in Section \ref{sec.AppendixB}. Noting that, under the null hypothesis, the combined results of Lemmas \ref{lemma.Shat-S} and \ref{lemma.Sn} imply 
    \begin{align}\label{thmproof.null Sn}
        \sup_{w}\left\vert \hat{S}_n(w) - S_n^{(1)}(w) + T_n^{(2)}(w)\right\vert = o_p\left(n^{-1/2}\right).
    \end{align}
    Moreover, based on Propositions $2$ and $3$ in \cite{delgado2001significance}, we can also obtain the result concerning $\hat T_n(w)$ under the null hypothesis:
    \begin{align}\label{thmproof.null Tn}
        \sup_{w}\left\vert \hat T_n(w) - T_n^{(1)}(w) + T_n^{(2)}(w)\right\vert = o_p\left(n^{-1/2}\right).
    \end{align}
    Combining \eqref{thmproof.null Sn} and \eqref{thmproof.null Tn} leads to the result stated in equation \eqref{thm.eq null}. As stated in \cite{delgado2001significance}, $T_n^{(1)}(w)$ is VC-type with common square $\mathbb{P}$-integrable envelope proportional to $\vert Y-m(X)\vert$, whose integrability is ensured under Assumption \ref{ass.sample}. Given the restriction on $\Delta^{-1}(x)f^2_X(x)$ imposed by Assumption \ref{ass.sample} and the property of $G(z;x)$ derived in Lemma \ref{lemma.G}, it follows that $S_n^{(1)}(w)$ is a centered empirical process indexed by a VC-type class of functions with common square P-integrable envelope $\vert (Y-m(X))f_X^2(X)\Delta^{-1}(X)\vert$. Since the sum of two VC-type classes of functions is also VC-type with envelope the sum of the envelopes, the class $\{\gamma^{(1)}_w-\gamma^{(2)}_w:w\in\mathbb{R}^{p+q}\}$ is a VC-type class with square $\mathbb{P}$-integrable envelope, where $\gamma^{(1)}_w$ and $\gamma^{(2)}_w$ is defined in Section \ref{sec.AppendixB}. Therefore, the aforementioned function class is $\mathbb{P}$-Donsker, and the conclusion of Theorem \ref{thm.null} holds.
\end{proof}

\begin{proof}[Proof of Theorem \ref{thm.local}]
    Noting that decomposition \eqref{thmproof.null decomp} does not depend on the null hypothesis, and that expressions \eqref{thmproof.null Sn} and \eqref{thmproof.null Tn} hold under the local alternatives for the same reasons as under the null, we can similarly first establish 
    \begin{align}\label{thm.eq local}
        \sup_{w}\left\vert \sqrt{n}\hat R_n(w)-\frac{1}{\sqrt{n}}\sum_{i=1}^n\epsilon_if_X(X_i)1_x(X_i)\mathcal{P}1_z(Z_i)\right\vert=o_p\left(1\right).
    \end{align}
    Based on the expression of the local alternatives defined in \eqref{hyp.local}, and noting that the result established in \cite{delgado2001significance} under the local alternatives can also be applied, the result in Theorem \ref{thm.local} follows.
\end{proof}

\begin{proof}[Proof of Theorem \ref{thm.alt}]
    The proof of Theorem \ref{thm.alt} follows the same line of reasoning as that of Theorem \ref{thm.null}. The distinction is that, under the alternative hypothesis, only convergence in probability is required. Consequently, when Lemmas \ref{lemma.Shat-S} and \ref{lemma.Sn} are invoked, results \eqref{lemma.Shat-S eq2} and \eqref{lemma.Sn alt} are employed instead of the previous conclusions regarding the convergence rate.
\end{proof}

\begin{proof}[Proof of Theorem \ref{thm.boot}]
    The proof of the bootstrap version likewise begins with the decomposition of
    \begin{align}\label{thmproof.boot decomp}
        \hat R_n^\ast(w) = \hat T_n^\ast(w) - \hat{S}_n^\ast(w).
    \end{align}
    Subsequently, regardless of the hypothesis considered, Lemmas \ref{lemma.Shat-S boot} and \ref{lemma.Sn boot}, along with Lemma $2$ from \cite{delgado2001significance}, imply the conclusion for the bootstrap counterpart, namely,
    \begin{align}\label{thmproof.boot res}
        \sup_{w}\left\vert \hat R^\ast_n(w)-\frac{1}{n}\sum_{i=1}^nV_i\left(Y_i-m(X_i)\right)f_X(X_i)1_x(X_i)\mathcal{P}1_z(Z_i)\right\vert=o_p\left(n^{-1/2}\right).
    \end{align}
    Under the null and local alternative hypotheses, the argument presented in the proof of Theorem \ref{thm.null} can be established in an entirely analogous manner to establish
    \begin{align}\label{thmproof.boot null}
        \sup_w\left\vert \frac{1}{n}\sum_{i=1}^nV_i\left(Y_i-m(X_i)\right)f_X(X_i)1_x(X_i)\mathcal{P}1_z(Z_i) - R_{\infty}(w)\right\vert  = o_p\left(n^{-1/2}\right),
    \end{align}
    where $R_\infty(\cdot)$ denotes the same limiting process as that discussed in Theorems \ref{thm.null} and \ref{thm.local}. Under the alternative hypothesis, as indicated by the previously mentioned lemmas, the form of equation \eqref{thmproof.boot res} remains the same; however, the distribution of the underlying data differs, and hence the limiting process of the empirical process in \eqref{thmproof.boot res} will change accordingly,
    \begin{align}\label{thmproof.boot alt}
        \sup_w\left\vert \frac{1}{n}\sum_{i=1}^nV_i\left(Y_i-m(X_i)\right)f_X(X_i)1_x(X_i)\mathcal{P}1_z(Z_i) - R_{\infty}^1(w)\right\vert  = o_p\left(n^{-1/2}\right).
    \end{align}
    Combining equations \eqref{thmproof.boot decomp}--\eqref{thmproof.boot alt}, all the results required for Theorem \ref{thm.boot} hold.
\end{proof}

\section{Lemmas and proofs}\label{sec.AppendixD}
\begin{lemma}\label{lemma.Shat-S}
    Suppose that Assumptions \ref{ass.sample}--\ref{ass.bandwidth} hold, whether under the null hypothesis or the sequence of local alternatives,
    \begin{align}\label{lemma.Shat-S eq1}
        \sup_w\left\vert \hat{S}_n(w)-S_n(w)\right\vert = o_p\left(n^{-1/2}\right).
    \end{align}
    Under the alternative hypothesis,
    \begin{align}\label{lemma.Shat-S eq2}
        \sup_w\left\vert \hat{S}_n(w)-S_n(w)\right\vert = o_p\left(1\right).
    \end{align}
\end{lemma}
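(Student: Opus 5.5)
The plan is to write the difference $\hat S_n(w)-S_n(w)$ as a $U$-type average multiplied by the estimation error of the projection coefficient. Define
\begin{align*}
\hat\beta_n(z;X_i)=\hat\Delta_n^{-1}(X_i)\hat G_n(z;X_i),\qquad \beta(z;X_i)=\Delta^{-1}(X_i)G(z;X_i).
\end{align*}
Then
\begin{align*}
\hat S_n(w)-S_n(w)=\frac{1}{n(n-1)^2}\sum_{i}\sum_{j\neq i}\sum_{k\neq i}(Y_i-Y_j)K_{ij}K_{ik}L_{ik}1_x(X_i)\left[\hat\beta_n(z;X_i)-\beta(z;X_i)\right].
\end{align*}
Since $\frac{1}{n-1}\sum_{k\neq i}K_{ik}L_{ik}=\hat f_W(W_i)$ and $\frac{1}{n-1}\sum_{j\neq i}(Y_i-Y_j)K_{ij}=\hat\epsilon_i\hat f_X(X_i)$, this equals $\frac1n\sum_i \hat\epsilon_i\hat f_X(X_i)\hat f_W(W_i)1_x(X_i)[\hat\beta_n-\beta]$.

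The key step is a linearization of the coefficient error:
\begin{align*}
\hat\beta_n-\beta=\Delta^{-1}(\hat G_n-G)-\Delta^{-2}G(\hat\Delta_n-\Delta)+r_n ,
\end{align*}
where the remainder $r_n$ is quadratic in $(\hat G_n-G,\hat\Delta_n-\Delta)$ and carries a factor $\hat\Delta_n^{-1}\Delta^{-2}$. Here $\hat G_n(z;X_i)-G(z;X_i)=\frac{1}{n-1}\sum_{l\neq i}K_{il}\bar L_b(z-Z_l)-G(z;X_i)$, with $\bar L_b$ the integrated kernel, which is a leave-one-out kernel average. Similarly, $\hat\Delta_n(X_i)$ is a second-order $U$-statistic in $(l,l')$ built from $\int L_b(\bar z-Z_l)L_b(\bar z-Z_{l'})d\bar z$, which is a convolution kernel.

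I will use the uniform-in-$x$ control of these estimators, namely Lemma~\ref{lemma.delta} for $\hat\Delta_n$ and Lemma~\ref{lemma.G} for $G$ and $\hat G_n$, weighted by $f_X^2\Delta^{-1}$ so that the unbounded ratio is handled through the moment condition in Assumption~\ref{ass.sample}. Bias terms are $O(a^{\lambda_1}+b^{\lambda_2})$ by the higher-order kernel in Assumption~\ref{ass.structual}. Stochastic terms are $O_p((\log n)(na^qb^p)^{-1/2})$. Assumption~\ref{ass.bandwidth} makes the products of two such terms $o(n^{-1/2})$, and this handles $r_n$.

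For the linear terms, substituting the expansions produces third- and fourth-order $U$-processes in $(i,j,k,l,\dots)$ with kernels like
\begin{align*}
(Y_i-Y_j)K_{ij}K_{ik}L_{ik}1_x(X_i)\Delta^{-1}(X_i)\left[K_{il}\bar L_b(z-Z_l)-G(z;X_i)\right].
\end{align*}
I will apply the Hoeffding decomposition with the notation $U_n^{(m)}(\pi_k\varphi_{wb})$ from Section~\ref{sec.AppendixB}, together with the maximal inequalities for degenerate $U$-processes over VC-type classes that were used in \cite{delgado2001significance}. The degenerate components of order $\ge2$ are $O_p(n^{-1}a^{-q}b^{-p})$ uniformly, which is $o_p(n^{-1/2})$ by Assumption~\ref{ass.bandwidth}.

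For the first-order projections, each term factors through $\mathbb{E}[\hat G_n-G\mid X_i]$, the conditional bias, which is negligible. The only remaining nonbias pieces take the form $\frac1n\sum_l \{\mathbb{E}_l[\cdot]-\mathbb{E}[\cdot]\}$. Under $H_0$ or $H_{1n}$, the $\epsilon_i$ part has conditional mean zero given $X_i$, and the $m(X_i)-m(X_j)$ part is an $O(a^{\tau})$ bias. This leaves an $o_p(n^{-1/2})$ centered empirical process, because each such projection is multiplied by a vanishing bias factor. Under the fixed alternative $H_1$, the same terms are bounded by a constant times $\sup|\hat\beta_n-\beta|$ weighted by integrable envelopes, which gives \eqref{lemma.Shat-S eq2} by uniform consistency alone.

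The main obstacle is uniformity in $w$ combined with the unbounded, random denominator $\hat\Delta_n^{-1}(X_i)$, since $\Delta$ is not bounded away from zero. I would first split on the event $\{\hat\Delta_n(X_i)\ge\Delta(X_i)/2\}$. On this event, $\hat\Delta_n^{-1}\le 2\Delta^{-1}$ and the moment bound on $\Delta^{-1}f_X^2$ controls everything via Hölder's inequality with exponents chosen from $\delta_1,\delta_2$. The complement event will be shown to have probability tending to zero using the uniform relative rate $\sup_x|\hat\Delta_n-\Delta|/\Delta$ from Lemma~\ref{lemma.delta}, and Markov's inequality with the $2+(4+2\delta_2)/\delta_1$ moment.
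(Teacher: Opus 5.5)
Your route is the paper's. Writing $\hat S_n(w)-S_n(w)=\frac1n\sum_i\hat\epsilon_i\hat f_X(X_i)\hat f_W(W_i)1_x(X_i)[\hat\beta_n(z;X_i)-\beta(z;X_i)]$ and linearizing reproduces the paper's three pieces: $S_{n2}$ (the $\Delta^{-1}(\hat G_n-G)$ term), $-S_{n11}$ (the $\Delta^{-2}G(\hat\Delta_n-\Delta)$ term), and the quadratic terms $S_{n12}+S_{n3}$. As in the paper, the linear pieces are $U$-processes handled through Hoeffding projections, and the quadratic ones are handled through squared uniform rates.

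The step where you go beyond the paper is the event split that controls $\hat\Delta_n^{-1}(X_i)$, and it does not close as justified.

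\begin{itemize}
\item Lemma~\ref{lemma.delta} bounds $\sup_x|\hat\Delta_n(x)-\Delta(x)|/f_X^2(x)$ by $O_p(r_n)$, with $r_n=(\log n/(na^qb^p))^{1/2}+a^{\lambda_1}+b^{\lambda_2}$. It does not bound the error relative to $\Delta$.
\item The ratio $\Delta(x)/f_X^2(x)=\int f_{Z\vert X}^2(\bar z\vert x)\,d\bar z$ is not bounded below. Assumption~\ref{ass.sample} gives only $\mathbb{E}|f_X^2(X)\Delta^{-1}(X)|^s<\infty$ with $s=2+(4+2\delta_2)/\delta_1$.
\item So the most your argument (Markov plus a union bound over $i$) delivers is $\max_i|\hat\Delta_n(X_i)-\Delta(X_i)|/\Delta(X_i)\le O_p(r_n)\max_i f_X^2(X_i)\Delta^{-1}(X_i)=O_p(r_n)\,o_p(n^{1/s})$.
\item Assumption~\ref{ass.bandwidth} only guarantees $r_n=o(n^{-1/4})$, because its first condition is exactly $\log n/(n^{1/2}a^qb^p)\to0$, and this can hold arbitrarily slowly.
\item You would therefore need $s\ge4$, i.e.\ $\delta_1\le2+\delta_2$, which is not assumed.
\end{itemize}

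A repair has to compare the fluctuation of $\hat\Delta_n(X_i)$ with $\Delta(X_i)$ directly. For a lower bound, only the linear part $2\int f_W(X_i,\bar z)[\hat f_W(X_i,\bar z)-f_W(X_i,\bar z)]\,d\bar z$ matters, since $\int(\hat f_W-f_W)^2\,d\bar z\ge0$. To leading order its variance is $O(f_X^2(X_i)\Delta(X_i)/(na^q))$ under $\sup_{(x,z)}f_{Z\vert X}(z\vert x)/f_X(x)<\infty$. So a pointwise exponential bound followed by a union bound over $i$ is the natural way to obtain the event under weaker moment requirements. The paper itself does not address this point: it only asserts, by analogy with Lemma~\ref{lemma.Sn}, a finite expected supremum for the weight carrying $\hat\Delta_n^{-1}(X_i)$.

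Two smaller points.
\begin{itemize}
\item The stochastic order you quote, $(\log n)(na^qb^p)^{-1/2}$, is too crude for the quadratic remainder. Multiplying a product of two such terms by $n^{1/2}$ gives $(\log n)^2/(n^{1/2}a^qb^p)$, which Assumption~\ref{ass.bandwidth} need not send to zero. Use the $(\log n/(na^qb^p))^{1/2}$ rate of Lemma~\ref{lemma.delta} instead.
\item Lemma~\ref{lemma.G} is only an entropy statement about $\{G(z;X)/f_X(X)\}$. The uniform rate for $(\hat G_n-G)/f_X$ needed in the cross term $(\hat\Delta_n-\Delta)(\hat G_n-G)/(\Delta\hat\Delta_n)$ has to be proved separately; the paper is equally loose there.
\end{itemize}
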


\begin{proof}[Proof of Lemma \ref{lemma.Shat-S}]
    We start by decomposing
    \begin{align}\label{lemmaproof.Shat-S decomp}
        \hat{S}_n(w)-S_n(w) = S_{n1}(w)+S_{n2}(w)+S_{n3}(w),
    \end{align}
    where
    \begin{align*}
        &S_{n1}(w) = \frac{1}{n}\sum_{i=1}^n\left(Y_i-\hat{m}(X_i)\right)\hat{f}_X(X_i)\hat{f}_W(W_i)1_x(X_i)\left[\hat{\Delta}_n^{-1}(X_i)-\Delta^{-1}(X_i)\right]G(z;X_i),\\
        &S_{n2}(w) = \frac{1}{n}\sum_{i=1}^n\left(Y_i-\hat{m}(X_i)\right)\hat{f}_X(X_i)\hat{f}_W(W_i)1_x(X_i)\Delta^{-1}(X_i)\left[\hat{G}_n(z;X_i)-G(z;X_i)\right],\\
        &S_{n3}(w) = \frac{1}{n}\sum_{i=1}^n\left(Y_i-\hat{m}(X_i)\right)\hat{f}_X(X_i)\hat{f}_W(W_i)1_x(X_i)\left[\hat{\Delta}_n^{-1}(X_i)-\Delta^{-1}(X_i)\right]\left[\hat{G}_n(z;X_i)-G(z;X_i)\right].
    \end{align*}
    By further decomposing
    \begin{align*}
        \hat{\Delta}_n^{-1}(X_i)-\Delta^{-1}(X_i) = -\frac{\hat{\Delta}_n(X_i)-\Delta(X_i)}{\Delta^2(X_i)}+\frac{\left(\hat{\Delta}_n(X_i)-\Delta(X_i)\right)^2}{\Delta^2(X_i)\hat{\Delta}_n(X_i)},
    \end{align*}
    $S_{n1}(w)$ can be rewritten as
    \begin{align}\label{lemmaproof.Shat-S Sn1 decomp}
        S_{n1}(w) = -S_{n11}(w)+S_{n12}(w),
    \end{align}
    where
    \begin{align*}
        &S_{n11}(w) = \frac{1}{n}\sum_{i=1}^n\left(Y_i-\hat{m}(X_i)\right)\hat{f}_X(X_i)\hat{f}_W(W_i)1_x(X_i)\left[\hat{\Delta}_n(X_i)-\Delta(X_i)\right]\frac{G(z;X_i)}{\Delta^2(X_i)}\\
        &S_{n12}(w) = \frac{1}{n}\sum_{i=1}^n\left(Y_i-\hat{m}(X_i)\right)\hat{f}_X(X_i)\hat{f}_W(W_i)1_x(X_i)\left[\hat{\Delta}_n(X_i)-\Delta(X_i)\right]^2\frac{G(z;X_i)}{\Delta^2(X_i)\hat{\Delta}_n(X_i)}.
    \end{align*}
    We first discuss the cases under the null and local alternative hypotheses. For the first term, expanding $\hat{m}(\cdot)$, $\hat{f}_X(\cdot)$, $\hat{f}_W(\cdot)$, and $\hat{\Delta}_n(\cdot)$ allows us to rewrite $S_{n11}(\cdot)$ as a $U$-process, from which
    \begin{align}\label{lemmaproof.Shat-S Sn11}
        \sup_{w}\left\vert S_{n11}(w)\right\vert=o_p\left(n^{-1/2}\right)
    \end{align}
    follows by arguments analogous to those in Lemma \ref{lemma.Sn}. With $Y_i-\hat{m}(X_i)$ rewritten as $[Y_i-m(X_i)]-[\hat{m}(X_i)-m(X_i)]$, we can still, by arguments analogous to those in the proof of Lemma \ref{lemma.Sn}, derive
    \begin{align*}
        \mathbb{E}\sup_w\left\vert \left(Y_i-\hat{m}(X_i)\right)\hat{f}_X(X_i)\hat{f}_W(W_i)1_x(X_i)\frac{f_X^4(X_i)G(z;X_i)}{\Delta^2(X_i)\hat{\Delta}_n(X_i)}\right\vert<\infty.
    \end{align*}
    Lemma \ref{lemma.G} further implies that $\sup_x\vert\hat{\Delta}_n(x)-\Delta(x)\vert^2=o_p(n^{-1/2})$, which completes the proof of 
    \begin{align}\label{lemmaproof.Shat-S Sn12}
        \sup_{w}\left\vert S_{n12}(w)\right\vert=o_p\left(n^{-1/2}\right).
    \end{align}
    Combining equations \eqref{lemmaproof.Shat-S Sn1 decomp}, \eqref{lemmaproof.Shat-S Sn11} and \eqref{lemmaproof.Shat-S Sn12} gives $\sup_{w}\vert S_{n1}(w)\vert=o_p(n^{-1/2})$. Under the alternative hypothesis, the corresponding result is $\sup_{w}\vert S_{n1}(w)\vert=o_p(1)$. For $S_{n2}(w)$ and $S_{n3}(w)$, the negligibility can be established in a similar manner by expanding and rewriting them as $U$-processes. For brevity, we omit the detailed proofs. In short, for $S_{n2}(w)$, $\hat{G}_n(\cdot;X_i)$ can be expanded according to its expression and the result in Lemma \ref{lemma.G} can be applied. For $S_{n3}(w)$, both $\hat{G}_n(\cdot;X_i)$ and $\hat{\Delta}_n(\cdot)$ can be expanded and the results in Lemmas \ref{lemma.delta} and \ref{lemma.G} can be invoked. The negligibility of $S_{n1}(w)$, $S_{n2}(w)$ and $S_{n3}(w)$ together leads to the validity of equation \eqref{lemma.Shat-S eq1} under the null and local alternative hypotheses. Similar arguments can be employed to verify equation \eqref{lemma.Shat-S eq2} under the alternative hypothesis.
\end{proof}

\begin{lemma}\label{lemma.Shat-S boot}
    Suppose that Assumptions \ref{ass.sample}--\ref{ass.bandwidth} hold, 
    \begin{align}\label{lemma.Shat-S boot eq1}
        \sup_w\left\vert \hat{S}^\ast_n(w)-S^\ast_n(w)\right\vert = o_p\left(n^{-1/2}\right)
    \end{align}
    regardless of the underlying hypothesis. 
\end{lemma}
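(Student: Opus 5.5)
We mirror the decomposition in the proof of Lemma \ref{lemma.Shat-S}, now with the multipliers inserted. Write $\hat S^\ast_n(w)-S^\ast_n(w)=S^\ast_{n1}(w)+S^\ast_{n2}(w)+S^\ast_{n3}(w)$. Each $S^\ast_{nk}$ is obtained from $S_{nk}$ by multiplying the $i$-th summand by $V_i$. So
\begin{align*}
S^\ast_{n1}(w)=\frac{1}{n}\sum_{i=1}^nV_i\left(Y_i-\hat m(X_i)\right)\hat f_X(X_i)\hat f_W(W_i)1_x(X_i)\left[\hat\Delta_n^{-1}(X_i)-\Delta^{-1}(X_i)\right]G(z;X_i),
\end{align*}
and $S^\ast_{n2},S^\ast_{n3}$ are defined analogously. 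We then split $S^\ast_{n1}=-S^\ast_{n11}+S^\ast_{n12}$ using the same second-order expansion of $\hat\Delta_n^{-1}-\Delta^{-1}$.

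The key structural point is that, conditionally on the sample, each piece is a linear combination $n^{-1}\sum_i V_i a_{ni}(w)$ of mean-zero, unit-variance multipliers independent of the data.

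\textbf{Linear pieces $S^\ast_{n11}$ and $S^\ast_{n2}$.} Expanding $\hat\Delta_n-\Delta$ and $\hat G_n-G$ produces multiplier-weighted $U$-processes. Their kernels have the form $V_i\times$ (kernel of $S_{n11}$ or $S_{n2}$). Because $V_i$ is independent of the data and centered, every Hoeffding projection that does not involve index $i$ vanishes in expectation. The surviving first-order projection takes the form $n^{-1}\sum_iV_i\,\mathbb{E}_i[\cdot]$. Its bias part is of order $a^{\min(\tau,\lambda_1)}+b^{\min(\tau,\lambda_2)}$, which is $o(n^{-1/2})$ by Assumption \ref{ass.bandwidth}. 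Its centered part is an $o_p(n^{-1/2})$ multiplier empirical process indexed by a VC class with integrable envelope, controlled via Assumption \ref{ass.sample} and Lemma \ref{lemma.G}. The degenerate higher-order projections have variance of order $(na^{q}b^{p})^{-1}n^{-1}$ and $(na^{2q}b^{2p})^{-1}n^{-1}$, which is again $o(n^{-1})$ under Assumption \ref{ass.bandwidth}. Uniformity in $w$ follows from the maximal inequalities for degenerate $U$-processes over Euclidean classes already used in Lemma \ref{lemma.Sn} and in \cite{delgado2001significance}. We expect this to go through exactly as in the non-bootstrap case, and in fact more easily, because the multipliers kill the cross terms. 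This is also why the result holds regardless of the hypothesis: any drift $\mathbb{E}(Y\mid W)-m(X)$ multiplies the centered $V_i$ and therefore contributes no bias.

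\textbf{Quadratic pieces $S^\ast_{n12}$ and $S^\ast_{n3}$.} These are bounded crudely, since the $V_i$ are mean zero with unit variance and so $n^{-1}\sum_i|V_i|\,c_i\le (n^{-1}\sum_iV_i^2)^{1/2}(n^{-1}\sum_ic_i^2)^{1/2}$. For $S^\ast_{n12}$ this gives
\begin{align*}
\sup_w|S^\ast_{n12}(w)|\le \sup_x\left|\hat\Delta_n(x)-\Delta(x)\right|^2\cdot\frac{1}{n}\sum_{i=1}^n|V_i|\,\sup_w\left|\left(Y_i-\hat m(X_i)\right)\hat f_X(X_i)\hat f_W(W_i)1_x(X_i)\frac{G(z;X_i)}{\Delta^2(X_i)\hat\Delta_n(X_i)}\right|.
\end{align*}
The averaged factor is $O_p(1)$ by the moment bound already displayed in the proof of Lemma \ref{lemma.Shat-S}, together with $\mathbb{E}V_i^2=1$ and Cauchy--Schwarz. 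The prefactor is $o_p(n^{-1/2})$ by Lemma \ref{lemma.delta}/Lemma \ref{lemma.G}. The term $S^\ast_{n3}$ is handled the same way, using the product $\sup|\hat\Delta_n-\Delta|\cdot\sup|\hat G_n-G|=o_p(n^{-1/2})$.

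\textbf{Main obstacle.} The hard step is the uniform-in-$w$ control of the degenerate multiplier $U$-process terms in $S^\ast_{n11}$, where three indices and the random $V_i$ interact. The difficulty is verifying that the class indexed by $w$ stays Euclidean with an envelope that has enough moments once the factor $\Delta^{-2}G$ is present. We would first try conditioning on the $V$'s, treating $V_i$ as part of $\chi_i$ (an augmented i.i.d. sample), and then applying verbatim the $U$-process argument of Lemma \ref{lemma.Sn}. The moment condition on $\Delta^{-1}f_X^2$ in Assumption \ref{ass.sample} supplies the needed integrability once multiplied by $|V_i|$.
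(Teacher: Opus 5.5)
Your route is the paper's: the same split into $S^\ast_{n1},S^\ast_{n2},S^\ast_{n3}$, and the same second-order expansion of $\hat\Delta_n^{-1}-\Delta^{-1}$. The linear pieces are multiplier $U$-processes in which the centred $V_i$ kills every projection not involving index $i$, which is what Lemma \ref{lemma.Sn boot} encodes. The quadratic pieces are a uniform rate times an $O_p(1)$ average. The linear part is fine.

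Your bound for $S^\ast_{n12}$ fails as written, and so does the bound for $S^\ast_{n3}$, which you handle the same way. You factor out the \emph{unweighted} $\sup_x|\hat\Delta_n(x)-\Delta(x)|^2$. That leaves all of $\Delta^{-2}(X_i)\hat\Delta_n^{-1}(X_i)$ in the averaged factor. Since $\Delta(x)=f_X^2(x)\int f^2_{Z\vert X}(\bar z\vert x)\,d\bar z$ and $\sup_z G(z;x)=f_X(x)$, the population version of that factor is $|\epsilon_i|\,f_{Z\vert X}(Z_i\vert X_i)\,f_X^{-3}(X_i)\bigl[\int f^2_{Z\vert X}(\bar z\vert X_i)\,d\bar z\bigr]^{-3}$. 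Suppose $\mathbb{E}(|\epsilon|\mid W)$ is bounded away from zero; $f_{Z\vert X}$ is bounded under Assumption \ref{ass.structual}. Then the mean of this factor is bounded below by a multiple of $\int f_X^{-2}(x)\,dx$ over the support of $X$. That integral is infinite whenever the support has infinite Lebesgue measure, which is exactly the normal-type case the paper is built to allow. So the average is not $O_p(1)$ there. The moment bound you cite from the proof of Lemma \ref{lemma.Shat-S} does not apply to your factor, because that bound carries an extra $f_X^4(X_i)$.

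The repair is the paper's bookkeeping. Write $|\hat\Delta_n(X_i)-\Delta(X_i)|^2\le f_X^4(X_i)\sup_x|f_X^{-2}(x)\{\hat\Delta_n(x)-\Delta(x)\}|^2$. Take the weighted sup from Lemma \ref{lemma.delta}; its square is $O_p(\log n/(na^qb^p)+a^{2\lambda_1}+b^{2\lambda_2})=o_p(n^{-1/2})$ by Assumption \ref{ass.bandwidth}. Keep $f_X^4(X_i)$ inside the average. That average is then exactly the quantity in the paper's displayed moment bound, and its population envelope $|\epsilon_i|f_X(X_i)f_{Z\vert X}(Z_i\vert X_i)\bigl[\int f^2_{Z\vert X}(\bar z\vert X_i)\,d\bar z\bigr]^{-3}$ has no negative powers of $f_X$. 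For $S^\ast_{n3}$, the same weight $f_X^{-2}$ on $\hat\Delta_n-\Delta$ removes the negative powers of $f_X$ from the remaining factor.

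Also drop the Cauchy--Schwarz step: it needs a second moment of $c_i$ that nothing in the paper supplies. Independence of the multipliers gives $\mathbb{E}[|V_i|c_i]=\mathbb{E}|V_i|\,\mathbb{E}c_i\le\mathbb{E}c_i$, so the first-moment bound plus Markov's inequality suffices. With these corrections your argument coincides with the paper's, which only states that the proof is analogous to Lemma \ref{lemma.Shat-S} with Lemma \ref{lemma.Sn boot} in place of Lemma \ref{lemma.Sn}.
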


\begin{proof}[Proof of Lemma \ref{lemma.Shat-S boot}]
    Following arguments analogous to those in the proof of Lemma \ref{lemma.Shat-S} above, we can decompose the bootstrap version as
    \begin{align*}
        \hat{S}_n^\ast(w)-S_n^\ast(w) = S_{n1}^\ast(w)+S_{n2}^\ast(w)+S_{n3}^\ast(w),
    \end{align*}
    where $S_{nl}^\ast(w)$ represents the bootstrap version of $S_{nl}(w)$ for $l=1,2,3$. The proof is entirely analogous to that in Lemma \ref{lemma.Shat-S}, except that Lemma \ref{lemma.Sn boot} replaces Lemma \ref{lemma.Sn} in the reasoning. The details are omitted for brevity.
\end{proof}

\begin{lemma}\label{lemma.Sn}
    Suppose that Assumptions \ref{ass.sample}--\ref{ass.bandwidth} hold, whether under the null hypothesis or the sequence of local alternatives,
    \begin{align}\label{lemma.Sn null}
        \sup_{w}\left\vert S_n(w) - S_n^{(1)}(w) + T_n^{(2)}(w)\right\vert = o_p\left(n^{-1/2}\right).
    \end{align}
    Under the alternative hypothesis,
    \begin{align}\label{lemma.Sn alt}
        \sup_{w}\left\vert S_n(w) - S_n^{(1)}(w) + T_n^{(2)}(w)\right\vert = o_p\left(1\right).
    \end{align}
\end{lemma}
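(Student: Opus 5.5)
The plan is to treat $S_n(w)$ as a (non-symmetric) third-order $U$-process in $\chi_i,\chi_j,\chi_k$ and apply a Hoeffding decomposition, along the lines of Propositions 2 and 3 in \cite{delgado2001significance}. First write $Y_i-Y_j=(Y_i-m(X_i))-(Y_j-m(X_j))+(m(X_i)-m(X_j))$. Then split the double sum over $j,k\neq i$ into the diagonal part $j=k$ and the off-diagonal part $j\neq k$. This expresses
\begin{align*}
S_n(w)=\sum_{l=1}^3 c_l\left[\frac{1}{n(n-1)^2}\sum_{i\neq j}\psi^{(l)}_{wb}(\chi_i,\chi_j)+\frac{1}{n(n-1)^2}\sum_{i,j,k\ \mathrm{distinct}}\varphi^{(l)}_{wb}(\chi_i,\chi_j,\chi_k)\right],
\end{align*}
with signs $c_1=1$, $c_2=-1$, $c_3=1$. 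The diagonal $\psi$-terms carry an extra factor $n^{-1}$ relative to a second-order $U$-process whose kernel has size $a^{-2q}b^{-p}$. A second-moment bound together with Assumption \ref{ass.bandwidth}, namely $(na^{2q}b^{p})^{-1}\to0$, should give $\sup_w\vert\cdot\vert=o_p(n^{-1/2})$ for these terms.

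For the off-diagonal terms, I would compute the first-order projections $U_n^{(1)}(\pi_1\varphi^{(l)}_{wb})$.

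For $\varphi^{(1)}$, the only non-vanishing conditional mean is given $\chi_i$, because $\mathbb{E}[\epsilon_i\vert X_i]=0$ kills the others. It equals
\begin{align*}
\epsilon_i\,d_X(X_i)\,d_W(W_i)\,1_x(X_i)\Delta^{-1}(X_i)G(z;X_i),
\end{align*}
and since $d_X\approx f_X$ and $d_W\approx f_W$ up to bias, this yields $S_n^{(1)}(w)$.

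For $\varphi^{(2)}$, the nontrivial projection is given $\chi_j$: we integrate first over $\chi_k$ and then over $Z_i$ given $X_i$. The key identity is $\mathbb{E}[f_W(X_i,Z_i)\vert X_i]=\Delta(X_i)/f_X(X_i)$. Convolving in $X_i$ against $K_{ij}$ then gives $\epsilon_j 1_x(X_j)\Delta^{-1}(X_j)G(z;X_j)\Delta(X_j)=\epsilon_j1_x(X_j)G(z;X_j)$ plus bias. With the sign $c_2=-1$ this produces $-T_n^{(2)}(w)$.

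For $\varphi^{(3)}$, the projections are pure smoothing bias of $m$ against $K$. The mean is $O(a^{\min(\tau,\lambda_1)}+b^{\min(\tau,\lambda_2)})$, which is $o(n^{-1/2})$ by Assumption \ref{ass.bandwidth} and the higher-order kernel in Assumption \ref{ass.structual}. Its centered linear part is $o_p(n^{-1/2})$ uniformly, since its $L^2$ norm vanishes.

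The degenerate second- and third-order Hoeffding components would be handled by maximal inequalities for degenerate $U$-processes indexed by VC-type classes (Sherman; Nolan--Pollard). The indicators $1_x,1_z$ and the monotone map $z\mapsto G(z;x)$ make the index class VC-type. The envelope is built from $\vert\epsilon\vert\Delta^{-1}(X)f_X^2(X)$ times the kernel, and its moments are controlled by Assumption \ref{ass.sample} via H\"older with exponents $2+\delta_1+\delta_2$ and $2+(4+2\delta_2)/\delta_1$. This gives orders like $O_p(n^{-1}a^{-q}b^{-p/2})$ up to logarithmic factors, which is $o_p(n^{-1/2})$ under Assumption \ref{ass.bandwidth}.

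The step I expect to be the main obstacle is making the bias replacements $d_X\to f_X$, $d_W\to f_W$, and the convolution in the $\varphi^{(2)}$ projection uniform in $w$ without compact support. The difficulty is that $\Delta^{-1}(X)$ may be large in the tails. I would handle this by writing every bias term multiplied by $\Delta^{-1}G$ as $f_X^{2}\Delta^{-1}$ times a ratio such as $f_{Z\vert X}/f_X$. That ratio is bounded by the last condition of Assumption \ref{ass.structual}, and the Lipschitz coefficients $d(\cdot)$ of $\mathscr{L}^\alpha_\beta$ have the required moments, so dominated convergence applies. Finally, $G(z;x)\le f_X(x)$ uniformly in $z$, which I take to be the content of Lemma \ref{lemma.G}.

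Under the fixed alternative $H_1$, the same decomposition applies with $\epsilon=Y-m(X)$ still satisfying $\mathbb{E}[\epsilon\vert X]=0$. The projection of $\varphi^{(2)}$ is unchanged, because it depends only on $\epsilon_j$ given $X_j$. The projection of $\varphi^{(1)}$ has nonzero mean, but it is the same mean as that of $S_n^{(1)}$ up to bias. Only a law of large numbers (Glivenko--Cantelli for the VC class) together with vanishing bias is then needed, which yields \eqref{lemma.Sn alt} at the rate $o_p(1)$.
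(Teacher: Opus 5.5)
Your route is the paper's own. The paper writes $S_n=S_{n1}-S_{n21}+S_{n22}$, which are your $\epsilon_i$, $\epsilon_j$ and $m(X_i)-m(X_j)$ pieces, and splits each into $\frac{1}{n-1}U_n^{(2)}(\psi^{(l)}_{wb})+\frac{n-2}{n-1}U_n^{(3)}(\varphi^{(l)}_{wb})$. It then reads off $S_n^{(1)}$ from the $\chi_i$-projection of $\varphi^{(1)}_{wb}$, and $T_n^{(2)}$ from the $\chi_j$-projection of $\varphi^{(2)}_{wb}$ via $\mathbb{E}[f_W(W)\vert X]=\Delta(X)/f_X(X)$. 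Everything else is shown to be negligible.

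One step, however, is wrong as stated. The $\chi_j$- and $\chi_k$-projections of $\varphi^{(1)}_{wb}$ are not killed by $\mathbb{E}[\epsilon_i\vert X_i]=0$. The reason is that $\varphi^{(1)}_{wb}$ contains $L_{ik}$, which depends on $Z_i$. Integrating out the remaining observation gives $\mathbb{E}_j[\mathbb{E}(\epsilon_i\vert W_i)K_{ij}d_W(W_i)1_x(X_i)\Delta^{-1}(X_i)G(z;X_i)]$ and $\mathbb{E}_k[\mathbb{E}(\epsilon_i\vert W_i)d_X(X_i)K_{ik}L_{ik}1_x(X_i)\Delta^{-1}(X_i)G(z;X_i)]$. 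These vanish only when $\mathbb{E}[\epsilon\vert W]=0$, i.e. under $H_0$. The same applies to the mean of the bias piece $\epsilon_i[d_X(X_i)d_W(W_i)-f_X(X_i)f_W(W_i)]1_x(X_i)\Delta^{-1}(X_i)G(z;X_i)$ in your $\chi_i$-projection: it is zero only because of $\mathbb{E}[\epsilon\vert W]=0$, not $\mathbb{E}[\epsilon\vert X]=0$.

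Under $H_{1n}$ we have $\mathbb{E}(\epsilon_i\vert W_i)=n^{-1/2}\Psi(W_i)$, so these quantities are of order $n^{-1/2}$ rather than zero. Your argument therefore does not yet give \eqref{lemma.Sn null} under the local alternatives.

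The repair is short:
\begin{itemize}
\item \textbf{Centred projections.} The centred $\chi_j$- and $\chi_k$-projections are $n^{-1/2}$ times centred averages of $\Psi$-weighted kernel-smoothed functions. These averages are $o_p(1)$ uniformly in $w$ by a uniform law of large numbers. The integrable envelope comes from $\vert\Psi\vert$ bounded, $G(z;x)\le f_X(x)$, and the moment condition on $\Delta^{-1}f_X^2$.
\item \textbf{Bias mean.} The bias mean equals $n^{-1/2}\,\mathbb{E}\{\Psi(W)[d_Xd_W-f_Xf_W]1_x\Delta^{-1}G\}$, which is $o(n^{-1/2})$ by the bias bounds of Lemma \ref{lemma.condition dist} and H\"older's inequality.
\end{itemize}
This is exactly what the paper spends most of the proof of Lemma \ref{lemma.Unphi1} on (the $g_\Psi$ and $m_\Psi$ terms). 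Your treatment of the fixed alternative $H_1$ is unaffected, since there you only need $o_p(1)$ and the non-vanishing projections are absorbed by the law of large numbers.
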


\begin{proof}[Proof of Lemma \ref{lemma.Sn}]
    We start by decomposing 
    \begin{align}\label{lemmaproof.Sn decomp}
        S_n(w) = S_{n1}(w)-S_{n2}(w),
    \end{align}
    where
    \begin{align*}
        &S_{n1}(w) = \frac{1}{n}\sum_{i=1}^n\left(Y_i-m(X_i)\right)\hat{f}_X(X_i)\hat{f}_W(W_i)1_x(X_i)\Delta^{-1}(X_i)G(z;X_i),\\
        &S_{n2}(w) = \frac{1}{n}\sum_{i=1}^n\left(\hat{m}(X_i)-m(X_i)\right)\hat{f}_X(X_i)\hat{f}_W(W_i)1_x(X_i)\Delta^{-1}(X_i)G(z;X_i).
    \end{align*}
    Since the first term $S_{n1}(w)$ in \eqref{lemmaproof.Sn decomp} can be represented as 
    \begin{align*}
        \frac{1}{n-1}U_n^{(2)}(\psi^{(1)}_{wb})+\frac{n-2}{n-1}U_n^{(3)}(\varphi^{(1)}_{wb}),
    \end{align*}
    where we recall
    \begin{align*}
        &\psi^{(1)}_{wb}(\chi_i,\chi_j) = \left(Y_i-m(X_i)\right)K^2_{ij}L_{ij}1_x(X_i)\Delta^{-1}(X_i)G(z;X_i),\\
        &\varphi^{(1)}_{wb}(\chi_i,\chi_j,\chi_k) = \left(Y_i-m(X_i)\right)K_{ij}K_{ik}L_{ik}1_x(X_i)\Delta^{-1}(X_i)G(z;X_i).
    \end{align*}
    By Lemma \ref{lemma.Unphi1} and \ref{lemma.Unpsi1}, it follows that 
    \begin{align}\label{lemmaproof.Sn first null}
        \sup_{w}\left\vert S_{n1}(w)-S_n^{(1)}(w)\right\vert = o_p\left(n^{-1/2}\right)
    \end{align}
    holds under the null and local alternative hypotheses, and
    \begin{align}\label{lemmaproof.Sn first alt}
        \sup_{w}\left\vert S_{n1}(w)-S_n^{(1)}(w)\right\vert = o_p\left(1\right)
    \end{align}
    holds under the alternative hypothesis. 
    
    For the second term,
    \begin{align}\label{lemmaproof.Sn second decomp}
        S_{n2}(w) = S_{n21}(w)-S_{n22}(w),
    \end{align}
    where
    \begin{align*}
        &S_{n21}(w) = \frac{1}{n(n-1)}\sum_{i=1}^n\sum_{\substack{j=1 \\ j \neq i}}^n\left(Y_j-m(X_j)\right)K_{ij}\hat{f}_W(W_i)1_x(X_i)\Delta^{-1}(X_i)G(z;X_i),\\
        &S_{n22}(w) = \frac{1}{n(n-1)}\sum_{i=1}^n\sum_{\substack{j=1 \\ j \neq i}}^n\left(m(X_i)-m(X_j)\right)K_{ij}\hat{f}_W(W_i)1_x(X_i)\Delta^{-1}(X_i)G(z;X_i).
    \end{align*}
    Similarly, the terms $S_{n21}(w)$ and $S_{n22}(w)$ in \eqref{lemmaproof.Sn second decomp} can be rewritten as
    \begin{align*}
        &S_{n21}(w) = \frac{1}{n-1}U_{n}^{(2)}(\psi_{wb}^{(2)})+\frac{n-2}{n-1}U_{n}^{(3)}(\varphi_{wb}^{(2)})
    \end{align*}
    and
    \begin{align*}
        &S_{n22}(w) = \frac{1}{n-1}U_{n}^{(2)}(\psi_{wb}^{(3)})+\frac{n-2}{n-1}U_{n}^{(3)}(\varphi_{wb}^{(3)}),
    \end{align*}
    respectively, where we still recall
    \begin{align*}
        &\psi_{wb}^{(2)}(\chi_i,\chi_j) = \left(Y_j-m(X_j)\right)K^2_{ij}L_{ij}1_x(X_i)\Delta^{-1}(X_i)G(z;X_i),\\
        &\varphi_{wb}^{(2)}(\chi_i,\chi_j,\chi_k) = \left(Y_j-m(X_j)\right)K_{ij}K_{ik}L_{ik}1_x(X_i)\Delta^{-1}(X_i)G(z;X_i),\\
        &\psi_{wb}^{(3)}(\chi_i,\chi_j) = \left(m(X_i)-m(X_j)\right)K^2_{ij}L_{ij}1_x(X_i)\Delta^{-1}(X_i)G(z;X_i),\\
        &\varphi_{wb}^{(3)}(\chi_i,\chi_j,\chi_k) = \left(m(X_i)-m(X_j)\right)K_{ij}K_{ik}L_{ik}1_x(X_i)\Delta^{-1}(X_i)G(z;X_i).
    \end{align*}
    The desired results
    \begin{align}\label{lemmaproof.Sn second and third}
        \sup_w\left\vert S_{n21}(w)-T_n^{(2)}(w)\right\vert = o_p\left(n^{-1/2}\right)\quad \text{and}\quad \sup_w\left\vert S_{n22}(w)\right\vert = o_p\left(n^{-1/2}\right)
    \end{align}
    follow by combining the conclusions of Lemmas \ref{lemma.Unphi2}, \ref{lemma.Unpsi2}, \ref{lemma.Unphi3}, and \ref{lemma.Unpsi3}. In conclusion, the combination of \eqref{lemmaproof.Sn first null}, \eqref{lemmaproof.Sn second decomp} and \eqref{lemmaproof.Sn second and third} establishes \eqref{lemma.Sn null} under the null and local alternatives, whereas the combination of \eqref{lemmaproof.Sn first alt}, \eqref{lemmaproof.Sn second decomp} and \eqref{lemmaproof.Sn second and third} establishes \eqref{lemma.Sn alt} under the alternative hypothesis. The proof is thus completed.
\end{proof}

\begin{lemma}\label{lemma.Sn boot}
    Suppose that Assumptions \ref{ass.sample}--\ref{ass.bandwidth} hold, 
    \begin{align}\label{lemma.Sn boot eq1}
        \sup_{w}\left\vert S^\ast_n(w) - S_n^{(1)\ast}(w)\right\vert = o_p\left(n^{-1/2}\right)
    \end{align}
    regardless of the underlying hypothesis. 
\end{lemma}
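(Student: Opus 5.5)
We would prove the lemma by repeating the decomposition used for Lemma \ref{lemma.Sn}, carrying the multiplier $V_i$ along and treating $(\chi_i,V_i)$ as the augmented observation. Writing $Y_i-Y_j=(Y_i-m(X_i))-(Y_j-m(X_j))-(m(X_j)-m(X_i))$, we split
\begin{align*}
S_n^\ast(w)=S_{n1}^\ast(w)-S_{n21}^\ast(w)-S_{n22}^\ast(w).
\end{align*}
Here $S_{n1}^\ast$, $S_{n21}^\ast$ and $S_{n22}^\ast$ are the bootstrap analogues of the terms in \eqref{lemmaproof.Sn decomp} and \eqref{lemmaproof.Sn second decomp}, obtained by inserting $V_i$ in front of each summand. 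As before, each term is represented as $\frac{1}{n-1}U_n^{(2)}(V_i\psi^{(l)}_{wb})+\frac{n-2}{n-1}U_n^{(3)}(V_i\varphi^{(l)}_{wb})$ for $l=1,2,3$. Each $U$-process is then handled through its Hoeffding decomposition. The remark that drives everything is that $\mathbb{E}V_i=0$ and $V_i$ is independent of the data. This kills every projection onto an index other than $i$. It is also why no analogue of $T_n^{(2)}$ appears, in contrast with Lemma \ref{lemma.Sn}.

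For $S_{n1}^\ast$, the kernel is $V_i\varepsilon_iK_{ij}K_{ik}L_{ik}1_x(X_i)\Delta^{-1}(X_i)G(z;X_i)$ with $\varepsilon_i=Y_i-m(X_i)$.
\begin{itemize}
\item The mean and the projections onto $j$ and $k$ vanish.
\item The projection onto $i$ equals $V_i\varepsilon_i d_X(X_i)d_W(W_i)1_x(X_i)\Delta^{-1}(X_i)G(z;X_i)$.
\end{itemize}
Subtracting ${S_n^{(1)}}^\ast(w)$ leaves a mean-zero empirical process with multiplier weights $V_i\varepsilon_i$. Its coefficient is $[d_X d_W-f_Xf_W]1_x\Delta^{-1}G$. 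Exactly as in Lemmas \ref{lemma.Unphi1} and \ref{lemma.Unpsi1}, the kernel order in Assumption \ref{ass.structual} gives uniform bias bounds of order $a^{\lambda_1}+b^{\lambda_2}$ times an envelope. Combined with the moment bound on $\Delta^{-1}f_X^2$ in Assumption \ref{ass.sample}, the $L^2$-norm of this coefficient tends to zero uniformly in $w$. Since the indexing class is VC-type with square-integrable envelope, asymptotic equicontinuity of the multiplier process (van der Vaart and Wellner) makes this remainder $o_p(n^{-1/2})$ uniformly. The degenerate second- and third-order parts are bounded by the maximal inequality for degenerate $U$-processes over VC classes, as in \cite{delgado2001significance}. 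They are of order $O_p(n^{-1}a^{-q}b^{-p}\log n)$, which is $o_p(n^{-1/2})$ by Assumption \ref{ass.bandwidth}. The $U_n^{(2)}$ piece carries the extra factor $(n-1)^{-1}$ and is negligible by the same bound.

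For $S_{n21}^\ast$, the kernel is $V_i\varepsilon_jK_{ij}K_{ik}L_{ik}\cdots$. Every first-order projection vanishes:
\begin{itemize}
\item onto $i$, because $\mathbb{E}[\varepsilon_j\mid X_j]=0$;
\item onto $j$ and $k$, because $\mathbb{E}V_i=0$.
\end{itemize}
The process is therefore degenerate, and the same maximal inequality gives $o_p(n^{-1/2})$. For $S_{n22}^\ast$, the kernel is $V_i(m(X_i)-m(X_j))K_{ij}\cdots$. Only the projection onto $i$ survives, and it equals $V_i$ times a bias term $\mathbb{E}[(m(X_i)-m(X_j))K_{ij}\mid X_i]$ of order $a^{\tau}$. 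Its uniform $L^2$-norm is $o(1)$, so the multiplier-process argument above makes it $o_p(n^{-1/2})$; the degenerate parts are handled as before. None of these steps uses $H_0$: under any hypothesis $m(X)=\mathbb{E}[Y\mid X]$, so $\mathbb{E}[\varepsilon\mid X]=0$ always. This gives the "regardless of the underlying hypothesis" claim.

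The main obstacle is the uniform control of the bias in the $i$-projection of $S_{n1}^\ast$ when $X$ has unbounded support. The factor $\Delta^{-1}(X_i)$ may be large where $f_X$ is small, so the envelope needs care. We would first try to bound $\sup_w|d_Xd_W-f_Xf_W|\,\Delta^{-1}G$ by $(a^{\lambda_1}+b^{\lambda_2})$ times $d(X)\,\Delta^{-1}(X)f_X^2(X)$ (up to constants). This uses $\sup|f_{Z\vert X}/f_X|<\infty$ and $G(z;x)\le f_X(x)\sup_{\bar z}f_{Z\vert X}(\bar z\mid x)$, as in Lemma \ref{lemma.G}. Square-integrability then follows by Hölder's inequality from the moment conditions in Assumption \ref{ass.sample}.
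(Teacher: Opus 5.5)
Your proposal is correct and follows essentially the paper's route. It uses the same split into $S_{n1}^\ast$, $S_{n21}^\ast$, $S_{n22}^\ast$, with the $j=k$ diagonal handled as a $\frac{1}{n-1}U_n^{(2)}$ piece. The zero-mean, data-independent multiplier kills every projection other than the one onto $i$, which is why no $T_n^{(2)}$ analogue survives and why the bias in $S_{n22}^\ast$ need only be $o(1)$. Finally, $\mathbb{E}[\epsilon_j\mid X_j]=0$ kills the $i$-projection of $S_{n21}^\ast$.

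One small slip in your last paragraph: the inequality $G(z;x)\le f_X(x)\sup_{\bar z}f_{Z\vert X}(\bar z\mid x)$ is false in general, e.g.\ whenever $\sup_{\bar z}f_{Z\vert X}<1$, as in the paper's design $X=Z+U$. The bound you need is simply $G(z;x)\le f_X(x)$, since $G(z;x)/f_X(x)=F_{Z\vert X}(z\vert x)\le 1$, and it serves the same purpose.
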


\begin{proof}[Proof of Lemma \ref{lemma.Sn boot}]
    The proof for the bootstrap version follows exactly the same line of reasoning as that of Lemma \ref{lemma.Sn}. We first decompose $S_n^\ast(w)$ into two components, $S_{n1}^\ast(w)$ and $S_{n2}^\ast(w)$, where $S_{n1}^\ast(w)$ and $S_{n2}^\ast(w)$ denote the bootstrap counterparts of $S_{n1}(w)$ and $S_{n2}(w)$ mentioned in the proof of Lemma \ref{lemma.Sn}, respectively. For $S_{n1}^\ast(w)$, we further separate the terms according to whether $j=k$, yielding a main term and a bias term. These two parts can then be handled by applying Lemmas \ref{lemma.Unphi1 boot} and \ref{lemma.Unpsi1}, respectively, which together imply
    \begin{align}\label{lemmaproof.Sn boot sn1}
        \sup_w\left\vert S_{n1}^\ast(w)-{S_n^{(1)}}^\ast(w)\right\vert = o_p\left(n^{-1/2}\right).
    \end{align}
    
    For $S_{n2}^\ast(w)$, we proceed as in Lemma \ref{lemma.Sn} by decomposing it into $S_{n21}^\ast(w)$ and $S_{n22}^\ast(w)$, which represent the bootstrap counterparts of $S_{n21}(w)$ and $S_{n22}(w)$, respectively. Noting
    \begin{align*}
        S_{n21}^\ast(w) = \frac{1}{n(n-1)}\sum_{i=1}^n\sum_{\substack{j=1 \\ j \neq i}}^nV_i\left(Y_j-m(X_j)\right)K_{ij}\hat{f}_W(W_i)1_x(X_i)\Delta^{-1}(X_i)G(z;X_i)
    \end{align*}
    and using the facts that the multiplier has zero mean and is independent of all sample observations, we can invoke the argument in the proof of Lemmas \ref{lemma.Unphi2} and \ref{lemma.Unpsi2}. Combining this reasoning with the result therein regarding $\mathbb{E}_i(\epsilon_jK_{ij})=0$, we obtain
    \begin{align}\label{lemmaproof.Sn boot Sn21}
        \sup_w\left\vert S_{n21}^\ast(w)\right\vert = o_p\left(n^{-1/2}\right).
    \end{align}
    As in the previous arguments, for
    \begin{align*}
        S_{n22}^\ast(w) = \frac{1}{n(n-1)}\sum_{i=1}^n\sum_{\substack{j=1 \\ j \neq i}}^nV_i\left(m(X_i)-m(X_j)\right)K_{ij}\hat{f}_W(W_i)1_x(X_i)\Delta^{-1}(X_i)G(z;X_i),
    \end{align*}
    we only need to analyze the projection conditional on $\chi_i$ because of the properties of the multipliers. Furthermore, the unit-variance property of $V_i$ ensures that all second-moment properties established in Lemmas \ref{lemma.Unphi3} and \ref{lemma.Unpsi3} hold, whereas its zero-mean property guarantees the negligibility of the bias term. Therefore,
    \begin{align}\label{lemmaproof.Sn boot Sn22}
        \sup_w\left\vert S_{n22}^\ast(w)\right\vert = o_p\left(n^{-1/2}\right).
    \end{align}
    Thus, the desired conclusion \eqref{lemma.Sn boot eq1} is established by combining \eqref{lemmaproof.Sn boot sn1}, \eqref{lemmaproof.Sn boot Sn21} and \eqref{lemmaproof.Sn boot Sn22}.
\end{proof}

\begin{lemma}\label{lemma.Unphi1}
    Suppose that Assumptions \ref{ass.sample}--\ref{ass.bandwidth} hold, whether under the null hypothesis or the local alternatives
    \begin{align}\label{lemma.Unphi1 null}
        \sup_{w}\left\vert U_n^{(3)}(\varphi^{(1)}_{wb})-S_n^{(1)}(w)\right\vert = o_p\left(n^{-1/2}\right).
    \end{align}
    Under the alternative hypothesis, 
    \begin{align}\label{lemma.Unphi1 alt}
        \sup_{w}\left\vert U_n^{(3)}(\varphi^{(1)}_{wb})-S_n^{(1)}(w)\right\vert = o_p\left(1\right).
    \end{align}
\end{lemma}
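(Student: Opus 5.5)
The plan is to treat $U_n^{(3)}(\varphi^{(1)}_{wb})$ as a third-order $U$-process indexed by $w$ and apply the Hoeffding decomposition, following the approach of Propositions 2 and 3 in \cite{delgado2001significance}. Write $\epsilon_i=Y_i-m(X_i)$. The decomposition gives
\begin{align*}
U_n^{(3)}(\varphi^{(1)}_{wb}) = \mathbb{E}\left[\varphi^{(1)}_{wb}\right] + U_n^{(1)}(\pi_1\varphi^{(1)}_{wb}) + U_n^{(2)}(\pi_2\varphi^{(1)}_{wb}) + U_n^{(3)}(\pi_3\varphi^{(1)}_{wb}).
\end{align*}
I would show three things. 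First, the mean is zero under $H_0$ and $O(n^{-1/2})$ under $H_{1n}$, with its deterministic part absorbed into $S_n^{(1)}$. Second, the linear term equals $S_n^{(1)}(w)$ up to $o_p(n^{-1/2})$ uniformly in $w$. Third, the degenerate terms of order two and three are $o_p(n^{-1/2})$ uniformly.

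For the mean and the linear term, I compute the three conditional projections $\mathbb{E}_i$, $\mathbb{E}_j$ and $\mathbb{E}_k$ of $\varphi^{(1)}_{wb}(\chi_i,\chi_j,\chi_k)$.
\begin{itemize}
\item Conditioning on $\chi_i$ gives $\epsilon_i d_X(X_i)d_W(W_i)1_x(X_i)\Delta^{-1}(X_i)G(z;X_i)$.
\item Conditioning on $\chi_j$ or $\chi_k$ first integrates out $\epsilon_i$ given $X_i$. Under $H_0$, $\mathbb{E}[\epsilon_i\vert W_i]=0$, so both projections vanish, and so does the mean. Under $H_{1n}$ they are $O(n^{-1/2})$ deterministic-type terms whose centered fluctuations are $o_p(n^{-1/2})$.
\end{itemize}
It then remains to bound
\begin{align*}
\sup_w\left\vert n^{-1}\sum_i \epsilon_i 1_x(X_i)\Delta^{-1}(X_i)G(z;X_i)\left[d_X(X_i)d_W(W_i)-f_X(X_i)f_W(W_i)\right]\right\vert.
\end{align*}
This is a centered empirical process over a VC-type class, because $1_x$ and $G(z;\cdot)$ are monotone in $(x,z)$ and Lemma \ref{lemma.G} applies. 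Its envelope is $\vert\epsilon\vert\Delta^{-1}(X)\vert d_Xd_W-f_Xf_W\vert$. Using the bias expansions from Assumption \ref{ass.structual} together with the kernel orders, $d_X-f_X=O(a^{\lambda_1})$ and $d_W-f_W=O(a^{\lambda_2}+b^{\lambda_2})$, each multiplied by a Lipschitz coefficient with finite moments. Combining Cauchy--Schwarz with the moment condition on $\Delta^{-1}f_X^2$ in Assumption \ref{ass.sample}, the envelope's $L^2$ norm tends to zero. A maximal inequality for VC classes then gives $o_p(n^{-1/2})$, and in fact a rate. Under the fixed alternative, $\mathbb{E}[\epsilon\vert W]\neq0$, so the mean and the $j,k$ projections survive. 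They are $O(1)$ and converge, which is enough for the $o_p(1)$ statement once the uniform law of large numbers replaces the CLT-rate arguments.

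The degenerate parts are the main obstacle. For $U_n^{(2)}(\pi_2\cdot)$ and $U_n^{(3)}(\pi_3\cdot)$ I would use the moment inequalities for degenerate $U$-processes over Euclidean classes (Sherman's, or the Lemma in \cite{delgado2001significance} based on \cite{nolan1987u}). These give
\begin{align*}
\mathbb{E}\sup_w\left\vert U_n^{(r)}(\pi_r\varphi)\right\vert\lesssim n^{-r/2}\left(\mathbb{E}F^2\right)^{1/2},
\end{align*}
where $F$ is an envelope. I would take $F=\vert\epsilon_i\vert\,\vert K_{ij}K_{ik}L_{ik}\vert\sup_z\Delta^{-1}(X_i)G(z;X_i)$, which has $\mathbb{E}F^2=O(a^{-2q}b^{-p})$ up to moment factors. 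This yields the bounds $O_p(n^{-1}a^{-q}b^{-p/2})$ for $r=2$ and $O_p(n^{-3/2}a^{-q}b^{-p/2})$ for $r=3$, and both are $o(n^{-1/2})$ by Assumption \ref{ass.bandwidth}.

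The delicate point is integrability, because $\Delta^{-1}$ is unbounded when $X$ has unbounded support. Here I would use Hölder's inequality with exponents matched to $\delta_1$ and $\delta_2$. The exponent $2+(4+2\delta_2)/\delta_1$ is exactly what balances $\mathbb{E}\vert\epsilon\vert^{2+\delta_1+\delta_2}$ against the $\Delta^{-1}f_X^2$ factor. The condition $\sup f_{Z\vert X}/f_X<\infty$ lets me trade $\Delta^{-1}$ for $\Delta^{-1}f_X^2$ after the kernel integration. If that balance fails at full strength, I would fall back to truncating $\Delta^{-1}(X_i)\le M_n$ and control the remainder through Markov's inequality and the same moment condition.
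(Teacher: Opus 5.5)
Your proposal is correct and follows essentially the same route as the paper. You use a Hoeffding decomposition of the third-order $U$-process, bound the second- and third-order degenerate parts by $o_p(n^{-1/2})$ via moment inequalities for degenerate $U$-processes under the bandwidth condition, and match the $\chi_i$-projection to $S_n^{(1)}(w)$ by combining a VC-type maximal inequality with H\"older's inequality (exponents tied to $\delta_1,\delta_2$). The $\chi_j$- and $\chi_k$-projections vanish under $H_0$ because $\mathbb{E}[\epsilon\vert W]=0$. The one real difference is under $H_{1n}$, where you handle those two projections by pulling out the explicit factor $n^{-1/2}$ so their centered parts are $O_p(n^{-1})$, which is a more direct version of the paper's argument through $g_\Psi$, $m_\Psi$ and kernel-bias bounds.
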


\begin{proof}[Proof of Lemma \ref{lemma.Unphi1}]
    We start by analyzing the main term $U_n^{(3)}(\varphi^{(1)}_{wb})$, which plays a leading role in the asymptotic expansion. Note that 
    \begin{align*}
        \mathbb{E}\left[\sup_{w}\left\vert n^{1/2}U_n^{(3)}(\pi_3\varphi^{(1)}_{wb})\right\vert^2\right]\leq \frac{1}{n^2b^{4q+2p}}O\left(\mathbb{E}\left({\varphi^{(1)}}\right)^2\right) = O\left(\frac{1}{n^2b^{2q+p}}\right)
    \end{align*}
    as the reasoning in the proofs of \cite{delgado2001significance} proposition $1$ and Lemma $2$ suggest, $U_n^{(3)}(\varphi^{(1)}_{wb})$ can be expressed as a third-order $U$-process and the H\'{a}jek projection can then be invoked by applying the symmetrization technique based on permutation invariance. Combining this with the facts 
    \begin{align*}
        \mathbb{E}\left[\sup_{w}\left\vert n^{1/2}U_n^{(2)}(\pi_2\varphi^{(1)}_{wb})\right\vert^2\right] \leq O\left(\frac{1}{nb^{2q+p}}\right) = o\left(1\right)
    \end{align*}
    by the same reasoning as $U_n^{(3)}(\varphi^{(1)}_{wb})$ together with implied by Assumption \ref{ass.bandwidth}, we obtain $n^{1/2}U_n^{(3)}(\varphi^{(1)}_{wb}) = 3n^{1/2}U_n^{(1)}(\pi_1\varphi^{(1)}_{wb})+\mathbb{E}[U_n^{(3)}(\varphi^{(1)}_{wb})]+o_p(1)$ holds uniformly over $w$, where 
    \begin{align}\label{lemmaproof.Sn Hajek 3-order}
        &3U_n^{(1)}(\pi_1\varphi^{(1)}_{wb}) = U_{ni}^{(1)}(\varphi^{(1)}_{wb})+U_{nj}^{(1)}(\varphi^{(1)}_{wb})+ U_{nk}^{(1)}(\varphi^{(1)}_{wb})-3\mathbb{E}\left[U_{n}^{(1)}(\varphi^{(1)}_{wb})\right].
    \end{align}
    For the first term in equation \eqref{lemmaproof.Sn Hajek 3-order}, the independence across samples implies that
    \begin{align*}
        U_{ni}^{(1)}(\varphi^{(1)}_{wb}) = \frac{1}{n}\sum_{i=1}^n\left(Y_i-m(X_i)\right)d_X(X_i)d_W(W_i)1_x(X_i)\Delta^{-1}(X_i)G(z;X_i).
    \end{align*}
    Applying H\"older’s inequality, taking into account Assumption \ref{ass.sample} concerning $\epsilon$ and $\Delta(X)$ together with Lemma \ref{lemma.condition dist},
    \begin{align}\label{lemmaproof.unidecomp term3 variance}
        &\mathbb{E}\left\vert \left(Y_i-m(X_i)\right)f_X^2(X_i)\left[f_{Z\vert X}(W_i)-\frac{d_W(W_i)}{f_X(X_i)}\right]\Delta^{-1}(X_i)\right\vert^2\notag\\
        \leq&\left(\mathbb{E}\left\vert \epsilon\right\vert^{2+\delta_1+\delta_2}\right)^{\frac{2}{2+\delta_1+\delta_2}} \left[\mathbb{E}\left\vert f_{Z\vert X}(W_i)-\frac{d_W(W_i)}{f_X(X_i)}\right\vert^{2+\frac{4+2\delta_2}{\delta_1}}\right]^{\frac{\delta_1}{2+\delta_1+\delta_2}}\left[\mathbb{E}\left\vert f_X^2(X)\Delta^{-1}(X)\right\vert^{2+\frac{4+2\delta_1}{\delta_2}}\right]^{\frac{\delta_2}{2+\delta_1+\delta_2}}\notag\\
        =&o\left(1\right).
    \end{align}
    Moreover, since we can perform the decomposition
    \begin{align}\label{lemmaproof.unidecomp}
        &U_{ni}^{(1)}(\varphi^{(1)}_{wb}) - \frac{1}{n}\sum_{i=1}^n\left(Y_i-m(X_i)\right)f_X(X_i)f_{W}(W_i)1_x(X_i)\Delta^{-1}(X_i)G(z;X_i)\notag\\
        =&\frac{1}{n}\sum_{i=1}^n\left(Y_i-m(X_i)\right)\left[d_X(X_i)-f_X(X_i)\right]\left[f_{Z\vert X}(W_i)-\frac{d_W(W_i)}{f_X(X_i)}\right]f_X(X_i)1_x(X_i)\Delta^{-1}(X_i)G(z;X_i)\notag\\
        &+\frac{1}{n}\sum_{i=1}^n\left(Y_i-m(X_i)\right)\left[d_X(X_i)-f_X(X_i)\right]f_{Z\vert X}(W_i)f_X(X_i)1_x(X_i)\Delta^{-1}(X_i)G(z;X_i)\notag\\
        &+\frac{1}{n}\sum_{i=1}^n\left(Y_i-m(X_i)\right)f^2_X(X_i)\left[f_{Z\vert X}(W_i)-\frac{d_W(W_i)}{f_X(X_i)}\right]1_x(X_i)\Delta^{-1}(X_i)G(z;X_i),
    \end{align}
    it follows that the third empirical process in the expansion can be viewed as being indexed by a VC-type class of functions, with VC characteristics independent of its envelope
    \begin{align*}
        \left\vert \left(Y_i-m(X_i)\right)f^2_X(X_i)\left[f_{Z\vert X}(W_i)-\frac{d_W(W_i)}{f_X(X_i)}\right]\Delta^{-1}(X_i)\right\vert.
    \end{align*}
    Thus, the second-moment bound of the desired empirical process in the expansion is of order $o(1)$ with the application of  Proposition $4$ in \cite{delgado2001significance}. Similar arguments also show that the variances of the envelope functions for other terms in the decomposition are negligible. For the bias term of the third empirical process in the expansion, under the null hypothesis, since we have $\mathbb{E}(Y\vert W)=\mathbb{E}(Y\vert X)$, we obtain
    \begin{align*}
        \mathbb{E}\left[\frac{1}{n}\sum_{i=1}^n\left(Y_i-m(X_i)\right)f^2_X(X_i)\left[f_{Z\vert X}(W_i)-\frac{d_W(W_i)}{f_X(X_i)}\right]1_x(X_i)\Delta^{-1}(X_i)G(z;X_i)\right] = 0.
    \end{align*}
    Under the alternative hypothesis, it suffices to show that
    \begin{align}\label{lemmaproof.unidecomp term3 alt}
        \sup_{w}\left\vert\mathbb{E}\left\{\left(Y_i-m(X_i)\right)f^2_X(X_i)\left[f_{Z\vert X}(W_i)-\frac{d_W(W_i)}{f_X(X_i)}\right]1_x(X_i)\Delta^{-1}(X_i)G(z;X_i)\right\}\right\vert = o\left(1\right).
    \end{align}
    Since the left-hand side of \eqref{lemmaproof.unidecomp term3 alt} is bounded by
    \begin{align}\label{lemmaproof.unidecomp term3 bias}
        &\mathbb{E}\left\vert\left(\mathbb{E}(Y_i\vert W_i)-m(X_i)\right)\left[f_{Z\vert X}(W_i)-\frac{d_W(W_i)}{f_X(X_i)}\right]f_X^2(X_i)\Delta^{-1}(X_i)\right\vert=o\left(1\right),
    \end{align}
    which holds by the same reasoning as the proof of \eqref{lemmaproof.unidecomp term3 variance}, \eqref{lemmaproof.unidecomp term3 alt} follows. Combining the result on the variance with the uniform law of large numbers established in \cite{newey1991uniform}, we obtain
    \begin{align*}
        \sup_{w}\left\vert \frac{1}{n}\sum_{i=1}^n\left(Y_i-m(X_i)\right)f^2_X(X_i)\left[f_{Z\vert X}(W_i)-\frac{d_W(W_i)}{f_X(X_i)}\right]1_x(X_i)\Delta^{-1}(X_i)G(z;X_i)\right\vert=o_p\left(1\right).
    \end{align*}
    Under the local alternative hypothesis specified in \eqref{hyp.local}, replacing $\mathbb{E}(Y\vert W)-m(X)$ in equation \eqref{lemmaproof.unidecomp term3 bias} with $\Psi(W)$
    yields
    \begin{align*}
        \sup_{w}\left\vert \frac{1}{n}\sum_{i=1}^n\Psi(W_i)f^2_X(X_i)\left[f_{Z\vert X}(W_i)-\frac{d_W(W_i)}{f_X(X_i)}\right]1_x(X_i)\Delta^{-1}(X_i)G(z;X_i)\right\vert=o_p\left(1\right)
    \end{align*}
    by the same argument, and hence
    \begin{align*}
        \sup_{w}\left\vert \frac{1}{n}\sum_{i=1}^n\left(Y_i-m(X_i)\right)f^2_X(X_i)\left[f_{Z\vert X}(W_i)-\frac{d_W(W_i)}{f_X(X_i)}\right]1_x(X_i)\Delta^{-1}(X_i)G(z;X_i)\right\vert=o_p\left(n^{-1/2}\right)
    \end{align*}
    follows. The remaining terms in equation \eqref{lemmaproof.unidecomp} can be treated analogously. Therefore, we obtain
    \begin{align}\label{lemmaproof.Unphi1 uni}
        \sup_{w}\left\vert U_{ni}^{(1)}(\varphi^{(1)}_{wb}) - \frac{1}{n}\sum_{i=1}^n\left(Y_i-m(X_i)\right)f_{W}(W_i)f_X(X_i)1_x(X_i)\Delta^{-1}(X_i)G(z;X_i)\right\vert=o_p\left(n^{-1/2}\right)
    \end{align}
    whether under the null or local alternatives invoking Proposition $5$ in \cite{delgado2001significance}. Under the alternative hypothesis, a corresponding convergence-in-probability result can be obtained.
    
    Subsequently, by conditioning on $\chi_i$,
    \begin{align}\label{lemmaproof.unjdecomp}
        U_{nj}^{(1)}(\varphi^{(1)}_{wb}) = &\frac{1}{n}\sum_{j=1}^{n}\mathbb{E}_j\left\{\left(Y_i-m(X_i)\right)K_{ij}d_W(W_i)1_x(X_i)\Delta^{-1}(X_i)G(z;X_i)\right\}\notag\\
        =&\frac{1}{n}\sum_{j=1}^{n}\mathbb{E}_j\left\{\left(Y_i-m(X_i)\right)K_{ij}\left[\frac{d_W(W_i)}{f_X(X_i)}-f_{Z\vert X}(W_i)\right]f_X(X_i)1_x(X_i)\Delta^{-1}(X_i)G(z;X_i)\right\}\notag\\
        &+\frac{1}{n}\sum_{j=1}^{n}\mathbb{E}_j\left[\left(Y_i-m(X_i)\right)K_{ij}f_{Z\vert X}(W_i)f_X(X_i)1_x(X_i)\Delta^{-1}(X_i)G(z;X_i)\right].
    \end{align}
    We first consider the first term of the decomposition. Under the null hypothesis,
    \begin{align*}
        &\mathbb{E}_j\left\{\left(Y_i-m(X_i)\right)K_{ij}\left[\frac{d_W(W_i)}{f_X(X_i)}-f_{Z\vert X}(W_i)\right]f_X(X_i)1_x(X_i)\Delta^{-1}(X_i)G(z;X_i)\right\}\\
        =&\mathbb{E}_j\left\{\left[\mathbb{E}(Y_i\vert W_i)-m(X_i)\right]K_{ij}\left[\frac{d_W(W_i)}{f_X(X_i)}-f_{Z\vert X}(W_i)\right]f_X(X_i)1_x(X_i)\Delta^{-1}(X_i)G(z;X_i)\right\}\\
        =&0.
    \end{align*}
    Under the local alternative hypothesis, we denote 
    \begin{align*}
        &g_\Psi(X_i) = \mathbb{E}\left\{\Psi(X_i)\left[\frac{d_W(W_i)}{f_X(X_i)}-f_{Z\vert X}(W_i)\right]\vert X_i\right\},
    \end{align*}
    from which it follows that
    \begin{align*}
        &\mathbb{E}_j\left\{\left(Y_i-m(X_i)\right)K_{ij}\left[\frac{d_W(W_i)}{f_X(X_i)}-f_{Z\vert X}(W_i)\right]f_X(X_i)1_x(X_i)\Delta^{-1}(X_i)G(z;X_i)\right\}\\
        =&\frac{1}{\sqrt{n}}\mathbb{E}_j\left\{\Psi(W_i)K_{ij}\left[\frac{d_W(W_i)}{f_X(X_i)}-f_{Z\vert X}(W_i)\right]f_X(X_i)1_x(X_i)\Delta^{-1}(X_i)G(z;X_i)\right\}\\
        =&\frac{1}{\sqrt{n}}\mathbb{E}_j\left[g_\Psi(X_i)K_{ij}p_w(X_i)\right].
    \end{align*}
    Consequently, our task reduces to proving that
    \begin{align}\label{lemmaproof.unjdecomp term1 var}
        \sup_{w}\left\vert\frac{1}{n}\sum_{j=1}^n\mathbb{E}_j\left[g_\Psi(X_i)K_{ij}p_w(X_i)\right]\right\vert = o_p\left(1\right).
    \end{align}
    The proof for the bias term, obtained by taking expectations of the entire expression, is identical to that for the first term in \eqref{lemmaproof.unidecomp}. Hence, it suffices to show that
    \begin{align*}
        &\mathbb{E}\sup_{w}\left\vert g_\Psi(X_i)K_{ij}p_w(X_i)\right\vert\\
        \leq&\int \left\vert g_\Psi(x)\right\vert\left\vert \Delta^{-1}(x)f_X^2(x)\right\vert\mathbb{E}\left\vert K_{a}\left(\frac{x-X_j}{a}\right)\right\vert f_X(x)\,dx\\
        =&O\left(\mathbb{E}\left\vert g_\Psi(X)\Delta^{-1}(X)f_X^2(X)\right\vert\right)=o\left(1\right),
    \end{align*}
    where the result follows from Lemmas \ref{lemma.condition dist}, \ref{lemma.G} and \ref{lemma.delta}, together with Lemma $2$ in \cite{delgado2001significance}, and by applying H\"older’s inequality. Under the alternative hypothesis, the same reasoning as for equation \eqref{lemmaproof.unjdecomp term1 var} applies to show that
    \begin{align*}
        \sup_{w}\left\vert \frac{1}{n}\sum_{j=1}^{n}\mathbb{E}_j\left\{\left(Y_i-m(X_i)\right)K_{ij}\left[\frac{d_W(W_i)}{f_X(X_i)}-f_{Z\vert X}(W_i)\right]f_X(X_i)1_x(X_i)\Delta^{-1}(X_i)G(z;X_i)\right\}\right\vert=o_p\left(1\right).
    \end{align*}
    For the second term in the decomposition \eqref{lemmaproof.unjdecomp}, it can be shown that it equals zero under the null hypothesis, and is equivalent to
    \begin{align}\label{lemmaproof.unjdecomp second}
        &\frac{1}{n}\sum_{j=1}^n\mathbb{E}_j\left\{\left[\mathbb{E}(Y_i\vert W_i)-m(X_i)\right]f_{Z\vert X}(W_i)K_{ij}f_X(X_i)1_x(X_i)\Delta^{-1}(X_i)G(z;X_i)\right\}\notag\\
        =&\frac{1}{n}\sum_{j=1}^n\mathbb{E}_j\left\{\left[\mathbb{E}(Y_i\vert W_i)-m(X_i)\right]f_{Z\vert X}(W_i)\left[K_{ij}-f_X(X_i)\right]f_X(X_i)1_x(X_i)\Delta^{-1}(X_i)G(z;X_i)\right\}\notag\\
        &+\frac{1}{n}\sum_{j=1}^n\mathbb{E}_j\left\{\left[\mathbb{E}(Y_i\vert W_i)-m(X_i)\right]f_{W}(W_i)f_X(X_i)1_x(X_i)\Delta^{-1}(X_i)G(z;X_i)\right\}
    \end{align}
    under the local and the global alternative hypotheses. Redefining
    \begin{align*}
        m_\Psi(X_i) = \mathbb{E}\left[\Psi(W_i)f_{Z\vert X}(W_i)\vert X_i\right]
    \end{align*}
    while keeping $p_w(X_i)$ as previously defined, under the local alternative hypothesis, we can rewrite the first term in the expansion \eqref{lemmaproof.unjdecomp second} as
    \begin{align}\label{lemmaproof.unjdecomp second first term}
        n^{-3/2}\sum_{j=1}^n \int m_\Psi(x)p_w(x)\left[K_a\left(\frac{x-X_j}{a}\right)-f_X(x)\right]f_X(x)\,dx.
    \end{align}
    By an argument analogous to that for equation \eqref{lemmaproof.unjdecomp term1 var}, we can derive
    \begin{align*}
        \mathbb{E}\int \left\vert m_\Psi(x)\Delta^{-1}(x)f^2_X(x)\left[K_a\left(\frac{x-X_j}{a}\right)-f_X(x)\right]\right\vert f_X(x)\,dx<\infty.
    \end{align*}
    The Law of Large Numbers in \cite{newey1991uniform} implies that equation \eqref{lemmaproof.unjdecomp second first term} converges in probability to
    \begin{align*}
        n^{-1/2}\int m_\Psi(x)p_w(x)\mathbb{E}\left[K_a\left(\frac{x-X_j}{a}\right)-f_X(x)\right]f_X(x)\,dx.
    \end{align*}
    Using Lemmas \ref{lemma.G} and \ref{lemma.delta} together with Lemma $4$ in \cite{delgado2001significance}, we obtain
    \begin{align*}
        n^{-1/2}\int m_\Psi(x)p_w(x)\mathbb{E}\left[K_a\left(\frac{x-X_j}{a}\right)-f_X(x)\right]f_X(x)\,dx = o\left(n^{-1/2}\right),
    \end{align*}
    and thus the result
    \begin{align*}
        \sup_{w}\left\vert U_{nj}^{(1)}(\varphi^{(1)}_{wb}) - \frac{1}{n}\sum_{j=1}^n\mathbb{E}_j\left[\left(Y_i-m(X_i)\right)f_{W}(W_i)f_X(X_i)1_x(X_i)\Delta^{-1}(X_i)G(z;X_i)\right]\right\vert=o_p\left(n^{-1/2}\right)
    \end{align*}
    is proved and subsequently,
    \begin{align}\label{lemmaproof.Sn unj}
        \sup_{w}\left\vert U_{nj}^{(1)}(\varphi^{(1)}_{wb}) - \mathbb{E}\left[U_{n}^{(3)}(\varphi^{(1)}_{wb})\right]\right\vert=o_p\left(n^{-1/2}\right)
    \end{align}
    follows by combining the conclusion concerning the bias term, whether under the null or local alternatives.

    For $U_{nk}^{(1)}(\varphi^{(1)}_{wb})$, Similar to the proof of $U_{nj}^{(1)}(\varphi^{(1)}_{wb})$, by conditioning on $\chi_i$, we obtain
    \begin{align}\label{lemmaproof.unkdecomp}
        U_{nk}^{(1)}(\varphi^{(1)}_{wb}) = &\frac{1}{n}\sum_{k=1}^{n}\mathbb{E}_k\left\{\left(Y_i-m(X_i)\right)d_X(X_i)K_{ik}L_{ik}1_x(X_i)\Delta^{-1}(X_i)G(z;X_i)\right\}\notag\\
        =&\frac{1}{n}\sum_{k=1}^{n}\mathbb{E}_k\left\{\left(Y_i-m(X_i)\right)\left[d_X(X_i)-f_{X}(X_i)\right]K_{ik}L_{ik}1_x(X_i)\Delta^{-1}(X_i)G(z;X_i)\right\}\notag\\
        &+\frac{1}{n}\sum_{k=1}^{n}\mathbb{E}_k\left[\left(Y_i-m(X_i)\right)K_{ik}L_{ik}f_{X}(X_i)1_x(X_i)\Delta^{-1}(X_i)G(z;X_i)\right].
    \end{align}
    By similar arguments, conditioning on $W_i$ under the null hypothesis implies that $U_{nk}^{(1)}(\varphi^{(1)}_{wb})=0$, and under the local alternatives it is equivalent to 
    \begin{align}\label{lemmaproof.unkdecomp eqv}
        &n^{-3/2}\sum_{k=1}^{n}\mathbb{E}_k\left\{\Psi(W_i)\left[d_X(X_i)-f_{X}(X_i)\right]K_{ik}L_{ik}1_x(X_i)\Delta^{-1}(X_i)G(z;X_i)\right\}\notag\\
        &+n^{-3/2}\sum_{k=1}^{n}\mathbb{E}_k\left[\Psi(W_i)K_{ik}L_{ik}f_{X}(X_i)1_x(X_i)\Delta^{-1}(X_i)G(z;X_i)\right].
    \end{align}
    The first term in the expansion \eqref{lemmaproof.unkdecomp eqv} equals to
    \begin{align*}
        n^{-3/2}\sum_{k=1}^{n}\mathbb{E}_k\left\{\mathbb{E}_k\left[\Psi(W_i)L_{ik}\vert X_i\right]\left[\frac{d_X(X_i)}{f_{X}(X_i)}-1\right]K_{ik}p_w(X_i)\right\}
    \end{align*}
    and the bias term can be proved to be degenerate by arguments entirely analogous to those used for equation \eqref{lemmaproof.unjdecomp term1 var}. Hence, by arguments similar to those used for equation \eqref{lemmaproof.unjdecomp term1 var} once again, it suffices to show that
    \begin{align*}
        \int\left\vert \frac{d_X(x)}{f_{X}(x)}-1\right\vert\left\vert \Delta^{-1}(x)f_X^2(x)\right\vert\mathbb{E}\left\vert \mathbb{E}\left[\Psi(x,Z_i)L_b\left(\frac{Z_i-Z_k}{b}\right)\vert Z_k\right]K_{a}\left(\frac{x-X_k}{a}\right)\right\vert f_X(x)\,dx
    \end{align*}
    is negligible, that is, of order $o(1)$. Using Lemma \ref{lemma.bound double kernel}, the above expression reduces to showing the negligibility of
    \begin{align*}
        \int\left\vert \frac{d_X(x)}{f_{X}(x)}-1\right\vert\left\vert \Delta^{-1}(x)f_X^2(x)\right\vert f_X(x)\,dx
    \end{align*}
    which follows from the negligibility of $\mathbb{E}[d_X(X)/f_X(X)-1]$, as implied by arguments analogous to those in Lemma \ref{lemma.condition dist}, and from the boundedness of the integral of $\Delta^{-1}(x)f_X^2(x)$ ensured by Assumption \ref{ass.sample}. For the second term in decomposition \eqref{lemmaproof.unkdecomp eqv}, 
    \begin{align*}
        \mathbb{E}\left\vert \sup_{w}\mathbb{E}_k\left\{\Psi(W_i)\left[K_{ik}L_{ik}-f_W(W_i)\right]p_w(X_i)\right\}\right\vert = o_p\left(1\right)
    \end{align*}
    can be established by arguments similar to those used in the proof of equation \eqref{lemmaproof.unjdecomp term1 var}, which consequently leads to the fact that
    \begin{align}\label{lemmaproof.Sn unk}
        \sup_{w}\left\vert U_{nk}^{(1)}(\varphi^{(1)}_{wb}) - \mathbb{E}\left[U_{n}^{(3)}(\varphi^{(1)}_{wb})\right]\right\vert=o_p\left(n^{-1/2}\right)
    \end{align}
    holds by combining the conclusion concerning the bias term, whether under the null or local alternatives.
    
    Subsequently, by combining equations \eqref{lemmaproof.Sn Hajek 3-order}, \eqref{lemmaproof.Unphi1 uni}, \eqref{lemmaproof.Sn unj} and \eqref{lemmaproof.Sn unk}, \eqref{lemma.Unphi1 null} follows. The same reasoning as in the proof under the local alternatives shows that \eqref{lemma.Unphi1 alt} also holds.
\end{proof}

\begin{lemma}\label{lemma.Unphi1 boot}
    Suppose that Assumptions \ref{ass.sample}--\ref{ass.bandwidth} hold,
    \begin{align}\label{lemma.Unphi1 boot null}
        \sup_{w}\left\vert {U_n^{(3)}}^\ast(\varphi^{(1)}_{wb})-{S_n^{(1)}}^\ast(w)\right\vert = o_p\left(n^{-1/2}\right).
    \end{align}
    regardless of the underlying hypothesis.
\end{lemma}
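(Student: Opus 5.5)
Here ${U_n^{(3)}}^\ast(\varphi^{(1)}_{wb})$ is the third-order $U$-process with asymmetric kernel
\[
V_i\left(Y_i-m(X_i)\right)K_{ij}K_{ik}L_{ik}1_x(X_i)\Delta^{-1}(X_i)G(z;X_i).
\]
The multiplier $V_i$ is attached to the index $i$, and $\{V_i\}$ is independent of the sample with $\mathbb{E}V_i=0$ and $\mathbb{E}V_i^2=1$. The plan mirrors the proof of Lemma \ref{lemma.Unphi1}, treating $\tilde\chi_i=(V_i,\chi_i)$ as an enlarged i.i.d.\ observation. The kernel is then an ordinary (asymmetric) kernel in $\tilde\chi_i,\tilde\chi_j,\tilde\chi_k$, so the symmetrization and Hoeffding machinery of Section \ref{sec.AppendixB} applies without change.

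\textbf{Higher-order terms.} First I bound the degenerate components. Since $|V_i|$ enters only through its second moment, which equals one, the bounds from Lemma \ref{lemma.Unphi1} carry over verbatim:
\[
\mathbb{E}\sup_w|n^{1/2}U_n^{(3)}(\pi_3\cdot)|^2=O\left(n^{-2}b^{-2q-p}\right),\qquad \mathbb{E}\sup_w|n^{1/2}U_n^{(2)}(\pi_2\cdot)|^2=O\left(n^{-1}b^{-2q-p}\right).
\]
Both are $o(1)$ by Assumption \ref{ass.bandwidth}. It therefore suffices to study the Hájek projection, which has three parts: $U_{ni}^{(1)}$, $U_{nj}^{(1)}$ and $U_{nk}^{(1)}$.

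\textbf{Projection terms.} Conditioning on $\tilde\chi_j$ or on $\tilde\chi_k$ integrates out $V_i$, which has mean zero and is independent of everything else. Hence $U_{nj}^{(1)}\equiv0$ and $U_{nk}^{(1)}\equiv0$ exactly, under \emph{any} hypothesis, and the overall mean is also zero. This is why no null restriction is needed. The only surviving term is
\[
U_{ni}^{(1)}=\frac1n\sum_i V_i\left(Y_i-m(X_i)\right)d_X(X_i)d_W(W_i)1_x(X_i)\Delta^{-1}(X_i)G(z;X_i).
\]
Subtracting ${S_n^{(1)}}^\ast(w)$, I decompose the difference exactly as in \eqref{lemmaproof.unidecomp}, now with a factor $V_i$ in every summand. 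Each of the three resulting processes is a centered empirical process, with mean zero because of $V_i$ and regardless of $\mathbb{E}(Y|W)$. Each is indexed by a VC-type class, since multiplying by the fixed function $V_i$ preserves the VC property, and its envelope is $|V_i|$ times the envelope used in Lemma \ref{lemma.Unphi1}. By independence and $\mathbb{E}V_i^2=1$, the second moment of that envelope equals the one bounded in \eqref{lemmaproof.unidecomp term3 variance}, which is $o(1)$ by Hölder's inequality, Assumption \ref{ass.sample} and Lemma \ref{lemma.condition dist}. Proposition 4 of \cite{delgado2001significance} (a maximal inequality) then gives $\mathbb{E}\sup_w|n^{1/2}(\cdot)|=o(1)$. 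This yields the $o_p(n^{-1/2})$ bound uniformly in $w$.

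\textbf{Main obstacle.} The delicate point is that under a fixed alternative the quantity $Y_i-m(X_i)$ is not conditionally centered given $W_i$. In Lemma \ref{lemma.Unphi1} this forced only an $o_p(1)$ rate. Here I must check that centering by $V_i$ alone suffices to give the variance-type $o(n^{-1/2})$ bound. This requires the envelope $|V_i(Y_i-m(X_i))|f_X^2\Delta^{-1}|f_{Z|X}-d_W/f_X|$ to have vanishing second moment, which uses only moments of $Y-m(X)$ that Assumption \ref{ass.sample} already imposes. I expect this to go through, with the degenerate-term rates handled as above.
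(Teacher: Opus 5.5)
Your proposal is correct and follows essentially the same route as the paper. Both arguments use a Hoeffding reduction of the multiplier $U$-process to its $i$-projection, since the $j$- and $k$-projections and the mean vanish because $\mathbb{E}V_i=0$, and then split that projection as in \eqref{lemmaproof.unidecomp} into $V_i$-centered VC-type empirical processes whose envelopes have vanishing second moments; this is why no hypothesis-specific bias argument is needed, and you simply make explicit, through the augmented observations $(V_i,\chi_i)$, the steps the paper summarizes as reasoning similar to Lemma \ref{lemma.Unphi1}.
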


\begin{proof}[Proof of Lemma \ref{lemma.Unphi1 boot}]
    Owing to the zero-mean and unit-variance properties of the multiplier, together with its independence from the sample $\chi_i$ for any $i$, reasoning similar to that employed in the proof of \eqref{lemmaproof.Sn Hajek 3-order} establishes
    \begin{align}\label{lemmaproof.Sn Hajek 3-order boot}
        &\sup_{w}\left\vert {U_n^{(3)}}^\ast(\varphi^{(1)}_{wb}) - {U_{ni}^\ast}^{(1)}(\varphi^{(1)}_{wb})\right\vert =o_p\left(n^{-1/2}\right),
    \end{align}
    where ${U_{ni}^\ast}^{(1)}(\varphi^{(1)}_{wb})$ is the bootstrap version of $U_{ni}^{(1)}(\varphi^{(1)}_{wb})$ mentioned in proof of Lemma \ref{lemma.Unphi1}. Still, the distributional properties of the multiplier and its independence from the sample allow the reasoning for $U_{ni}^{(1)}(\varphi^{(1)}_{wb})$ to be extended to ${U_{ni}^\ast}^{(1)}(\varphi^{(1)}_{wb})$, from which
    \begin{align}\label{lemmaproof.Sn boot uni boot}
        \sup_{w}\left\vert {U_{ni}^\ast}^{(1)}(\varphi^{(1)}_{wb}) - {S_n^{(1)}}^\ast(w)\right\vert=o_p\left(n^{-1/2}\right)
    \end{align}
    follows under the null or local alternatives. Consequently, equations \eqref{lemmaproof.Sn Hajek 3-order boot} and \eqref{lemmaproof.Sn boot uni boot} together imply \eqref{lemma.Unphi1 boot null} under the null or local alternative hypotheses. Moreover, the zero-mean property of the multiplier implies that the bootstrap version of the empirical processes in equation \eqref{lemmaproof.unidecomp} is also centered under the alternative hypothesis. Together with the square integrability of their envelope functions established in the proof of Lemma \eqref{lemma.Unphi1}, equation \eqref{lemma.Unphi1 boot null} holds under the alternative hypothesis as well.
\end{proof}

\begin{lemma}\label{lemma.Unpsi1}
    Suppose that Assumptions \ref{ass.sample}--\ref{ass.bandwidth} hold,
    \begin{align}\label{lemma.Unpsi1 null}
        \sup_{w}\left\vert n^{-1/2}U_n^{(2)}(\psi^{(1)}_{wb})\right\vert = o_p\left(1\right).
    \end{align}
    For the bootstrap version, 
    \begin{align}\label{lemma.Unpsi1 boot}
        \sup_{w}\left\vert n^{-1/2}{U_n^{(2)}}^\ast(\psi^{(1)}_{wb})\right\vert = o_p\left(1\right).
    \end{align}
\end{lemma}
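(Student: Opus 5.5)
The plan is to avoid any symmetrisation or maximal inequality. The supremum over $w$ can be removed by a deterministic envelope bound, after which Markov's inequality and the bandwidth rate in Assumption \ref{ass.bandwidth} finish the argument. Recall that the term matters only through $\frac{1}{n-1}U_n^{(2)}(\psi^{(1)}_{wb})$ in $S_{n1}(w)$, so the target is $\frac{1}{n-1}\sup_w|U_n^{(2)}(\psi^{(1)}_{wb})| = o_p(n^{-1/2})$, i.e.\ $n^{-1/2}\sup_w|U_n^{(2)}(\psi^{(1)}_{wb})| = o_p(1)$.

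\textbf{Step 1 (envelope).} For every $w$ we have $0\le 1_x(X_i)\le 1$ and $0\le G(z;X_i)\le \int f_W(X_i,\bar z)\,d\bar z = f_X(X_i)$. Hence
\begin{align*}
\sup_w\left\vert \psi^{(1)}_{wb}(\chi_i,\chi_j)\right\vert \le \left\vert Y_i-m(X_i)\right\vert\, \left\vert K_{ij}\right\vert^2 \left\vert L_{ij}\right\vert\, \Delta^{-1}(X_i) f_X(X_i)=:\bar\psi(\chi_i,\chi_j),
\end{align*}
and consequently $\sup_w|U_n^{(2)}(\psi^{(1)}_{wb})|\le U_n^{(2)}(\bar\psi)$ with $\bar\psi$ free of $w$. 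By Markov's inequality it then suffices to show $\mathbb{E}\,\bar\psi(\chi_1,\chi_2)=o(n^{1/2})$.

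\textbf{Step 2 (moment bound).} I condition on $\chi_i$ and use independence of $\chi_j$. This gives
\begin{align*}
\mathbb{E}[\bar\psi\mid\chi_i]=|\epsilon_i|\,\Delta^{-1}(X_i)f_X(X_i)\,\mathbb{E}\left[|K_{ij}|^2|L_{ij}|\mid W_i\right],
\end{align*}
where the last factor is the absolute-kernel analogue of $d^{(2)}_W(W_i)$. A change of variables, together with boundedness of $f_W=f_X f_{Z\vert X}$ (Assumption \ref{ass.structual} with $\alpha=\infty$) and integrability of $|k|^2$ and $|k|$ (the decay condition on $\mathscr{K}_l$), yields $\mathbb{E}[|K_{ij}|^2|L_{ij}|\mid W_i]\le C a^{-q}$ uniformly in $W_i$. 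Therefore $\mathbb{E}\bar\psi\le C a^{-q}\,\mathbb{E}\left[|\epsilon|\,\Delta^{-1}(X)f_X(X)\right]$.

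Write $\Delta^{-1}f_X = (\Delta^{-1}f_X^2)/f_X$. The remaining expectation is then handled either by refining the conditional bound to $C a^{-q} f_{Z\vert X}(Z_i\vert X_i)$ and using $\sup f_{Z\vert X}/f_X<\infty$, or directly through H\"older's inequality with the moments of $\epsilon$ and $\Delta^{-1}(X)f_X^2(X)$ in Assumption \ref{ass.sample}. In either case $\mathbb{E}\bar\psi=O(a^{-q})$.

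\textbf{Step 3 (rate).} We get $n^{-1/2}\mathbb{E}\sup_w|U_n^{(2)}(\psi^{(1)}_{wb})| = O\big((n a^{2q})^{-1/2}\big)$. This tends to zero because Assumption \ref{ass.bandwidth} gives $n a^{2q}b^{2p}\to\infty$, hence $n a^{2q}\to\infty$ since $b\to0$. This proves \eqref{lemma.Unpsi1 null} regardless of the hypothesis, since no centring was used.

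\textbf{Bootstrap version.} For \eqref{lemma.Unpsi1 boot} the same envelope applies with $\bar\psi$ multiplied by $|V_i|$. By independence of the multipliers from the sample and $\mathbb{E}|V_i|\le(\mathbb{E}V_i^2)^{1/2}=1$, the expectation bound is unchanged, and the identical Markov argument applies.

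\textbf{Main obstacle.} The only delicate point is Step 2: bounding $\mathbb{E}[|K_{ij}|^2|L_{ij}|\mid W_i]$ uniformly and making the weight $\Delta^{-1}f_X$ integrable without assuming $f_X$ bounded away from zero. My first attempt would be to keep the factor $f_{Z\vert X}(Z_i\vert X_i)$ produced by the kernel integral, so that the product becomes $|\epsilon|\,f_{Z\vert X}\,\Delta^{-1} f_X$. I would then invoke $\sup f_{Z\vert X}/f_X<\infty$ to reduce this to $|\epsilon|\,\Delta^{-1}f_X^2$, which is integrable by H\"older's inequality and Assumption \ref{ass.sample}. The smoothing error in replacing the convolution by $f_{Z\vert X}$ would be controlled as in Lemma \ref{lemma.bound double kernel}.
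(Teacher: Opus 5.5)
Your overall architecture is sound: a $w$-free envelope, Markov's inequality, and $na^{2q}\to\infty$. Steps 1 and 3 and the bootstrap paragraph are fine. The gap is the last claim of Step 2, that $\mathbb{E}[\vert\epsilon\vert\Delta^{-1}(X)f_X(X)]<\infty$, and neither justification you offer delivers it.

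\emph{H\"older with Assumption \ref{ass.sample}.} This cannot work, because $\Delta^{-1}f_X=(\Delta^{-1}f_X^2)/f_X$ and that assumption contains no negative moment of $f_X$. In fact $\mathbb{E}[\Delta^{-1}(X)f_X(X)]=\int[\int f_{Z\vert X}^2(\bar z\vert x)\,d\bar z]^{-1}dx$ is a Lebesgue integral over the support of $X$. It is therefore infinite whenever that support is unbounded and $f_{Z\vert X}$ is bounded. For independent $N(0,1)$ covariates, $\Delta^{-1}f_X^2\equiv 2\sqrt{\pi}$, so the moment condition holds trivially while your bound is $+\infty$.

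\emph{The ``refinement'' route.} This one is circular. A kernel convolution is not pointwise dominated by $Ca^{-q}f_{Z\vert X}(Z_i\vert X_i)$. The smoothing error is weighted by the same $\Delta^{-1}f_X$. Lemma \ref{lemma.bound double kernel} is only a uniform $O(1)$ bound, so it cannot control that weighted error.

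The step closes if you use the ratio condition in Assumption \ref{ass.structual} differently. It gives $\Delta(x)/f_X^2(x)=\int f_{Z\vert X}^2(\bar z\vert x)\,d\bar z\le\sup_{\bar z}f_{Z\vert X}(\bar z\vert x)\le Cf_X(x)$, i.e.\ $f_X^{-1}\le C\Delta^{-1}f_X^2$. Hence $\Delta^{-1}f_X\le C(\Delta^{-1}f_X^2)^2$, and
\[
\mathbb{E}\left[\vert\epsilon\vert\Delta^{-1}(X)f_X(X)\right]\le C\left(\mathbb{E}\vert\epsilon\vert^{2+\delta_1+\delta_2}\right)^{\frac{1}{2+\delta_1+\delta_2}}\left(\mathbb{E}\left\vert\Delta^{-1}(X)f_X^2(X)\right\vert^{2s}\right)^{\frac{1}{s}},\qquad s=\frac{2+\delta_1+\delta_2}{1+\delta_1+\delta_2}.
\]
This is finite because $2s=2+2/(1+\delta_1+\delta_2)<2+(4+2\delta_2)/\delta_1$.

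With this repair your proof is correct and genuinely different from the paper's. The paper Hoeffding-decomposes $U_n^{(2)}(\psi^{(1)}_{wb})$ and kills the degenerate part with a second-moment bound. It then matches $U_{ni}^{(1)}(\psi^{(1)}_{wb})$ with the centred Donsker process $S_n^{(1)}$, and $U_{nj}^{(1)}(\psi^{(1)}_{wb})$ with the bounded $\mathbb{E}S_n^{(1)}$. Your route needs no centring, no empirical-process theory and no hypothesis-specific argument, and the bootstrap is immediate. It exploits the fact that the factor $1/(n-1)$ leaves room for a crude $O(a^{-q})$ first-moment bound.

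The price is that your sup-envelope throws away the kernel smoothing. In the paper's projection onto $\chi_i$, that smoothing attaches $d_W^{(2)}(W_i)$, a smoothed version of $f_W(W_i)$, to the weighted observation. This produces the weight $\Delta^{-1}f_X^2f_{Z\vert X}$, which is what Assumption \ref{ass.sample} is built to control. Your argument therefore rests entirely on the ratio condition. Together with Assumption \ref{ass.sample}, that condition forces $\mathbb{E}f_X^{-1}(X)<\infty$, i.e.\ a support of finite Lebesgue measure. So your route would not survive any weakening of the assumptions toward the unbounded designs the paper advertises.
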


\begin{proof}[Proof of Lemma \ref{lemma.Unpsi1}]
    Following an argument analogous to that used in establishing equation \eqref{lemmaproof.Sn Hajek 3-order}, and based on
    \begin{align*}
        \mathbb{E}\left[\sup_{w}\left\vert n^{1/2}U_n^{(2)}(\pi_2\psi^{(1)}_{wb})\right\vert^2\right]\leq \frac{1}{n^2b^{4q+2p}}O\left(\mathbb{E}\left(\psi^{(1)}\right)^2\right) = O\left(\frac{1}{n^2b^{2q+p}}\right),
    \end{align*}
    which can be proved in a similar manner, we obtain the decomposition
    \begin{align}\label{lemmaproof.Unpsi1 decomp}
        \sup_w\left\vert U_n^{(2)}(\psi^{(1)}_{wb}) - U_{ni}^{(1)}(\psi^{(1)}_{wb})-U_{nj}^{(1)}(\psi^{(1)}_{wb})+\mathbb{E}\left[U_n^{(2)}(\psi^{(1)}_{wb})\right]\right\vert= o_p\left(n^{-1/2}\right).
    \end{align}
    For the first term in decomposition, following the same reasoning as in the proof of \eqref{lemmaproof.Unphi1 uni} and applying Lemma \ref{lemma.condition dist}, we obtain that
    \begin{align}\label{lemmaproof.Unpsi1 pi eq1}
        \sup_{w}\left\vert U_{ni}^{(1)}(\psi^{(1)}_{wb}) - S_n^{(1)}(w)\right\vert=o_p\left(1\right)
    \end{align}
    holds under any hypothesis. According to Lemma \ref{lemma.G}, $S_n^{(1)}(w)$ can be viewed as an empirical process indexed by a VC-type class of functions whose VC characteristics do not depend on $\epsilon_if^2_X(X_i)f_W(W_i)\Delta^{-1}(X_i)$. The envelope of this class corresponds to the absolute value of the expression, and Assumption \ref{ass.sample} ensures that its second moment is finite. Hence, when \eqref{lemmaproof.Unpsi1 pi eq1} is taken into consideration,
    \begin{align}\label{lemmaproof.Unpsi1 pi}
        \sup_{w}\left\vert n^{-1/2}U_{ni}^{(1)}(\psi^{(1)}_{wb})\right\vert=o_p\left(1\right).
    \end{align}
    The proof for the second term is entirely analogous to that of \eqref{lemmaproof.Sn unj}. After obtaining 
    \begin{align*}
        \sup_{w}\left\vert U_{nj}^{(1)}(\psi^{(1)}_{wb}) - \mathbb{E}\left[S_n^{(1)}(w)\right]\right\vert=o_p\left(1\right),
    \end{align*}
    the boundedness of $\mathbb{E}[S_n^{(1)}(w)]$ immediately implies
    \begin{align}\label{lemmaproof.Unpsi1 pj}
        \sup_{w}\left\vert n^{-1/2}U_{nj}^{(1)}(\psi^{(1)}_{wb})\right\vert=o_p\left(1\right).
    \end{align}
    By combining \eqref{lemmaproof.Unpsi1 decomp}, \eqref{lemmaproof.Unpsi1 pi} and \eqref{lemmaproof.Unpsi1 pj}, \eqref{lemma.Unpsi1 null} is established.

    For the bootstrap version of $U_n^{(2)}(\psi^{(1)}_{wb})$, the zero-mean, unit-variance property of the multiplier and its independence from any sample $\chi_i$ allow us to obtain the decomposition
    \begin{align*}
        \sup_w\left\vert {U_n^{(2)}}^\ast(\psi^{(1)}_{wb}) - {U_{ni}^{(1)}}^\ast(\psi^{(1)}_{wb})\right\vert= o_p\left(n^{-1/2}\right).
    \end{align*}
    By analogous arguments to those used in the proof of Lemma \ref{lemma.Unphi1 boot}, we can then establish
    \begin{align*}
        \sup_{w}\left\vert n^{-1/2}{U_{ni}^{(1)}}^\ast(\psi^{(1)}_{wb})\right\vert=o_p\left(1\right),
    \end{align*}
    thereby completing the proof of \eqref{lemma.Unpsi1 boot}.
\end{proof}

\begin{lemma}\label{lemma.Unphi2}
    Suppose that Assumptions \ref{ass.sample}--\ref{ass.bandwidth} hold, 
    \begin{align}\label{lemma.Unphi2 null}
        \sup_{w}\left\vert U_n^{(3)}(\varphi^{(2)}_{wb})-T_n^{(2)}(w)\right\vert = o_p\left(n^{-1/2}\right)
    \end{align}
    regardless of the underlying hypothesis.
\end{lemma}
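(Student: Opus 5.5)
The plan is to follow the Hoeffding-decomposition route used for Lemma \ref{lemma.Unphi1}. Recall
\begin{align*}
\varphi^{(2)}_{wb}(\chi_i,\chi_j,\chi_k)=\epsilon_jK_{ij}K_{ik}L_{ik}1_x(X_i)\Delta^{-1}(X_i)G(z;X_i),\qquad \epsilon_j=Y_j-m(X_j).
\end{align*}

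\textbf{Step 1 (degenerate parts).} I would write $U_n^{(3)}(\varphi^{(2)}_{wb})$ through its symmetrized kernel and apply the bounds on the second- and third-order degenerate components. These are $O((n^2b^{2q+p})^{-1})$ and $O((nb^{2q+p})^{-1})$ in mean square sup-norm, exactly as in the proof of Lemma \ref{lemma.Unphi1}, and Assumption \ref{ass.bandwidth} makes them $o(1)$ after scaling by $n^{1/2}$. This reduces the problem to the three first-order projections $U_{ni}^{(1)}$, $U_{nj}^{(1)}$ and $U_{nk}^{(1)}$, plus the mean.

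\textbf{Step 2 (the mean and two projections vanish).} The key structural fact is $\mathbb{E}(\epsilon_j\vert X_j)=0$, by the definition of $m$. This holds under every hypothesis, which is why the lemma needs no case distinction.
\begin{itemize}[leftmargin=*]
\item Conditioning on $\chi_i$: index $j$ is integrated out, and $\mathbb{E}_i(\epsilon_jK_{ij})=\mathbb{E}_i[\mathbb{E}(\epsilon_j\vert X_j)K_{ij}]=0$. Hence $U_{ni}^{(1)}\equiv0$.
\item Conditioning on $\chi_k$: the same argument gives $U_{nk}^{(1)}\equiv0$.
\item The overall mean $\mathbb{E}[U_n^{(3)}(\varphi^{(2)}_{wb})]$ is zero for the same reason.
\end{itemize}
Only the $j$-projection therefore survives:
\begin{align*}
U_{nj}^{(1)}(\varphi^{(2)}_{wb})=\frac1n\sum_{j=1}^n\epsilon_j\,\mathbb{E}_j\!\left[K_{ij}\,d_W(W_i)\,1_x(X_i)\Delta^{-1}(X_i)G(z;X_i)\right].
\end{align*}

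\textbf{Step 3 (identify the limit).} I would replace the smoothed quantities by their population counterparts. First, $d_W(W_i)$ is replaced by $f_W(W_i)$, which is controlled via Lemma \ref{lemma.condition dist}. Then
\begin{align*}
\mathbb{E}[f_W(W_i)\vert X_i]=\int f_W(X_i,\bar z)f_{Z\vert X}(\bar z\vert X_i)\,d\bar z=\Delta(X_i)/f_X(X_i),
\end{align*}
so the conditional expectation equals
\begin{align*}
\int K_a\!\left(\frac{x'-X_j}{a}\right)1_x(x')G(z;x')\,dx' .
\end{align*}
This converges to $1_x(X_j)G(z;X_j)$ by the Lipschitz/bias arguments of Lemma $4$ in \cite{delgado2001significance} together with Lemma \ref{lemma.G}. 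Here the projection's $\Delta^{-1}$ exactly cancels the $\Delta$ that the conditional expectation produces, which is the point of the construction.

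The difference between $U_{nj}^{(1)}(\varphi^{(2)}_{wb})$ and $T_n^{(2)}(w)$ is then a centered empirical process of the form
\begin{align*}
\frac1n\sum_{j=1}^n\epsilon_j\,r_{w,n}(X_j),
\end{align*}
indexed by a VC-type class. Its envelope is $\vert\epsilon_j\vert$ times a remainder whose moments tend to zero. Proposition $4$ in \cite{delgado2001significance} gives the sup-norm bound, using Hölder's inequality with the moment conditions of Assumption \ref{ass.sample}, and this yields $o_p(n^{-1/2})$.

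\textbf{Main obstacle.} I expect the hardest step to be making this remainder vanish uniformly in $w$ without a compact support for $X$. The smoothing error is multiplied by $\Delta^{-1}(X)f_X^2(X)$-type weights that may be large in the tails. The first thing I would try is:
\begin{itemize}[leftmargin=*]
\item split the smoothing error into a $d_X/f_X-1$ part and a $d_W/f_W-1$ part;
\item bound each part in $L^{2+\delta}$ using $\sup_{(x,z)}\vert f_{Z\vert X}(z\vert x)/f_X(x)\vert<\infty$ from Assumption \ref{ass.structual};
\item pair it through Hölder's inequality with the finite $(2+(4+2\delta_2)/\delta_1)$-th moment of $\Delta^{-1}(X)f_X^2(X)$ and the $(2+\delta_1+\delta_2)$-th moment of $\epsilon$, exactly as in the display bounding the variance in the proof of Lemma \ref{lemma.Unphi1}.
\end{itemize}
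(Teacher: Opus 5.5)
Your proposal is correct and follows the paper's proof essentially step for step. The shared steps are: the Hoeffding reduction; the vanishing of the $i$- and $k$-projections and of the mean through $\mathbb{E}_i(\epsilon_jK_{ij})=0$; and, for the surviving $j$-projection, replacing $d_W$ by $f_W$ (the paper's $g_f$ term), cancelling $\Delta^{-1}$ via $\mathbb{E}[f_W(W_i)\vert X_i]=\Delta(X_i)/f_X(X_i)$, and controlling the smoothing remainder $\int K_a((u-X_j)/a)1_x(u)G(z;u)\,du-1_x(X_j)G(z;X_j)$ with H\"older's inequality and Assumption \ref{ass.sample}.

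One small correction to your envelope claim. Because $1_x(\cdot)$ is discontinuous, the supremum over $w$ of that remainder does not vanish (take $x=X_j$), so it is not an envelope with vanishing moments. The uniform $o_p(n^{-1/2})$ bound should instead come from $\sup_w\mathbb{E}[\epsilon^2r_{w,n}^2(X)]\to0$, combined with a fixed, non-vanishing square-integrable envelope and a VC-type maximal inequality or asymptotic equicontinuity; the paper passes over this point just as quickly.
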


\begin{proof}[Proof of Lemma \ref{lemma.Unphi2}]
    Owing to arguments similar to those used for \eqref{lemmaproof.Sn Hajek 3-order}, we first show that
    \begin{align*}
        \mathbb{E}\left[\sup_{w}\left\vert n^{1/2}U_n^{(3)}(\pi_3\varphi^{(2)}_{wb})\right\vert^2\right]=o(1) \quad \text{and} \quad \mathbb{E}\left[\sup_{w}\left\vert n^{1/2}U_n^{(2)}(\pi_2\varphi^{(2)}_{wb})\right\vert^2\right]=o(1),
    \end{align*}
    which in turn implies
    \begin{align}\label{lemmaproof.Unphi2 decomp}
        \sup_{w}\left\vert U_n^{(3)}(\varphi^{(2)}_{wb})-U_{ni}^{(1)}(\varphi^{(2)}_{wb})-U_{nj}^{(1)}(\varphi^{(2)}_{wb})-U_{nk}^{(1)}(\varphi^{(2)}_{wb})+2\mathbb{E}\left[U_n^{(3)}(\varphi^{(2)}_{wb})\right]\right\vert = o_p\left(n^{-1/2}\right).
    \end{align}
    For the first term $U_{ni}^{(1)}(\varphi^{(2)}_{wb})$, the independence of the sample observations allows $U_{ni}^{(1)}(\varphi^{(2)}_{wb})$ to be rewritten as
    \begin{align}\label{lemmaproof.Unphi2 pi}
        U_{ni}^{(1)}(\varphi^{(2)}_{wb}) = \frac{1}{n}\sum_{i=1}^n\mathbb{E}_i\left(\epsilon_jK_{ij}\right)d_W(W_i)1_x(X_i)\Delta^{-1}(X_i)G(z;X_i)=0,
    \end{align}
    where the last equality follows from the fact that $\mathbb{E}_i(\epsilon_jK_{ij})=0$, which holds by definition of $\epsilon_j$ since conditioning on $X_j$ yields $\mathbb{E}_i(\epsilon_jK_{ij})=\mathbb{E}_i[\mathbb{E}_{}(\epsilon_j\vert X_j)K_{ij}]=0$. A similar line of reasoning can be applied to the third term $U_{nk}^{(1)}(\varphi^{(2)}_{wb})$ in decomposition \eqref{lemmaproof.Unphi2 decomp}. By conditioning on $\chi_i$, 
    \begin{align}\label{lemmaproof.Unphi2 pk}
        U_{nk}^{(1)}(\varphi^{(2)}_{wb}) = \frac{1}{n}\sum_{k=1}^n\mathbb{E}_k\left[\mathbb{E}_{i}\left(\epsilon_jK_{ij}\right)K_{ik}L_{ik}1_x(X_i)\Delta^{-1}(X_i)G(z;X_i)\right] = 0.
    \end{align}
    From equations \eqref{lemmaproof.Unphi2 decomp}, \eqref{lemmaproof.Unphi2 pi} and \eqref{lemmaproof.Unphi2 pk}, it follows that
    \begin{align}\label{lemmaproof.Unphi2 decomp eqv}
        \sup_{w}\left\vert U_n^{(3)}(\varphi^{(2)}_{wb})-U_{nj}^{(1)}(\varphi^{(2)}_{wb})\right\vert = o_p\left(n^{-1/2}\right)
    \end{align}
    regardless of the underlying hypothesis. Given the independence of the sample observations, we further obtain
    \begin{align*}
        U_{nj}^{(1)}(\varphi^{(2)}_{wb}) = \frac{1}{n}\sum_{j=1}^n\epsilon_j\mathbb{E}_j\left[K_{ij}K_{ik}L_{ik}1_x(X_i)\Delta^{-1}(X_i)G(z;X_i)\right].
    \end{align*}
    By noting that
    \begin{align*}
        &\mathbb{E}_j\left[K_{ij}f_W(W_i)1_x(X_i)\Delta^{-1}(X_i)G(z;X_i)\right] = \mathbb{E}_j\left\{K_{ij}\mathbb{E}\left[f_W(W_i)\vert X_i\right]1_x(X_i)\Delta^{-1}(X_i)G(z;X_i)\right\}\\
        =&\mathbb{E}_j\left[\frac{K_{ij}}{f_X(X_i)}\Delta(X_i)1_x(X_i)\Delta^{-1}(X_i)G(z;X_i)\right] = \mathbb{E}_j\left[\frac{K_{ij}}{f_X(X_i)}1_x(X_i)G(z;X_i)\right],
    \end{align*}
    we obtain
    \begin{align}\label{lemmaproof.Unphi2 pj}
        U_{nj}^{(1)}(\varphi^{(2)}_{wb})-T_n^{(2)}(w)=&\frac{1}{n}\sum_{j=1}^n\epsilon_j\mathbb{E}_j\left\{K_{ij}\left[\mathbb{E}_i\left(K_{ik}L_{ik}\right)-f_W(W_i)\right]1_x(X_i)\Delta^{-1}(X_i)G(z;X_i)\right\}\notag\\
        &+\frac{1}{n}\sum_{j=1}^n\epsilon_j\left\{\mathbb{E}_j\left[\frac{K_{ij}}{f_X(X_i)}1_x(X_i)G(z;X_i)\right]-1_x(X_j)G(z;X_j)\right\}.
    \end{align}
    By denoting $g_f(X_i) = \mathbb{E}[K_{ik}L_{ik}/f_X(X_i)-f_{Z\vert X}(W_i)\vert X_i]$ and recalling the notation $p_w(X_i)$ in the proof of Lemma \ref{lemma.Unphi1}, the first term in the expansion of equation \eqref{lemmaproof.Unphi2 pj} can be equivalently rewritten as
    \begin{align*}
        \frac{1}{n}\sum_{j=1}^n\epsilon_j\mathbb{E}_j\left[K_{ij}g_f(X_i)p_w(X_i)\right].
    \end{align*}
    Owing to
    \begin{align*}
        &\mathbb{E}_j\left[K_{ij}g_f(X_i)p_w(X_i)\right]=\int K_a\left(\frac{u-X_j}{a}\right)g_f(u)f_X(u)1_x(u)\left[\Delta^{-1}(u)f^2_X(u)\right]\frac{G(z;u)}{f_X(u)}\,du
    \end{align*}
    and the Lipschitz continuity of $f_X(\cdot)$, $g_f(\cdot)$, $\Delta^{-1}(u)f_X^2(u)$ and  $G(z;u)/f(u)$ (as established in Assumption \ref{ass.structual}, Lemmas \ref{lemma.condition dist}, \ref{lemma.delta} and \ref{lemma.G}, respectively),
    \begin{align*}
        \mathbb{E}\left\vert \mathbb{E}_j\left[K_{ij}g_f(X_i)p_w(X_i)\right]-g_f(X_j)f_X(X_j)p_w(X_i)\right\vert^{2+\frac{4}{\delta_1+\delta_2}}=o\left(1\right)
    \end{align*}
    follows by the same line of reasoning as Proposition $3$ in \cite{delgado2001significance}. To analyze the second moment of the first term in decomposition \eqref{lemmaproof.Unphi2 pj}, we apply H\"older’s inequality, from which
    \begin{align*}
        &\mathbb{E}\left\vert \epsilon_j\mathbb{E}_j\left[K_{ij}g_f(X_i)p_w(X_i)\right]-\epsilon_jg_f(X_j)p_w(X_j)\right\vert^2\\
        \leq&\left(\mathbb{E}\left\vert \epsilon\right\vert^{2+\delta_1+\delta_2}\right)^{\frac{2}{2+\delta_1+\delta_2}}\left(\mathbb{E}\left\vert \mathbb{E}_j\left[K_{ij}g_f(X_i)p_w(X_i)\right]-g_f(X_j)f_X(X_j)p_w(X_i)\right\vert^{2+\frac{4}{\delta_1+\delta_2}}\right)^{\frac{\delta_1+\delta_2}{2+\delta_1+\delta_2}}.
    \end{align*}
    Meanwhile, by Lemma \ref{lemma.condition dist} we obtain $\mathbb{E}\vert \epsilon g_f(X)p_w(X)\vert^2=o(1)$, which establishes the negligibility of the second moment. Combining this with the earlier result for the bias term leads to
    \begin{align}\label{lemmaproof.Unphi2 pj first}
        \sup_w\left\vert\frac{1}{n}\sum_{j=1}^n\epsilon_j\mathbb{E}_j\left\{K_{ij}\left[\mathbb{E}_i\left(K_{ik}L_{ik}\right)-f_W(W_i)\right]1_x(X_i)\Delta^{-1}(X_i)G(z;X_i)\right\}\right\vert=o_p\left(n^{-1/2}\right).
    \end{align}
    Following the same reasoning, we first obtain
    \begin{align*}
        \mathbb{E}\left\vert \mathbb{E}_j\left[\frac{K_{ij}}{f_X(X_i)}1_x(X_i)G(z;X_i)\right]-1_x(X_j)G(z;X_j)\right\vert^{2+\frac{4}{\delta_1+\delta_2}}=o\left(1\right),
    \end{align*}
    and 
    \begin{align}\label{lemmaproof.Unphi2 pj second}
        \sup_w\left\vert\frac{1}{n}\sum_{j=1}^n\epsilon_j\left\{\mathbb{E}_j\left[\frac{K_{ij}}{f_X(X_i)}1_x(X_i)G(z;X_i)\right]-1_x(X_j)G(z;X_j)\right\}\right\vert=o_p\left(n^{-1/2}\right)
    \end{align}
    can be established by H\"older’s inequality. By combining \eqref{lemmaproof.Unphi2 pj}, \eqref{lemmaproof.Unphi2 pj first} and \eqref{lemmaproof.Unphi2 pj second}, we obtain
    \begin{align*}
        \sup_w\left\vert \epsilon_j\mathbb{E}_j\left[K_{ij}K_{ik}L_{ik}\frac{p_w(X_i)}{f_X(X_i)}-1_x(X_j)G(z;X_j)\right]\right\vert^2 = o_p\left(n^{-1/2}\right),
    \end{align*}
    which in turn implies
    \begin{align*}
        \sup_w\left\vert U_{nj}^{(1)}(\varphi^{(2)}_{wb})-\mathbb{E}\left[U_{nj}^{(1)}(\varphi^{(2)}_{wb})\right]-\left\{T_n^{(2)}(w)-\mathbb{E}\left[T_n^{(2)}(w)\right]\right\}\right\vert = o_p\left(n^{-1/2}\right).
    \end{align*}
    Noting that
    \begin{align*}
        \mathbb{E}\left[U_{nj}^{(1)}(\varphi^{(2)}_{wb})\right] =  \mathbb{E}\left[\mathbb{E}_i\left(\epsilon_jK_{ij}\right)\mathbb{E}_i\left(K_{ik}L_{ik}\right)1_x(X_i)\Delta^{-1}(X_i)G(z;X_i)\right]=0
    \end{align*}
    and 
    \begin{align*}
        \mathbb{E}\left[T_n^{(2)}(w)\right] = \mathbb{E}\left[\mathbb{E}\left(\epsilon_i\vert X_i\right)1_x(X_i)G(z;X_i)\right] = 0,
    \end{align*}
    \eqref{lemma.Unphi2 null} follows.
\end{proof}

\begin{lemma}\label{lemma.Unpsi2}
    Suppose that Assumptions \ref{ass.sample}--\ref{ass.bandwidth} hold,
    \begin{align}\label{lemma.Unpsi2 null}
        \sup_{w}\left\vert n^{-1/2}U_n^{(2)}(\psi^{(2)}_{wb})\right\vert = o_p\left(1\right)
    \end{align}
    regardless of the underlying hypothesis.
\end{lemma}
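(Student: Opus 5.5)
We prove Lemma \ref{lemma.Unpsi2} along the lines of Lemma \ref{lemma.Unpsi1}: apply the Hoeffding decomposition to the second-order $U$-process $U_n^{(2)}(\psi^{(2)}_{wb})$ and check that every term is $o_p(n^{1/2})$ uniformly in $w$. Since $U_n^{(2)}(\psi^{(2)}_{wb})$ enters $S_{n21}(w)$ only with the factor $1/(n-1)$, this is all that is needed.

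First, the degenerate part. As in the proof of Lemma \ref{lemma.Unpsi1}, the uniform second-moment bound for U-processes (Propositions 1 and 4 and Lemma 2 of \cite{delgado2001significance}) applies. The functions $\psi^{(2)}_{wb}$ form a VC-type class indexed by $w$, with envelope proportional to
\begin{align*}
\vert\epsilon_j\vert K^2_{ij}L_{ij}\Delta^{-1}(X_i)\sup_z\vert G(z;X_i)\vert.
\end{align*}
This gives
\begin{align*}
\mathbb{E}\sup_w\left\vert n^{1/2}U_n^{(2)}(\pi_2\psi^{(2)}_{wb})\right\vert^2=O\left((na^{2q}b^{p})^{-1}\right)=o(1)
\end{align*}
under Assumption \ref{ass.bandwidth}, using $\mathbb{E}(K_{ij}^2L_{ij})^2=O(a^{-3q}b^{-p})$ together with the moment conditions of Assumption \ref{ass.sample}. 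So $n^{-1/2}U_n^{(2)}(\pi_2\psi^{(2)}_{wb})$ is negligible.

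Next, the linear and mean parts.
\begin{itemize}
\item \textbf{Mean.} Conditioning on $X_j$ and using $\mathbb{E}(\epsilon_j\vert X_j)=0$ (regardless of hypothesis, since $\epsilon_j=Y_j-m(X_j)$), we get $\mathbb{E}[\psi^{(2)}_{wb}]=0$.
\item \textbf{Projection on $\chi_i$.} The same conditioning gives $\mathbb{E}_i[\psi^{(2)}_{wb}]=\mathbb{E}_i[\mathbb{E}(\epsilon_j\vert X_j)K^2_{ij}L_{ij}]\cdots=0$. This is exactly the argument used for \eqref{lemmaproof.Unphi2 pi}.
\item \textbf{Projection on $\chi_j$.} The only surviving term is
\begin{align*}
U_{nj}^{(1)}(\psi^{(2)}_{wb})=\frac1n\sum_{j=1}^n\epsilon_j\,h_{wn}(W_j),\qquad h_{wn}(W_j)=\mathbb{E}_j\left[K^2_{ij}L_{ij}1_x(X_i)\Delta^{-1}(X_i)G(z;X_i)\right].
\end{align*}
\end{itemize}

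The main step is to control $h_{wn}$. Change variables $X_i=X_j+au$ and $Z_i=Z_j+bv$, and use the bound $\vert G(z;x)\vert\le f_X(x)$ from Lemma \ref{lemma.G}. This gives
\begin{align*}
\sup_w\vert h_{wn}(W_j)\vert\lesssim a^{-q}\int K^2(u)L(v)\,\Delta^{-1}(X_j+au)f_X(X_j+au)f_W(X_j+au,Z_j+bv)\,du\,dv.
\end{align*}
Then Lemma \ref{lemma.delta} together with the integrability of $\Delta^{-1}f_X^2$ and $\sup f_{Z\vert X}/f_X<\infty$ (Assumptions \ref{ass.sample}--\ref{ass.structual}) gives $\mathbb{E}\sup_w\vert\epsilon_jh_{wn}(W_j)\vert^2=O(a^{-2q})$. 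The variance of the centered empirical process $n^{-1/2}\sum_j\epsilon_jh_{wn}(W_j)$ is therefore $O(a^{-2q})$ uniformly over the VC class. Hence
\begin{align*}
\sup_w\left\vert n^{-1/2}U_{nj}^{(1)}(\psi^{(2)}_{wb})\right\vert=O_p\left(n^{-1/2}\cdot n^{-1/2}a^{-q}\right)\cdot n^{1/2}=O_p\left((na^{2q})^{-1/2}\right)=o_p(1),
\end{align*}
since $na^{2q}b^{2p}\to\infty$.

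The delicate point is passing from this pointwise variance bound to a uniform bound with a $w$-dependent, bandwidth-dependent envelope. I would handle it with the maximal inequality of Proposition 4 in \cite{delgado2001significance}, applied to the class $\{\epsilon h_{wn}:w\}$. This class is VC-type because $w$ enters only through the indicators and the monotone $G(z;\cdot)$, and it has envelope $\vert\epsilon\vert\sup_w\vert h_{wn}\vert$, whose $L^2$ norm is $O(a^{-q})$. Combining the three pieces yields \eqref{lemma.Unpsi2 null}. None of the steps uses $H_0$, so the conclusion holds under any hypothesis.
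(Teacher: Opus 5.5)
Your overall route is the one the paper intends: a Hoeffding decomposition of $U_n^{(2)}(\psi^{(2)}_{wb})$, patterned on Lemmas \ref{lemma.Unpsi1} and \ref{lemma.Unphi2}. However, one step is wrong as justified, namely the claims $\mathbb{E}[\psi^{(2)}_{wb}]=0$ and $\mathbb{E}_i[\psi^{(2)}_{wb}]=0$ ``regardless of hypothesis''.

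The argument behind \eqref{lemmaproof.Unphi2 pi} works for $\varphi^{(2)}_{wb}$ because there the only factor involving $\chi_j$, other than $\epsilon_j$, is $K_{ij}$, which depends on $X_j$ alone. In the diagonal kernel $\psi^{(2)}_{wb}$, the $Z$-kernel is $L_{ij}=L_b((Z_i-Z_j)/b)$, which depends on $Z_j$. You may therefore only condition on $W_j$:
\begin{align*}
\mathbb{E}_i\left[\psi^{(2)}_{wb}\right]=\mathbb{E}_i\left\{\left[\mathbb{E}(Y_j\vert W_j)-m(X_j)\right]K^2_{ij}L_{ij}\right\}1_x(X_i)\Delta^{-1}(X_i)G(z;X_i).
\end{align*}
This vanishes under $H_0$. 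Under $H_{1n}$ it equals $n^{-1/2}\mathbb{E}_i[\Psi(W_j)K^2_{ij}L_{ij}]1_x(X_i)\Delta^{-1}(X_i)G(z;X_i)$, and under $H_1$ it is generally nonzero and of order $a^{-q}$. For the same reason $\mathbb{E}[\psi^{(2)}_{wb}]\neq0$ off the null, and $\frac{1}{n}\sum_j\epsilon_jh_{wn}(W_j)$ is not a centered process. Your closing remark that no step uses $H_0$ is therefore false as written.

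The conclusion survives because the lemma only asks for $o_p(n^{1/2})$. Each of these nonzero pieces is $O(a^{-q})$ uniformly in $w$, by the same envelope computation you carry out for $h_{wn}$. Moreover $n^{-1/2}a^{-q}=(na^{2q})^{-1/2}\to0$, since $na^{2q}\geq na^{2q}b^{2p}\to\infty$ by Assumption \ref{ass.bandwidth}.

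In fact the cleanest proof needs neither the Hoeffding decomposition nor a maximal inequality. Using $\vert G(z;x)\vert\leq f_X(x)$,
\begin{align*}
\sup_w\left\vert U_n^{(2)}(\psi^{(2)}_{wb})\right\vert\leq\frac{1}{n(n-1)}\sum_{i=1}^n\sum_{j\neq i}\vert\epsilon_j\vert K^2_{ij}\vert L_{ij}\vert\Delta^{-1}(X_i)f_X(X_i).
\end{align*}
The expectation of the right-hand side is $O(a^{-q})$, by exactly the change of variables and moment bounds you use for $\sup_w\vert h_{wn}\vert$. Markov's inequality then gives the lemma under every hypothesis, and the ``delicate point'' you flag never arises.

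There is also a harmless slip in the degenerate part. With $\mathbb{E}(K^2_{ij}L_{ij})^2=O(a^{-3q}b^{-p})$, you get $\mathbb{E}\sup_w\vert n^{1/2}U_n^{(2)}(\pi_2\psi^{(2)}_{wb})\vert^2=O((na^{3q}b^{p})^{-1})$, not $O((na^{2q}b^{p})^{-1})$, and this need not vanish under Assumption \ref{ass.bandwidth}. But you only need $n^{-1/2}U_n^{(2)}(\pi_2\psi^{(2)}_{wb})=o_p(1)$, whose second moment is $O((n^{3}a^{3q}b^{p})^{-1})\to0$.
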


\begin{proof}[Proof of Lemma \ref{lemma.Unpsi2}]
    Just as the proof of Lemma \ref{lemma.Unpsi1} follows that of Lemma \ref{lemma.Unphi1}, the proof of this lemma can be carried out analogously to Lemma \ref{lemma.Unphi2}, and the details are therefore omitted.
\end{proof}

\begin{lemma}\label{lemma.Unphi3}
    Suppose that Assumptions \ref{ass.sample}--\ref{ass.bandwidth} hold, 
    \begin{align}\label{lemma.Unphi3 null}
        \sup_{w}\left\vert U_n^{(3)}(\varphi^{(3)}_{wb})\right\vert = o_p\left(n^{-1/2}\right)
    \end{align}
    regardless of the underlying hypothesis.
\end{lemma}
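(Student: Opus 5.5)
We follow the template of the proofs of Lemmas \ref{lemma.Unphi1} and \ref{lemma.Unphi2}. The first step is to bound the degenerate parts of the Hoeffding decomposition of $U_n^{(3)}(\varphi^{(3)}_{wb})$. Exactly as for $\varphi^{(1)}_{wb}$ and $\varphi^{(2)}_{wb}$, we aim for
\begin{align*}
\mathbb{E}\left[\sup_{w}\left\vert n^{1/2}U_n^{(3)}(\pi_3\varphi^{(3)}_{wb})\right\vert^2\right]=o(1)\quad\text{and}\quad\mathbb{E}\left[\sup_{w}\left\vert n^{1/2}U_n^{(2)}(\pi_2\varphi^{(3)}_{wb})\right\vert^2\right]=o(1).
\end{align*}
These bounds come from the VC-type structure of $\{1_x(\cdot)G(z;\cdot):w\}$ (Lemma \ref{lemma.G}), the envelope controlled by $\Delta^{-1}(X)f_X^2(X)$ (Assumption \ref{ass.sample}), and the bandwidth rate in Assumption \ref{ass.bandwidth}. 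The term $m(X_i)-m(X_j)$ only improves the moment bound, because $m\in\mathscr{L}^2_\tau$. It then remains to control $U_{ni}^{(1)}$, $U_{nj}^{(1)}$, $U_{nk}^{(1)}$ and the mean $\mathbb{E}[U_n^{(3)}(\varphi^{(3)}_{wb})]$, each uniformly in $w$.

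The kernel $\varphi^{(3)}_{wb}$ involves only the covariates, not $Y$. The lemma therefore does not depend on the hypothesis, and no cancellation from $\mathbb{E}(\epsilon\vert X)=0$ is available. Every term has to be killed by smoothness, that is, by the fact that $m(X_i)-m(X_j)$ is small when $K_{ij}$ localizes. Integrating out $\chi_j$ and $\chi_k$ gives the projection on $\chi_i$:
\begin{align*}
U_{ni}^{(1)}(\varphi^{(3)}_{wb})=\frac1n\sum_{i=1}^n\mathbb{E}_i\big[(m(X_i)-m(X_j))K_{ij}\big]\,d_W(W_i)1_x(X_i)\Delta^{-1}(X_i)G(z;X_i).
\end{align*}
The factor $\mathbb{E}_i[(m(X_i)-m(X_j))K_{ij}]$ is the usual kernel bias. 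Using $m\in\mathscr{L}_\tau^2$, $f_X\in\mathscr{L}^\infty_{\lambda_1}$ and the kernel order in Assumption \ref{ass.structual}, the argument of Lemma 4 in \cite{delgado2001significance} bounds it by $O(a^{\min(\tau,\lambda_1)})$ times a function with finite second moment. We handle $d_W(W_i)/f_X(X_i)$ through Lemma \ref{lemma.condition dist} and apply H\"older's inequality with Assumption \ref{ass.sample}, as in \eqref{lemmaproof.unidecomp term3 variance}. This gives
\begin{align*}
\sup_w\left\vert U_{ni}^{(1)}(\varphi^{(3)}_{wb})\right\vert=O_p\big(a^{\min(\tau,\lambda_1)}\big)=o_p(n^{-1/2}),
\end{align*}
since $na^{2\min(\tau,\lambda_1)}\to0$.

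For $U_{nj}^{(1)}$ we condition on $\chi_j$ and change variables $X_i=X_j+au$, as in the treatment of \eqref{lemmaproof.unjdecomp}. The integrand becomes
\begin{align*}
\big(m(X_j+au)-m(X_j)\big)\,d_W\,\Delta^{-1}G\,f_X
\end{align*}
evaluated at $X_j+au$. Taylor-expanding to order $t-1$ and using the vanishing kernel moments, which hold because $k\in\mathscr{K}_{\max\{l_1,l_2\}+t-1}$, the same $O(a^{\min(\tau,\lambda_1)})$ bound follows. Here we also need the Lipschitz properties of $\Delta^{-1}f_X^2$ and $G/f_X$ from Lemmas \ref{lemma.delta} and \ref{lemma.G}. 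The $U_{nk}^{(1)}$ term is treated the same way after first integrating over $X_j$ given $X_i$. The extra $L_{ik}$ factor is controlled by Lemma \ref{lemma.bound double kernel} and contributes the $b^{\min(\tau,\lambda_2)}$ part of Assumption \ref{ass.bandwidth}. The mean term is the expectation of the $U_{ni}^{(1)}$ summand, so it is bounded identically.

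\textbf{Main obstacle.} The hard step is making the bias bound uniform in $w$ while $X$ has unbounded support. The remainder in the Taylor expansion carries the Lipschitz coefficient $d(\cdot)$ of $m$ multiplied by $\Delta^{-1}f_X^2$, and we need a finite moment of this product. We plan to obtain it by H\"older's inequality, pairing the second moment of $d(\cdot)$ (from $\mathscr{L}_\tau^2$) with the $2+(4+2\delta_2)/\delta_1$ moment of $\Delta^{-1}f_X^2$ in Assumption \ref{ass.sample}. Uniformity in $w$ then follows because $\vert 1_x(X)G(z;X)\vert\le f_X(X)$.
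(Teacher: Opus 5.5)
\textbf{The gap is in your $U_{nj}^{(1)}$ step; the rest of your argument is sound.}

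Your overall structure matches the paper's: the Hoeffding decomposition, negligible degenerate parts, then the three H\'ajek projections and the mean. Your treatment of $U_{ni}^{(1)}$, $U_{nk}^{(1)}$ and the mean works. In each of these the small factor $\beta_a(X_i):=\mathbb{E}_i[(m(X_i)-m(X_j))K_{ij}]$ is a kernel bias of smooth functions evaluated at $X_i$. It is obtained before the indicator $1_x(X_i)$ enters.

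\textbf{Why the $U_{nj}^{(1)}$ bound fails.} Conditioning on $\chi_j$ and setting $X_i=X_j+av$, the summand is
\begin{align*}
h_j(w)=\int\big(m(X_j+av)-m(X_j)\big)K(v)\,r_z(X_j+av)\,1(X_j+av\le x)\,dv,\quad r_z(u)=\mathbb{E}[d_W(W)\vert X=u]\Delta^{-1}(u)G(z;u)f_X(u).
\end{align*}
You dropped the indicator $1_x(X_i)$ when writing the integrand. It sits inside the $v$-integral and is discontinuous in $v$, so the Taylor-plus-vanishing-moments argument cannot be run uniformly in $x$.

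Concretely, take $q=1$ and $x=X_j$. The first-order term is then $a\,m'(X_j)r_z(X_j)\int_{-\infty}^0 vK(v)\,dv$. This is of exact order $a$ whenever $m'(X_j)r_z(X_j)\neq0$, since $\int_0^\infty vK(v)\,dv\neq0$ for the kernels considered. Hence $\sup_w\vert h_j(w)\vert\asymp a$. Bounding $\sup_w\vert U_{nj}^{(1)}\vert$ by $n^{-1}\sum_j\sup_w\vert h_j(w)\vert$ therefore gives only $O_p(a)$.

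That is never $o_p(n^{-1/2})$ here. Assumption \ref{ass.bandwidth} forces $na^{2q}b^{2p}\to\infty$, and hence $na^2\to\infty$. A finer count of the $X_j$ within $O(a)$ of the boundary of $\{u\le x\}$ only improves this to roughly $a^2$, which the assumption still does not make $o(n^{-1/2})$ in general. Your ``main obstacle'' paragraph uses $\vert 1_x(X)G(z;X)\vert\le f_X(X)$; that controls envelopes but not this non-smoothness.

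\textbf{The missing idea is to center before bounding.} This is what the paper's bias-plus-second-moment route does. Write
\begin{align*}
U_{nj}^{(1)}(\varphi^{(3)}_{wb})=\mathbb{E}\big[U_n^{(3)}(\varphi^{(3)}_{wb})\big]+\frac1n\sum_{j=1}^n\big\{h_j(w)-\mathbb{E}h_j(w)\big\}.
\end{align*}
\begin{enumerate}
\item By Fubini, the mean equals $\mathbb{E}[\beta_a(X_i)d_W(W_i)1_x(X_i)\Delta^{-1}(X_i)G(z;X_i)]$. Here the smoothing acts on $X_j$ given $X_i$, with no indicator, so the mean is $O(a^{\min(\tau,\lambda_1)})=o(n^{-1/2})$ uniformly in $w$. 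The $O(a)$ boundary contributions cancel in expectation.
\item The centered part is an empirical process indexed by a VC-type class. Its envelope $\sup_w\vert h_j(w)\vert$ has $L^2$ norm $O(a)=o(1)$. A maximal inequality (Propositions 4--5 of \cite{delgado2001significance}, as used for \eqref{lemmaproof.Unphi1 uni}) then gives $o_p(n^{-1/2})$.
\end{enumerate}
With this change to the $U_{nj}^{(1)}$ step, the rest of your argument goes through.
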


\begin{proof}[Proof of Lemma \ref{lemma.Unphi3}]
    Using the Lipschitz continuity of $m(\cdot)$ and the moment conditions stated in Assumption \ref{ass.structual}, and by applying arguments similar to those used in the proof of equation \eqref{lemmaproof.Sn Hajek 3-order}, we can first establish
    \begin{align*}
        \mathbb{E}\left[\sup_{w}\left\vert n^{1/2}U_n^{(3)}(\pi_3\varphi^{(3)}_{wb})\right\vert^2\right]=o(1) \quad \text{and} \quad \mathbb{E}\left[\sup_{w}\left\vert n^{1/2}U_n^{(2)}(\pi_2\varphi^{(3)}_{wb})\right\vert^2\right]=o(1),
    \end{align*}
    and subsequently obtain
    \begin{align}\label{lemmaproof.Unphi3 decomp}
        \sup_{w}\left\vert U_n^{(3)}(\varphi^{(3)}_{wb})-U_{ni}^{(1)}(\varphi^{(3)}_{wb})-U_{nj}^{(1)}(\varphi^{(3)}_{wb})-U_{nk}^{(1)}(\varphi^{(3)}_{wb})+2\mathbb{E}\left[U_n^{(3)}(\varphi^{(3)}_{wb})\right]\right\vert = o_p\left(n^{-1/2}\right).
    \end{align}
    Regarding the first term $U_{ni}^{(1)}(\varphi^{(3)}_{wb})$, the independence across observations ensures that it can be equivalently expressed as
    \begin{align*}
        U_{ni}^{(1)}(\varphi^{(3)}_{wb}) = \frac{1}{n}\sum_{i=1}^{n}\mathbb{E}_i\left[\left(m(X_i)-m(X_j)\right)K_{ij}\right]d_W(W_i)1_x(X_i)\Delta^{-1}(X_i)G(z;X_i).
    \end{align*}
    By invoking Lemma $5$ from \cite{delgado2001significance} and employing arguments parallel to those used in establishing \eqref{lemmaproof.Unphi1 uni}, the negligibility of the bias term and the second-order moment can be verified with Assumptions \ref{ass.sample}--\ref{ass.bandwidth}, which leads to
    \begin{align}\label{lemmaproof.Unphi3 pi}
        \sup_w\left\vert U_{ni}^{(1)}(\varphi^{(3)}_{wb})\right\vert = o_p\left(n^{-1/2}\right).
    \end{align}
    For the second term $U_{nj}^{(1)}(\varphi^{(3)}_{wb})$ in decomposition \eqref{lemmaproof.Unphi3 decomp}, by conditioning on $\chi_i$ and using the independence across samples once again, we obtain
    \begin{align*}
        U_{nj}^{(1)}(\varphi^{(3)}_{wb}) = \frac{1}{n}\sum_{j=1}^{n}\mathbb{E}_j\left[\left(m(X_i)-m(X_j)\right)K_{ij}d_W(W_i)1_x(X_i)\Delta^{-1}(X_i)G(z;X_i)\right].
    \end{align*}
    Following the definition of $g_f(\cdot)$ introduced in Lemma \ref{lemma.Unphi2} and recalling that of $p_w(\cdot)$, the expression can be further expressed as
    \begin{align*}
        U_{nj}^{(1)}(\varphi^{(3)}_{wb}) = &\frac{1}{n}\sum_{j=1}^{n}\mathbb{E}_j\left[\left(m(X_i)-m(X_j)\right)K_{ij}g_f(X_i)p_w(X_i)\right]\\
        &+\frac{1}{n}\sum_{j=1}^{n}\mathbb{E}_j\left[\left(m(X_i)-m(X_j)\right)K_{ij}\frac{\Delta(X_i)}{f^2(X_i)}p_w(X_i)\right].
    \end{align*}
    Noting the result on $g_f(\cdot)$ established in Lemma \ref{lemma.Unphi2} and the restriction on $f_{Z\vert X}(\cdot)$ imposed by Assumption \ref{ass.structual}, we can similarly show
    \begin{align}\label{lemmaproof.Unphi3 pj}
        \sup_w\left\vert U_{nj}^{(1)}(\varphi^{(3)}_{wb})\right\vert = o_p\left(n^{-1/2}\right).
    \end{align}
    The third term $U_{nk}^{(1)}(\varphi^{(3)}_{wb})$ can likewise be reformulated as
    \begin{align*}
        &\frac{1}{n}\sum_{k=1}^{n}\mathbb{E}_k\left\{\mathbb{E}_i\left[\left(m(X_i)-m(X_j)\right)K_{ij}\right]\frac{K_{ik}}{f_X(X_i)}\mathbb{E}_k\left(L_{ik}\vert X_i\right)p_w(X_)\right\}\\
        =&\frac{1}{n}\sum_{k=1}^{n}\mathbb{E}_k\left\{\mathbb{E}_i\left[\left(m(X_i)-m(X_j)\right)K_{ij}\right]\frac{K_{ik}}{f_X(X_i)}\left[\mathbb{E}_k\left(L_{ik}\vert X_i\right)-f_{Z\vert X}(Z_k\vert X_i)\right]p_w(X_i)\right\}\\
        &+\frac{1}{n}\sum_{k=1}^{n}\mathbb{E}_k\left\{\mathbb{E}_i\left[\left(m(X_i)-m(X_j)\right)K_{ij}\right]\frac{K_{ik}}{f_X(X_i)}f_{Z\vert X}(Z_k\vert X_i)p_w(X_i)\right\},
    \end{align*}
    where the second moment of the first term can be verified to be negligible by arguments analogous to those employed in the proof of \eqref{lemmaproof.unkdecomp eqv}. This result follows from Lemma $5$ in \cite{delgado2001significance} together with the moment condition with regard as $\mathbb{E}_k(L_{ik}\vert X_i)-f_{Z\vert X}(Z_k\vert X_i)$, which can be derived using the conclusion obtained under the Lipschitz continuity of $f_{Z\vert X}(\cdot)$ in Lemma \ref{lemma.condition dist}. The degeneracy of the mean follows by an argument similar to that used in establishing the result for $U_{ni}^{(1)}(\varphi^{(3)}_{wb})$. Furthermore, the second term can be addressed analogously to the first, leading to
    \begin{align}\label{lemmaproof.Unphi3 pk}
        \sup_w\left\vert U_{nk}^{(1)}(\varphi^{(3)}_{wb})\right\vert = o_p\left(n^{-1/2}\right).
    \end{align}
    Combining \eqref{lemmaproof.Unphi3 decomp}, \eqref{lemmaproof.Unphi3 pi}, \eqref{lemmaproof.Unphi3 pj} and \eqref{lemmaproof.Unphi3 pk}, the proof of \eqref{lemma.Unphi3 null} completes.
\end{proof}

\begin{lemma}\label{lemma.Unpsi3}
    Suppose that Assumptions \ref{ass.sample}--\ref{ass.bandwidth} hold,
    \begin{align}\label{lemma.Unpsi3 null}
        \sup_{w}\left\vert n^{-1/2}U_n^{(2)}(\psi^{(3)}_{wb})\right\vert = o_p\left(1\right)
    \end{align}
    regardless of the underlying hypothesis.
\end{lemma}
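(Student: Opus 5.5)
We follow the template of Lemma \ref{lemma.Unpsi1}, now applied to the kernel $\psi^{(3)}_{wb}(\chi_i,\chi_j)=\left(m(X_i)-m(X_j)\right)K^2_{ij}L_{ij}1_x(X_i)\Delta^{-1}(X_i)G(z;X_i)$. The task is easier here because the factor $n^{-1/2}$ already gives slack: we only need $\sup_w\vert U_n^{(2)}(\psi^{(3)}_{wb})\vert=o_p(n^{1/2})$.

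\textbf{Step 1 (Hoeffding decomposition).} We write
\begin{align*}
U_n^{(2)}(\psi^{(3)}_{wb})=U_{ni}^{(1)}(\psi^{(3)}_{wb})+U_{nj}^{(1)}(\psi^{(3)}_{wb})-\mathbb{E}\left[U_n^{(2)}(\psi^{(3)}_{wb})\right]+U_n^{(2)}(\pi_2\psi^{(3)}_{wb}).
\end{align*}
The degenerate part is controlled exactly as in the display preceding \eqref{lemmaproof.Unpsi1 decomp}, via Proposition $1$ and Lemma $2$ of \cite{delgado2001significance}. The kernel $K^2_{ij}$ carries the scale $a^{-2q}$ and $L_{ij}$ carries $b^{-p}$. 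Since $m\in\mathscr{L}^2_\tau$ and the VC-type structure of $1_x(\cdot)G(z;\cdot)$ (Lemma \ref{lemma.G}) is available, the second moment of the envelope is $O(a^{-2q}b^{-p})$. This gives
\begin{align*}
\mathbb{E}\sup_w\left\vert U_n^{(2)}(\pi_2\psi^{(3)}_{wb})\right\vert^2=O\left((n^2a^{2q}b^{p})^{-1}\right)=o(n^{-1})
\end{align*}
by Assumption \ref{ass.bandwidth}.

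\textbf{Step 2 (linear terms).} Conditioning on $\chi_i$ gives
\begin{align*}
U_{ni}^{(1)}(\psi^{(3)}_{wb})=\frac1n\sum_i \mathbb{E}_i\left[\left(m(X_i)-m(X_j)\right)K^2_{ij}L_{ij}\right]1_x(X_i)\Delta^{-1}(X_i)G(z;X_i).
\end{align*}
By the change of variables $u=(X_i-X_j)/a$ and $v=(Z_i-Z_j)/b$, the conditional expectation is bounded by $a^{-q}\cdot a\,d_m(X_i)$ times a smoothed version of $f_W$. Using $\sup\vert f_{Z\vert X}/f_X\vert<\infty$ from Assumption \ref{ass.structual} to absorb $f_W$ into $f_X^2$, the envelope becomes $a^{1-q}d_m(X)f_X^2(X)\Delta^{-1}(X)$ up to constants. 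Hölder's inequality with Assumption \ref{ass.sample} gives a finite second moment of order $a^{2(1-q)}$. Hence $\sup_w\vert U_{ni}^{(1)}-\mathbb{E}U_{ni}^{(1)}\vert=O_p(a^{1-q}n^{-1/2})$, which is $o_p(n^{1/2})$ because $na^{2q}\to\infty$. The term $U_{nj}^{(1)}$ is handled in the same way, by integrating over $\chi_i$ as in the treatment of \eqref{lemmaproof.unjdecomp}.

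\textbf{Step 3 (bias).} It remains to show $\sup_w\vert\mathbb{E}U_n^{(2)}(\psi^{(3)}_{wb})\vert=o(n^{1/2})$. A crude bound gives $O(a^{1-q})$, and we instead need $a^{1-q}n^{-1/2}\to0$. This follows from $(na^{2q}b^{2p})^{-1}\to0$ together with $a\to0$, since $a^{2-2q}/n\le (na^{2q})^{-1}$. Combining Steps 1--3 yields \eqref{lemma.Unpsi3 null}.

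\textbf{Main obstacle.} The delicate part is Step 2: obtaining a uniform-in-$w$ envelope with finite second moment when $X$ has unbounded support. The weight $\Delta^{-1}(X)$ is unbounded, so it must be paired with $f_X^2$. I would handle this by writing $\Delta^{-1}G=(\Delta^{-1}f_X^2)(G/f_X^2)$ and bounding $G/f_X\le 1$ as in Lemma \ref{lemma.G}. The leftover factor $1/f_X$ is then absorbed by the density of $X_i$ in the integral, following the device used for \eqref{lemmaproof.unjdecomp term1 var}.
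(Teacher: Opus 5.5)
Your proposal is correct and follows essentially the paper's route. The paper proves this lemma only by deferring to the template of Lemmas \ref{lemma.Unpsi1} and \ref{lemma.Unpsi2}: a Hoeffding decomposition, a negligible degenerate part, and linear terms and bias controlled through envelopes, with the factor $n^{-1/2}$ providing the slack. You carry out the same steps with explicit rates.

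One small correction. Because $\psi^{(3)}_{wb}$ contains $K^2_{ij}$, its envelope has second moment of order $a^{-3q}b^{-p}$ (before any gain from the smoothness of $m$), not $a^{-2q}b^{-p}$. So the $o(n^{-1})$ claimed in Step 1 need not follow from Assumption \ref{ass.bandwidth}. This is harmless, since only $o(n)$ is needed there.

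More simply, the first-moment envelope bound behind your Step 3, $\mathbb{E}\sup_w\vert\psi^{(3)}_{wb}(\chi_1,\chi_2)\vert=O(a^{1-q})$, already gives the lemma. Combine it with $\sup_w\vert U_n^{(2)}(\psi^{(3)}_{wb})\vert\le U_n^{(2)}(\sup_w\vert\psi^{(3)}_{wb}\vert)$ and Markov's inequality, using $na^{2q}\to\infty$. This route needs neither the decomposition nor any second-moment (H\"older) argument.
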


\begin{proof}[Proof of Lemma \ref{lemma.Unpsi3}]
    The argument presented in the proof of Lemma \ref{lemma.Unpsi2} can be applied here in an analogous manner.
\end{proof}

\begin{lemma}\label{lemma.G}
    The class of functions $\{G(z;X)/f_X(X):z\in\mathbb{R}^p\}$ indexed by $z$ is VC-type with common square $\mathbb{P}$-integrable envelope and VC-characteristics independent of $X$.
\end{lemma}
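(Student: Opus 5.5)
The key observation is that $G(z;X)/f_X(X)$ is simply the conditional distribution function of $Z$ given $X$. Since $f_W(x,\bar z)=f_X(x)f_{Z\vert X}(\bar z\vert x)$,
\begin{align*}
    \frac{G(z;X)}{f_X(X)}=\int_{-\infty}^{z}f_{Z\vert X}(\bar z\vert X)\,d\bar z=F_{Z\vert X}(z\vert X)=\mathbb{E}\left[1_z(Z)\vert X\right].
\end{align*}
The plan is therefore to work with the class $\mathcal{G}=\{x\mapsto F_{Z\vert X}(z\vert x):z\in\mathbb{R}^p\}$. First, every member takes values in $[0,1]$, so the constant function $1$ is a common envelope. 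It is trivially square $\mathbb{P}$-integrable and does not depend on $z$ or on $X$.

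For the VC-type property, fix $x$ and view $z\mapsto F_{Z\vert X}(z\vert x)$ as a multivariate distribution function on $\mathbb{R}^p$. It is coordinatewise nondecreasing, bounded in $[0,1]$, and continuous, because $f_{Z\vert X}(\cdot\vert x)$ exists by Assumption \ref{ass.structual}. The standard route is to bound the uniform entropy of $\mathcal{G}$ by that of the indicator class $\{1_z(\cdot):z\in\mathbb{R}^p\}$, a VC class of lower orthants with VC index $p+1$. The argument is as follows. For any $z,z'$ and any probability measure $Q$ on the $X$-space, Jensen's inequality gives
\begin{align*}
    \left\Vert F_{Z\vert X}(z\vert\cdot)-F_{Z\vert X}(z'\vert\cdot)\right\Vert_{L_2(Q)}^2\leq \int\mathbb{E}\left[\left\vert 1_z(Z)-1_{z'}(Z)\right\vert^2\big\vert X=x\right]Q(dx)=\left\Vert 1_z-1_{z'}\right\Vert^2_{L_2(\tilde Q)},
\end{align*}
where $\tilde Q(dx,d\bar z)=Q(dx)f_{Z\vert X}(\bar z\vert x)d\bar z$ is a probability measure on $\mathbb{R}^{q+p}$. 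An $\varepsilon$-net for the orthant indicators in $L_2(\tilde Q)$ therefore induces an $\varepsilon$-net for $\mathcal{G}$ in $L_2(Q)$. Consequently $\sup_Q N(\varepsilon,\mathcal{G},L_2(Q))\leq \sup_{\tilde Q}N(\varepsilon,\{1_z\},L_2(\tilde Q))\leq A\varepsilon^{-v}$, with constants $A,v$ depending only on $p$. This is precisely the VC-type (polynomial uniform covering number) property, and its characteristics are independent of $X$.

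The main obstacle is bookkeeping rather than substance. We must make sure that the notion of ``VC-type with VC-characteristics independent of $X$'' used by \cite{delgado2001significance} (their Proposition 4) is satisfied in the uniform-entropy form above, rather than requiring $\mathcal{G}$ to be a genuine VC-subgraph class. If the subgraph form is needed, we would instead use the fact that, for each fixed $x$, the map $z\mapsto F_{Z\vert X}(z\vert x)$ is monotone in each coordinate. A bounded class of functions that are coordinatewise monotone in the index can be written as a pointwise limit of convex combinations of the indicator class, and the symmetric convex hull bounds (van der Vaart and Wellner, Theorem 2.6.9) give only polynomial-type entropy when $p=1$. For $p>1$ we would therefore fall back on the averaging argument above, which already delivers the bound needed in the proofs of Theorem \ref{thm.null} and Lemma \ref{lemma.Unpsi1}.

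Finally, we multiply by the fixed functions $\epsilon f_X^2(X)\Delta^{-1}(X)1_x(X)$. Multiplying by a single fixed function and intersecting with the VC class $\{1_x\}$ preserves the VC-type property. This yields the envelope $\vert\epsilon f_X^2(X)\Delta^{-1}(X)\vert$ used in the proof of Theorem \ref{thm.null}, whose square integrability follows from Assumption \ref{ass.sample} via Hölder's inequality.
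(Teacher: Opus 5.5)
Your proof is correct, but it takes a different route from the paper's. Like you, the paper identifies $G(z;x)/f_X(x)$ with $F_{Z\vert X}(z\vert x)$ and uses the envelope $1$. It then tries to prove the stronger VC-subgraph property directly. Its key claim is that each superlevel set $\{z: t_i<F_{Z\vert X}(z\vert x_i)\}$ is an upper orthant, so the labelings of the subgraph class coincide with orthant labelings, after which it invokes Theorem 2.6.7 of van der Vaart and Wellner.

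You instead use that $h\mapsto \mathbb{E}_{\tilde Q}[h\mid x]$ is a linear $L_2$-contraction from $L_2(\tilde Q)$ to $L_2(Q)$. The uniform covering numbers of $\{F_{Z\vert X}(z\vert\cdot)\}$ are therefore dominated by those of the lower-orthant indicators.

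\begin{itemize}
\item The paper's route, if valid, would give genuine VC-subgraph classes. However, its key step holds only for $p=1$. For $p\ge 2$ a superlevel set of a multivariate distribution function is an upper set, not in general an orthant. For $F(z)=\Phi(z_1)\Phi(z_2)$, the set $\{F>1/4\}$ contains $(-0.5,10)$ and $(10,-0.5)$ but not $(-0.5,-0.5)$. Upper sets in $\mathbb{R}^2$ also have infinite VC dimension, since they shatter every finite antichain.
\item Your argument delivers only the polynomial uniform-entropy form of ``VC-type''. That is what the lemma asserts and what the Donsker and maximal-inequality steps use. It holds for every $p$, needs neither continuity nor coordinatewise monotonicity in $z$, and applies verbatim to $\mathbb{E}[h(Z)\mid X]$ for any uniformly bounded VC class $\{h\}$.
\end{itemize}

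Your fallback remark about the convex-hull bound is muddled. Theorem 2.6.9 controls $\log N$ polynomially, which is not VC-type for any $p$. For $p=1$, monotonicity in a scalar index already gives VC-subgraph directly. Nothing in your main argument depends on that remark, so you can simply drop it.
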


\begin{proof}[Proof of Lemma \ref{lemma.G}]
    Consider the measurable space $(\mathbb{R}^q\times\mathbb{R}^p,\mathcal{A}\times\mathcal{B})$ on which $(X,Z)$ is defined. By the existence of a regular conditional distribution $P_{Z\vert X=x}$, we can write $G(z;x)/f_X(x) = F_{Z\vert X=x}(z)$, where, for each fixed $x$ the function $z\mapsto F_{Z \vert X = x}(z)$ is a right-continuous, non-decreasing in every coordinate. The class $\mathcal{M}=\{G(z;X)/f_X(X):z\in\mathbb{R}^p\}$ thus consists of measurable functions indexed by the $p$-dimensional parameter $z$. 
    
    To establish the VC-subgraph property, consider the subgraph class associated with $\mathcal{M}$, 
    \begin{align*}
        \text{Subgraph}(\mathcal{M}) = \left\{(x,t)\in\mathbb{R}^q\times\mathbb{R}:t<F_{Z \vert X = x}(z),z\in\mathbb{R}^p\right\}.
    \end{align*}
    Fix any finite collection of points $(x_i,t_i)_{i=1}^n\subset\mathbb{R}^q\times\mathbb{R}$. For each $z\in\mathbb{R}^p$, define the labeling vector $l(z)=(l_1(z),\cdots,l_n(z))$, where $l_i(z)=1(t_i<F_{Z \vert X = x_i}(z))$. Since $z\mapsto F_{Z \vert X = x}(z)$ is non-decreasing in every coordinate of $z$ for each $i$, the set $\{z\in\mathbb{R}^p:t_i<F_{Z \vert X = x_i}(z)\}$ is an upper orthant of the form $(c_i,+\infty)^p$, where $c_i=(c_{i1},\cdots, c_{ip})$ denotes the coordinate-wise threshold vector defined by $c_i=\inf\{z\in\mathbb{R}^p:t_i<F_{Z \vert X = x_i}(z)\}$ interpreted in the sense of coordinatewise order. Consequently, the labeling pattern $l(z)$ over the index $\{1,\cdots,n\}$ is determined by whether each point $c_i$ lies below or above the current $z$ in the coordinatewise partial order. Hence, the label configurations induced by $\mathcal{M}$ are identical to those induced by the collection of all upper orthants $\{(c,+\infty)^p:c\in\mathbb{R}^p\}$, which is well known to have VC dimension $p$ (see, e.g., \cite{van1996weak}, Example $2.6.1$). Indeed, one can realize all $2^p$ labelings on $p$ coordinate points by suitable orthants, but no more. Therefore, the subgraph class Subgraph($\mathcal{M}$) has VC index equal to $p$, and $\mathcal{M}$ is a VC-subgraph class with index $p$. Because the envelope of $\mathcal{M}$ is bounded by one, it follows from the general entropy bound for VC-subgraph classes (\cite{van1996weak}, Theorem $2.6.7$) that there exist constants $A,v>0$ depending only on $p$ such that, for every probability measure $\mathbb{Q}$ and all $0<\epsilon\leq 1$, 
    \begin{align*}
        N(\epsilon\Vert F\Vert_{Q,2},\mathcal{M},L_2(\mathbb{Q}))\leq (A/\epsilon)^v,
    \end{align*}
    where $F\equiv1$ is the envelope function. Hence, $\mathcal{M}$ is a VC-type class whose VC-characteristics are independent of $X$.
\end{proof}

\begin{lemma}\label{lemma.delta}
    Suppose that Assumption \ref{ass.sample}--\ref{ass.bandwidth} holds,
    \begin{align*}
        \sup_{x}\left\vert \frac{1}{f_X^2(x)}\left[\hat{\Delta}_n(x)-\Delta(x)\right]\right\vert = O_p\left(\left(\frac{\log n}{na^qb^p}\right)^{\frac{1}{2}}+a^{\lambda_1}+b^{\lambda_2}\right).
    \end{align*}
\end{lemma}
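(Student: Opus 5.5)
We prove the uniform rate by splitting $\hat{\Delta}_n(x)-\Delta(x)$ into a stochastic part and a bias part, and we normalize each by $f_X^2(x)$. Write $\hat f_W(x,\bar z)-f_W(x,\bar z)=\xi_n(x,\bar z)+\beta_n(x,\bar z)$, where $\xi_n=\hat f_W-\mathbb{E}\hat f_W$ and $\beta_n=\mathbb{E}\hat f_W-f_W$. Then
\begin{align*}
\hat{\Delta}_n(x)-\Delta(x)=2\int f_W(x,\bar z)\left[\xi_n+\beta_n\right](x,\bar z)\,d\bar z+\int\left[\xi_n+\beta_n\right]^2(x,\bar z)\,d\bar z .
\end{align*}
This is the second-order expansion mentioned in the discussion of Assumption \ref{ass.bandwidth}. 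The linear term in $\xi_n$ gives the $(\log n/(na^qb^p))^{1/2}$ rate. The linear term in $\beta_n$ gives $a^{\lambda_1}+b^{\lambda_2}$. The quadratic terms are of smaller order because Assumption \ref{ass.bandwidth} forces the stochastic rate to vanish.

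\textbf{Bias term.} We write $f_W(x,\bar z)=f_X(x)f_{Z\vert X}(\bar z\vert x)$. We Taylor-expand in the classes $\mathscr{L}^\infty_{\lambda_1}$ and $\mathscr{L}^\infty_{\lambda_2}$, and use the moment conditions of $k\in\mathscr{K}_{\max\{l_1,l_2\}+t-1}$ in the usual way, as in Lemma $4$ of \cite{delgado2001significance}. This gives the pointwise bound $\vert\beta_n(x,\bar z)\vert\le C\left[a^{\lambda_1}f_{Z\vert X}(\bar z\vert x)+f_X(x)b^{\lambda_2}+\cdots\right]$, with kernel-smoothed Lipschitz coefficients appearing in the remainder. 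We then bound $\int f_W\vert\beta_n\vert\,d\bar z/f_X^2(x)$. This quantity is controlled by $\int f_{Z\vert X}^2\,d\bar z$ and $\int f_{Z\vert X}\,d\bar z=1$, multiplied by $\sup_{(x,z)}\vert f_{Z\vert X}(z\vert x)/f_X(x)\vert$. That supremum is exactly why Assumption \ref{ass.structual} includes it: it removes the $1/f_X$ that would otherwise blow up in the tails. Since $\int f_{Z\vert X}^2\,d\bar z\le\sup f_{Z\vert X}\cdot 1$ is bounded, the result is $O(a^{\lambda_1}+b^{\lambda_2})$ uniformly in $x$.

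\textbf{Stochastic term.} Integrating over $\bar z$ is easy, because $\int L((\bar z-Z_j)/b)\,b^{-p}f_W(x,\bar z)\,d\bar z=:h_b(x,Z_j)$ is bounded by $C f_X(x)\sup f_{Z\vert X}$. Hence
\begin{align*}
\int f_W\xi_n\,d\bar z=\frac{1}{n}\sum_j\left\{a^{-q}K\left(\frac{x-X_j}{a}\right)h_b(x,Z_j)-\mathbb{E}[\cdot]\right\}
\end{align*}
is a kernel average indexed by $x$. After dividing by $f_X^2(x)$, the summands are bounded by $C a^{-q}\vert K\vert/f_X(x)$. We would handle the $1/f_X$ factor with the same ratio condition, by rewriting $h_b/f_X^2$ in terms of $f_{Z\vert X}/f_X$. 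Uniformity in $x$ then follows from a standard VC-type/Bernstein maximal inequality, for example Talagrand's inequality or Einmahl–Mason, applied to the class $\{K((x-\cdot)/a):x\}$, which is of bounded variation and hence VC. This yields $O_p((\log n/(na^q))^{1/2})$, which is at most the claimed rate.

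The quadratic term $\int\xi_n^2\,d\bar z$ is where the factor $b^{-p}$ enters. Its expectation is of order $f_X(x)/(na^qb^p)$ times $\int K^2L^2$. Its uniform deviation again follows from a maximal inequality, now applied to a degenerate U-statistic in $x$. This gives $O_p(\log n/(na^qb^p))$, which is $o$ of the square-root rate, and the rate we carry is the conservative $(\log n/(na^qb^p))^{1/2}$. The cross and bias-squared terms are bounded by Cauchy–Schwarz.

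\textbf{Main obstacle.} The hardest step is achieving uniformity over unbounded $x$ in the normalized stochastic term, since $f_X^{-2}(x)$ is unbounded. I would first try to absorb one factor $f_X^{-1}$ through the identity $f_W=f_Xf_{Z\vert X}$ together with the ratio condition $\sup\vert f_{Z\vert X}/f_X\vert<\infty$. If a residual $f_X^{-1}$ remains in the variance term, I would split $\mathbb{R}^q$ into $\{f_X\ge\eta_n\}$ and its complement. On the first set, the uniform Bernstein bound applies with variance of order $f_X/(na^qb^p)$ divided by $f_X^2$. On the complement, the kernel average is itself small because $\mathbb{E}a^{-q}K\approx f_X$ is small there, and $\eta_n$ would be chosen to balance the two bounds.
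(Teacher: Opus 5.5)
\textbf{There is a genuine gap, and it sits in the quadratic term.}

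Your treatment of the two linear terms is fine in outline, and in fact more careful than the paper's. The ratio condition gives $f_W(x,\bar z)/f_X^2(x)=f_{Z\vert X}(\bar z\vert x)/f_X(x)\le M$, with $M=\sup\vert f_{Z\vert X}/f_X\vert$. Integrating $\bar z$ out first therefore yields $h_b(x,Z_j)/f_X^2(x)\le M\Vert L\Vert_1$. So the normalized linear stochastic term is a bounded kernel average in $x$, and the linear bias term is at most $M\int\vert\beta_n\vert\,d\bar z$.

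The problem is the quadratic term. You estimate $\mathbb{E}\int\xi_n^2(x,\bar z)\,d\bar z\asymp f_X(x)/(na^qb^p)$ and conclude it is $O_p(\log n/(na^qb^p))$. That is the unnormalized quantity. The lemma needs it divided by $f_X^2(x)$, which gives order $(na^qb^pf_X(x))^{-1}$, and this is unbounded in $x$.

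The ratio condition cannot absorb this factor. Pointwise, $\mathrm{Var}\,\hat f_W(x,\bar z)/f_X^2(x)\approx\Vert K\Vert_2^2\Vert L\Vert_2^2\,f_{Z\vert X}(\bar z\vert x)/(na^qb^pf_X(x))$. Integrating $f_{Z\vert X}(\bar z\vert x)$ over $\bar z$ returns $1$ and leaves the factor $1/f_X(x)$ untouched.

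Your fallback split fails on $\{f_X<\eta_n\}$ for the same reason.
\begin{itemize}
\item There, $\Delta(x)/f_X^2(x)=\int f_{Z\vert X}^2(\bar z\vert x)\,d\bar z\le Mf_X(x)$ is indeed small, but $\hat\Delta_n(x)/f_X^2(x)$ is not.
\item Take a nonnegative compactly supported kernel and a single observation with $\Vert x-X_j\Vert\le a/2$. That observation alone contributes at least $c\,(n^2a^{2q}b^p)^{-1}$ to $\hat\Delta_n(x)$, however small $f_X(x)$ is. Hence $(\hat\Delta_n(x)-\Delta(x))/f_X^2(x)\ge c\,(n^2a^{2q}b^pf_X^2(x))^{-1}-Mf_X(x)$ at such $x$.
\item A kernel average being small in absolute terms does not make it small relative to $f_X^2(x)$.
\item The same objection applies to $\int\beta_n^2\,d\bar z/f_X^2(x)$, because $\mathbb{E}\hat f_W(x,\cdot)$ does not vanish where $f_X(x)$ does.
\end{itemize}

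\textbf{This cannot be repaired, because the statement itself fails as a supremum over all $x$.} The lower bound above shows two things:
\begin{itemize}
\item The normalized error is unbounded wherever observations fall within a bandwidth of points where $f_X$ is arbitrarily small, for instance near a zero of $f_X$ at the edge of its support.
\item It is of order $(a^{2q}b^p)^{-1}$ up to logarithms near the extreme observations of a typical unbounded-support design, where $f_X$ is about $n^{-1}$ up to logarithms.
\end{itemize}

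What your decomposition actually proves is one of the following:
\begin{itemize}
\item the unnormalized rate $\sup_x\vert\hat\Delta_n(x)-\Delta(x)\vert=O_p((\log n/(na^qb^p))^{1/2}+a^{\lambda_1}+b^{\lambda_2})$;
\item the normalized rate on a trimmed set $\{f_X\ge\eta_n\}$, with $\eta_n\to0$ slowly enough. This is exactly the trimming the paper wants to avoid.
\end{itemize}

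The paper's proof does not overcome this either.
\begin{itemize}
\item It asserts $\sup_{(x,z)}\vert\hat f_W(x,z)/f_X(x)-f_{Z\vert X}(z\vert x)\vert=O_p(\cdot)$ by citing Gin\'e--Guillou-type results. Those results bound $\hat f_W-\mathbb{E}\hat f_W$ without the weight $1/f_X$.
\item It then transfers a sup-norm bound in $z$ to integrals over all of $\mathbb{R}^p$. But the first term of its bound, $\sup_x\int\vert\hat f_W/f_X-f_{Z\vert X}\vert^2\,dz$, is precisely your normalized quadratic term.
\end{itemize}
Both arguments break at the same place.
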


\begin{proof}[Proof of Lemma \ref{lemma.delta}]
    The result concerning the uniform convergence of kernel density estimators was established in \cite{gine2002rates} and further extended in \cite{jiang2017uniform} to the case with heterogeneous bandwidths, and, combined with the conclusion on the bias term in Lemma \ref{lemma.condition dist}, which together provide the basis for establishing
    \begin{align*}
        \sup_{w}\left\vert\frac{\hat{f}_W(x,z)}{f_X(x)}-f_{Z\vert X}(x,z)\right\vert = O_p\left(\left(\frac{\log n}{na^qb^p}\right)^{\frac{1}{2}}+a^{\lambda_1}+b^{\lambda_2}\right)
    \end{align*}
    when $\int K^2(x)L^2(z)\,dx\,dz<\infty$ and $\sup_w\vert f_{Z\vert X}(x,z)/f_X(x)\vert<\infty$ are imposed in Assumption \ref{ass.structual}.
    Thus, the upper bound of $(\hat{\Delta}_n(x)-\Delta(x))/f^2_X(x)$ can be controlled by the following expression
    \begin{align*}
        &\sup_{x}\left\vert \frac{1}{f_X^2(x)}\int\left(\hat{f}^2_W(x,z)-f^2_W(x,z)\right)\,dz\right\vert\notag\\
        \leq& \sup_{x}\int\left\vert\frac{\hat{f}_W(x,z)}{f_X(x)}-f_{Z\vert X}(x,z)\right\vert^2\,dz+2\sup_{x}\int\left\vert\frac{\hat{f}_W(x,z)}{f_X(x)}-f_{Z\vert X}(x,z)\right\vert\,dz\notag\\
        =& O_p\left(\left(\frac{\log n}{na^qb^p}\right)^{\frac{1}{2}}+a^{\lambda_1}+b^{\lambda_2}\right).
    \end{align*}
    The above result implies that the integrated functional inherits the same uniform convergence rate as the pointwise kernel density estimator. Specifically, the integration over $z$ does not affect the stochastic order, because the kernel function satisfies the integrability condition in Assumption \ref{ass.structual}, which ensures that
    \begin{align*}
        \int\left\vert\frac{\hat{f}_W(x,z)}{f_X(x)}-f_{Z\vert X}(x,z)\right\vert\,dz
    \end{align*}
    is finite and uniformly bounded in $x$. To see this, note that both $\hat{f}_W(\cdot)$ and $f_W(\cdot)$ are uniformly bounded under the standard kernel regularity conditions, and the kernel’s compact support (or exponential decay in the unbounded case) guarantees that the integral of the absolute deviation over $z$ is dominated by a constant multiple of its supremum norm. Consequently, the desired conclusion follows.
\end{proof}

\begin{lemma}\label{lemma.condition dist}
    Suppose that Assumption \ref{ass.structual} holds, for $\alpha>0$,
    \begin{align*}
        \mathbb{E}\left\vert \frac{d_W(W_i)}{f_X(X_i)}-f_{Z\vert X}(W_i)\right\vert^\alpha = O\left(\max\left\{a^{\alpha\lambda_1},b^{\alpha\lambda_2}\right\}\right).
    \end{align*}
    and 
    \begin{align*}
        \mathbb{E}\left\vert \frac{d^{(2)}_W(W_i)}{f^2_X(X_i)}-f_{Z\vert X}(W_i)\right\vert^\alpha = O\left(\max\left\{a^{\alpha\lambda_1},b^{\alpha\lambda_2}\right\}\right).
    \end{align*}
\end{lemma}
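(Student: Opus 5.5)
The plan is a standard bias-of-kernel-smoother argument: reduce both displays to the Lipschitz (Taylor) remainders of $f_X$ and $f_{Z\vert X}$ from Assumption \ref{ass.structual}, then take moments.

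\textbf{Step 1 (change of variables).} For the first display, write
\begin{align*}
d_W(W_i)=\int K(u)L(v)\,f_X(X_i-au)\,f_{Z\vert X}(Z_i-bv\vert X_i-au)\,du\,dv ,
\end{align*}
using $f_W=f_Xf_{Z\vert X}$ and the substitutions $X_j=X_i-au$, $Z_j=Z_i-bv$. The normalization of $K$ and $L$ is the one implicit in $K_{ij}$, $L_{ij}$, i.e.\ with the factors $a^{-q}$, $b^{-p}$ absorbed.

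\textbf{Step 2 (split the product).} Divide by $f_X(X_i)$ and add and subtract to get
\begin{align*}
\frac{d_W(W_i)}{f_X(X_i)}-f_{Z\vert X}(W_i)=\int K(u)L(v)\Big\{&\frac{f_X(X_i-au)-f_X(X_i)}{f_X(X_i)}f_{Z\vert X}(Z_i-bv\vert X_i-au)\\
&+\big[f_{Z\vert X}(Z_i-bv\vert X_i-au)-f_{Z\vert X}(W_i)\big]\Big\}\,du\,dv .
\end{align*}

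\textbf{Step 3 (expand and kill polynomial terms).} Expand $f_X(X_i-au)-f_X(X_i)$ as $Q$ plus a remainder, and $f_{Z\vert X}$ in both arguments the same way. The homogeneous polynomial terms $Q$ integrate to zero because the kernel has vanishing moments up to order $\max\{l_1,l_2\}+t-1$, which is at least $l_1-1$ and $l_2-1$. The remainders are bounded by $d_{f_X}(X_i)\,a^{\lambda_1}\vert u\vert^{\lambda_1}$ and $d_{f_{Z\vert X}}(\cdot)\,(a^{\lambda_2}\vert u\vert^{\lambda_2}+b^{\lambda_2}\vert v\vert^{\lambda_2})$.

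The cross-term coupling the $f_X$ increment with the shifted conditional density cannot be handled this crudely, since the polynomial part there multiplies a non-constant factor. Instead, re-expand $f_{Z\vert X}(Z_i-bv\vert X_i-au)$ around $W_i$ so that the leading product has an integrand polynomial in $u$ and vanishes; the leftover terms are of higher order in $a$ and $b$. The factor $1/f_X(X_i)$ multiplying $f_{Z\vert X}$ is controlled by $\sup\vert f_{Z\vert X}/f_X\vert<\infty$. The decay condition on $k\in\mathscr{K}_l$ ensures $\int\vert K(u)L(v)\vert(\vert u\vert^{\lambda_1}+\vert u\vert^{\lambda_2}+\vert v\vert^{\lambda_2})\,du\,dv<\infty$.

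\textbf{Step 4 (take moments).} This gives the pointwise bound
\begin{align*}
\left\vert \frac{d_W(W_i)}{f_X(X_i)}-f_{Z\vert X}(W_i)\right\vert\le C\big[a^{\lambda_1}D_1(W_i)+(a^{\lambda_2}+b^{\lambda_2})D_2(W_i)\big],
\end{align*}
where $D_1$ and $D_2$ are built from the Lipschitz coefficients, which are bounded since the classes are $\mathscr{L}^\infty$. Raising to the power $\alpha$ and using $(x+y)^\alpha\le C_\alpha(x^\alpha+y^\alpha)$ yields $O(\max\{a^{\alpha\lambda_1},b^{\alpha\lambda_2}\})$. Here $a^{\alpha\lambda_2}$ is absorbed because $a,b\to0$ at comparable rates, or else it is kept as part of the $\lambda_2$ term.

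\textbf{Step 5 (second display).} Repeat the argument with $K^2$ in place of $K$. The normalization must be such that $d_W^{(2)}/f_X^2$ approximates $f_{Z\vert X}$; this is the step I expect to be the main obstacle. A naive change of variables produces a factor $a^{-q}\int K^2$ and the product $f_X(X_i-au)/f_X(X_i)^2$ rather than $1/f_X$. My first approach is to read $K_{ij}^2$ as the convolution-type kernel implicit in $\hat f_X\hat f_W$ (the $K_{ij}K_{ik}$ structure with $j=k$ coinciding, as in $\psi^{(1)}$). I would then show that its $u$-moments inherit the vanishing-moment property up to the required order, and that the ratio $f_X(X_i-au)/f_X(X_i)$ can again be handled through the bounded Lipschitz coefficient $d_{f_X}/f_X$ (cf.\ the alternative footnote conditions). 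This yields the same rate.
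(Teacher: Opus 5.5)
Your Step 5 cannot be carried out as planned, because the second display is false as written; the obstacle is not one of notation. With the paper's normalization $K_{ij}=a^{-q}K((X_i-X_j)/a)$, $L_{ij}=b^{-p}L((Z_i-Z_j)/b)$ (forced by $\hat f_X(X_i)=(n-1)^{-1}\sum_{j\neq i}K_{ij}$), the change of variables gives
\begin{align*}
\frac{d^{(2)}_W(W_i)}{f_X^2(X_i)}=\frac{a^{-q}}{f_X^2(X_i)}\int K^2(u)L(v)f_W(X_i-au,Z_i-bv)\,du\,dv\approx a^{-q}\Big(\int K^2(u)\,du\Big)\frac{f_{Z\vert X}(W_i)}{f_X(X_i)}.
\end{align*}
This diverges, and the surplus factor $1/f_X(X_i)$ survives any rescaling of the kernel. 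The paper's one-line ``analogous reasoning'' misses this too.

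Your proposed reading does not repair it. The $j=k$ diagonal of $\sum_{j,k}K_{ij}K_{ik}L_{ik}$ is literally $K_{ij}^2L_{ij}$. No convolution kernel appears in its conditional expectation given $W_i$: the shared index $i$ is the one conditioned on, so the off-diagonal terms simply factor as $d_X(X_i)d_W(W_i)$. Moreover $K^2\geq0$ has a strictly positive second moment, so it cannot have vanishing moments of order two or higher. Even after renormalizing it gives at best an $O(a^{\min\{2,\lambda_1\}})$ bias. And no control of $f_X(X_i-au)/f_X(X_i)$ removes either $a^{-q}$ or the surplus $1/f_X(X_i)$.

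What can be proved is one of two things. The first is the off-diagonal statement, which reduces to the first display (plus the analogous bound for $d_X/f_X$) via
\begin{align*}
\frac{d_X(X_i)d_W(W_i)}{f_X^2(X_i)}-f_{Z\vert X}(W_i)=\frac{d_X(X_i)}{f_X(X_i)}\Big[\frac{d_W(W_i)}{f_X(X_i)}-f_{Z\vert X}(W_i)\Big]+\Big[\frac{d_X(X_i)}{f_X(X_i)}-1\Big]f_{Z\vert X}(W_i).
\end{align*}
The second is a renormalized diagonal statement comparing $a^qd^{(2)}_W(W_i)/\{(\int K^2)f_X(X_i)\}$ with $f_{Z\vert X}(W_i)$ at a second-order rate in $a$. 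For the $\psi^{(1)}$ term in Lemma \ref{lemma.Unpsi1}, which enters with weight $1/(n-1)$, only the order $a^{-q}$ of $d^{(2)}_W$ matters, and $na^{2q}\to\infty$ under Assumption \ref{ass.bandwidth}.

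For the first display, your route differs from the paper's. The paper conditions on $X$ and writes the error as $f_X^{-1}(x)\mathbb{E}[K_a(\frac{x-X}{a})\{g_L(X,z)-f_{Z\vert X}(X,z)\}]$, with $g_L(x,z)=\int L(v)f_{Z\vert X}(x,z-bv)\,dv$, plus the $K$-smoothing bias of $f_X(\cdot)f_{Z\vert X}(\cdot,z)$ divided by $f_X(x)$. It handles both pieces through Lemma 5 of \cite{delgado2001significance}, after checking that $g_L$ inherits the Lipschitz property. You instead split off the $f_X$-increment and Taylor-expand the two factors separately. Either split is fine algebraically, but both leave a $1/f_X$ factor to control, and your Step 4 claim that $D_1,D_2$ are bounded does not follow.

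After your re-expansion, the leading piece equals $\{f_{Z\vert X}(W_i)/f_X(X_i)\}[d_X(X_i)-f_X(X_i)]$, and this is indeed handled by $\sup\vert f_{Z\vert X}/f_X\vert$. The leftover, however,
\begin{align*}
\frac{f_X(X_i-au)-f_X(X_i)}{f_X(X_i)}\Big[f_{Z\vert X}(Z_i-bv\vert X_i-au)-f_{Z\vert X}(W_i)\Big]
\end{align*}
has $1/f_X(X_i)$ multiplying an increment of $f_{Z\vert X}$ at a shifted point. That supremum, which pairs numerator and denominator at the same $x$, does not touch it. That this term is of higher order in $a,b$ is beside the point. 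With bounded Lipschitz coefficients the bound you get is a constant times $1/f_X(X_i)$, and $\mathbb{E}f_X^{-\alpha}(X)=\infty$ for $\alpha\geq1$ whenever the support of $X$ has infinite Lebesgue measure.

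Likewise, your remainder bounds hold only for $\vert au\vert<\rho$, since the condition defining $\mathscr{L}^\alpha_\beta$ is local. On the complement you must use the tail decay of $k$, which leaves terms like $f_X^{-1}(X_i)\int_{\vert au\vert\geq\rho}\vert K(u)L(v)\vert f_W(X_i-au,Z_i-bv)\,du\,dv$. These need not be bounded when $X$ is light-tailed and $k$ decays only polynomially.

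Closing the argument needs an extra hypothesis of the type in the paper's footnote (e.g.\ $\sup\vert d_{f_X}d_{f_{Z\vert X}}/f_X\vert<\infty$), together with a compactly supported or sufficiently fast-decaying kernel. The paper's appeal to that lemma does not address the $1/f_X$ factor either.

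Finally, $a^{\alpha\lambda_2}$ cannot be absorbed, since Assumption \ref{ass.bandwidth} does not tie $a$ to $b$. The rate you actually get is $O(a^{\alpha\min\{\lambda_1,\lambda_2\}}+b^{\alpha\lambda_2})$, which is also what the paper's bound for its second term delivers.
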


\begin{proof}[Proof of Lemma \ref{lemma.condition dist}]
    Noting that we can decompose
    \begin{align*}
        &\frac{1}{f_X(x)}\mathbb{E}\left[K_a\left(\frac{x-X}{a}\right)L_b\left(\frac{z-Z}{b}\right)\right]-f_{Z\vert X}(x,z)\\
        =&\frac{1}{f_X(x)}\mathbb{E}\left\{K_a\left(\frac{x-X}{a}\right)\left\{\mathbb{E}\left[L_b\left(\frac{z-Z}{b}\right)\vert X\right]-f_{Z\vert X}(X,z)\right\}\right\}\\
        &+\frac{1}{f_X(x)}\mathbb{E}\left[K_a\left(\frac{x-X}{a}\right)f_{Z\vert X}(X,z)\right]-f_{Z\vert X}(x,z),
    \end{align*}
    it follows that, by applying Minkowski’s inequality, the desired result can be obtained by proving the desired
    \begin{align}\label{lemmaproof.condition dist first term}
        \mathbb{E}\left\vert \frac{1}{f_X(X_1)}\mathbb{E}\left\{K_a\left(\frac{X_1-X_2}{a}\right)\left\{\mathbb{E}\left[L_b\left(\frac{z-Z}{b}\right)\vert X_2\right]-f_{Z\vert X}(X_2,z)\right\}\vert X_1\right\}\right\vert^\alpha
    \end{align}
    and 
    \begin{align}\label{lemmaproof.condition dist second term}
        \mathbb{E}\left\vert \frac{1}{f_X(X_1)}\mathbb{E}\left[K_a\left(\frac{X_1-X_2}{a}\right)f_{Z\vert X}(X_2,z)\vert X_1\right]-f_{Z\vert X}(X_1,z)\right\vert^\alpha
    \end{align}
    are of the required order.
    Following an argument analogous to Lemma $5$ in \cite{delgado2001significance} and invoking the Lipschitz continuity of $f_{Z\vert X}(x,z)$ with respect to $x$, which is imposed in Assumption \ref{ass.structual}, we first obtain the convergence of the second term \eqref{lemmaproof.condition dist second term}. Next, it remains to establish the Lipschitz continuity of
    \begin{align*}
        g_L(x,z) = \mathbb{E}\left[L_b\left(\frac{z-Z}{b}\right)\vert X_2=x\right] = \int L(u)f_{Z\vert X}(x,z-bu)\,du
    \end{align*}
    with respect to $x$. Specifically, once this property is obtained, we can apply all the arguments used in the proof of \eqref{lemmaproof.condition dist second term} to the function $g_L(X_2,z)$. Together with the Lipschitz continuity of $f_{Z\vert X}(x,z)$ and the use of Lemma $5$ of \cite{delgado2001significance}, the desired result in \eqref{lemmaproof.condition dist first term} holds. From the Lipschitz continuity of $f_{Z\vert X}(\cdot)$ and the properties of kernel function imposed in Assumption \ref{ass.structual}, we have
    \begin{align*}
        \left\vert f_{Z\vert X}(x,z_1-bu)-f_{Z\vert X}(x,z_2-bu)-Q(z_1,z_2)\right\vert\leq d_{f_{Z\vert X}}(z_1-bu)\left\vert z_1-z_2\right\vert^{\lambda_2}.
    \end{align*}
    Hence,
    \begin{align*}
        \left\vert g_L(x,z_1-bu)-g_L(x,z_2-bu)-Q(z_1,z_2)\right\vert\leq \left\vert z_1-z_2\right\vert^{\lambda_2}\int d_{f_{Z\vert X}}(z_1-bu)\,du,
    \end{align*}
    and the integrability of $d_{f_{Z\vert X}}(\cdot)$ implies the desired Lipschitz continuity of $g_L(\cdot)$. An analogous reasoning can be applied to prove the conclusion concerning $d^{(2)}_W(\cdot)$.
\end{proof}

\begin{lemma}\label{lemma.bound double kernel}
    Suppose that Assumption \ref{ass.structual} holds, 
    \begin{align*}
        \sup_w\mathbb{E}\left\vert \mathbb{E}\left[\Psi(x,Z_i)L_b\left(\frac{z-Z_i}{b}\right)\right]K_{a}\left(\frac{x-X_k}{a}\right)\right\vert<\infty
    \end{align*}
\end{lemma}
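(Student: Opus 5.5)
The bound follows by factorizing the expression and controlling each factor by a sup-norm of a density times the $L^1$ norm of a kernel. This gives a constant that does not depend on $w=(x^\top,z^\top)^\top$. Throughout, $K_a(\cdot/a)$ and $L_b(\cdot/b)$ denote the rescaled kernels $a^{-q}K(\cdot/a)$ and $b^{-p}L(\cdot/b)$ used in the kernel density estimators of Section \ref{sec.AppendixB}. The inner expectation is over $Z_i$; I treat it conditionally on $X_i=x$, as in its use in the proof of Lemma \ref{lemma.Unphi1}, and the unconditional reading is handled identically with $f_Z$ in place of $f_{Z\vert X}$. Since the inner factor $\mathbb{E}[\Psi(x,Z_i)L_b((z-Z_i)/b)]$ is deterministic given $w$, the quantity equals
\begin{align*}
\left\vert \mathbb{E}\left[\Psi(x,Z_i)L_b\left(\frac{z-Z_i}{b}\right)\right]\right\vert\cdot \mathbb{E}\left\vert K_a\left(\frac{x-X_k}{a}\right)\right\vert ,
\end{align*}
and I bound each factor separately.

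\textbf{Step 1 (boundedness of the densities).} I would first show that $f_X$ and $f_{Z\vert X}$ are uniformly bounded. Assumption \ref{ass.structual} places $f_X\in\mathscr{L}^{\infty}_{\lambda_1}$, which, as noted after that assumption, forces $f_X$ to be bounded. The condition $\sup_{(x,z)}\vert f_{Z\vert X}(z\vert x)/f_X(x)\vert\le C_0$ then gives $f_{Z\vert X}(z\vert x)\le C_0\sup f_X<\infty$ uniformly in $(x,z)$. Integrating against $f_X$ also yields $\sup_z f_Z(z)\le C_0(\sup f_X)^2<\infty$, which covers the unconditional reading of the inner expectation.

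\textbf{Step 2 (the $Z$-factor).} Using $\vert\Psi\vert\le M$ (since $\Psi$ is bounded under \eqref{hyp.local}) and the change of variables $u=(z-\bar z)/b$, I would write
\begin{align*}
\left\vert \mathbb{E}\left[\Psi(x,Z_i)L_b\left(\frac{z-Z_i}{b}\right)\Big\vert X_i=x\right]\right\vert\le M\int \vert L(u)\vert f_{Z\vert X}(z-bu\vert x)\,du\le M\,\sup f_{Z\vert X}\int\vert L(u)\vert\,du .
\end{align*}
Here $\int\vert L\vert=(\int\vert k\vert)^p<\infty$, because the decay $k(u)=O((1+\vert u\vert^{l+1+\eta})^{-1})$ in $\mathscr{K}_l$ makes $k$ integrable.

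\textbf{Step 3 (the $X$-factor).} In the same way,
\begin{align*}
\mathbb{E}\left\vert K_a\left(\frac{x-X_k}{a}\right)\right\vert=\int\vert K(v)\vert f_X(x-av)\,dv\le \sup f_X\left(\int\vert k\vert\right)^q .
\end{align*}
Multiplying the bounds from Steps 2 and 3 gives a constant free of $x$, $z$, $a$ and $b$, which proves the claim.

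The only delicate point is Step 1: the lemma's assumptions do not state the boundedness of $f_{Z\vert X}$ (or $f_Z$) explicitly, and it must be extracted from the ratio condition in Assumption \ref{ass.structual} together with the boundedness of $f_X$. Everything else is the routine change of variables.
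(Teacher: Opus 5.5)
Your proposal is correct and follows the paper's route. The paper also factors the expression into $\sup_z\vert\mathbb{E}[\Psi(x,Z_i)L_b((z-Z_i)/b)]\vert\cdot\sup_x\mathbb{E}\vert K_a((x-X_k)/a)\vert$, but it then only cites Lemmas 2 and 3 of \cite{delgado2001significance} for the two factors, whereas you verify both bounds directly by change of variables using bounded densities, bounded $\Psi$ and integrability of $k$. One harmless slip: integrating the ratio bound gives $\sup_z f_Z(z)\le C_0\int f_X^2\le C_0\sup f_X$, not $C_0(\sup f_X)^2$ (which can fail when $\sup f_X<1$), but only finiteness is needed.
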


\begin{proof}[Proof of Lemma \ref{lemma.bound double kernel}]
    Since the original expression is equivalent to proving
    \begin{align*}
        \sup_z\left\vert \mathbb{E}\left[\Psi(x,Z_i)L_b\left(\frac{z-Z_i}{b}\right)\right]\right\vert\sup_x\mathbb{E}\left\vert K_{a}\left(\frac{x-X_k}{a}\right)\right\vert<\infty,
    \end{align*}
    which follows by applying Lemmas 2 and 3 of \cite{delgado2001significance} to the terms concerning $K(\cdot)$ and $L(\cdot)$, respectively, provided that $\Psi(\cdot)$ satisfies the integrability condition.
\end{proof}

\end{document}